\pdfoutput=1
\documentclass[11pt,a4paper,DIV=11,abstract=true,bibliography=totoc]{scrartcl}

\usepackage[T1]{fontenc}
\usepackage{amsmath}               %
\usepackage{amsthm}                %
\usepackage{XCharter}
\usepackage[xcharter,vvarbb,bigdelims]{newtxmath}
\usepackage[scale=0.94]{sourcesanspro}
\usepackage[final]{microtype}

\usepackage{xcolor}
\definecolor{accent}{HTML}{1F5F8B}   %
\definecolor{softgray}{gray}{0.40}

\addtokomafont{disposition}{\color{accent}}
\setkomafont{paragraph}{\sffamily\bfseries\color{black!80}}
\addtokomafont{captionlabel}{\sffamily\bfseries\color{accent}}
\setkomafont{descriptionlabel}{\sffamily\bfseries}
\setkomafont{author}{\large}
\setkomafont{date}{\normalsize}
\setcapindent{0pt}

\usepackage[automark]{scrlayer-scrpage}
\clearpairofpagestyles
\setkomafont{pageheadfoot}{\small\sffamily\color{softgray}}
\setkomafont{pagenumber}{\small\sffamily\color{softgray}}
\ihead{S.~Aute, F.~Panolan and G.~Philip}
\ohead{VCEW: Kernelization and Generalization}
\usepackage{xspace}
\usepackage{comment}
\usepackage{graphicx}
\usepackage{tikz}
\usetikzlibrary{positioning, shapes.geometric, decorations.pathreplacing,
  calligraphy, calc, backgrounds}

\usepackage{aliascnt}

\newtheoremstyle{accentplain}{\topsep}{\topsep}{\itshape}{}%
  {\sffamily\bfseries\color{accent}}{.}{0.6em}%
  {\thmname{#1}\thmnumber{ #2}\thmnote{ {\mdseries(#3)}}}
\newtheoremstyle{accentdef}{\topsep}{\topsep}{\normalfont}{}%
  {\sffamily\bfseries\color{accent}}{.}{0.6em}%
  {\thmname{#1}\thmnumber{ #2}\thmnote{ {\mdseries(#3)}}}
\newtheoremstyle{claimstyle}{0.5\topsep}{0.5\topsep}{\itshape}{}%
  {\sffamily\bfseries}{.}{0.6em}{}

\newcommand{\sharedthm}[2]{%
  \newaliascnt{#1}{theorem}%
  \newtheorem{#1}[#1]{#2}%
  \aliascntresetthe{#1}%
  \expandafter\providecommand\csname #1autorefname\endcsname{#2}}

\theoremstyle{accentplain}
\newtheorem{theorem}{Theorem}[section]
\sharedthm{lemma}{Lemma}
\sharedthm{corollary}{Corollary}
\sharedthm{observation}{Observation}
\theoremstyle{accentdef}
\sharedthm{definition}{Definition}
\theoremstyle{claimstyle}
\newtheorem*{claim}{Claim}

\newenvironment{claimproof}
  {%
   \begin{proof}[\normalfont\sffamily Proof of the claim]}
  {\end{proof}}

\newcommand{\keywords}[1]{%
  \par\medskip\noindent
  {\sffamily\bfseries\color{accent}Keywords.}\enspace #1\par}

\newcommand{\defproblem}[3]{%
  \par\medskip\noindent
  \fcolorbox{accent}{white}{%
  \begin{minipage}{0.96\textwidth}
    #1\\
    \textsf{\textbf{Input:}} #2\\
    \textsf{\textbf{Question:}} #3
  \end{minipage}}%
  \par\medskip}

\newcommand{\pce}{\textsc{Precoloring Extension}\xspace}
\newcommand{\lc}{\textsc{List Coloring}\xspace}

\newcommand{\onetwothree}{1--2--3~Conjecture\xspace} %
\newcommand{\yes}{\textsf{Yes}\xspace}
\newcommand{\no}{\textsf{No}\xspace}
\newcommand{\XP}{\textsf{XP}\xspace}
\newcommand{\FPT}{\textsf{FPT}\xspace}
\newcommand{\WoneHard}{\textsf{W[1]-hard}\xspace}

\newcommand{\one}{\textsc{Vertex-coloring \{1,2\}-Edge-weighting}\xspace}
\newcommand{\zero}{\textsc{Vertex-coloring \{0,1\}-Edge-weighting}\xspace}

\newcommand{\preot}{\textsc{Vertex-coloring Pre-\{1,2\}-Edge-weighting}\xspace}
\newcommand{\prezo}{\textsc{Vertex-coloring Pre-\{0,1\}-Edge-weighting}\xspace}

\newcommand{\bods}{\textsc{Bounded Offset Distinguishing Subgraph}\xspace}

\newcommand{\vpone}{\textsc{Vertex-Preweighted Vertex-coloring \{1,2\}-Edge-weighting}\xspace}

\usepackage[colorlinks, linkcolor=accent, citecolor=accent,
  urlcolor=accent, bookmarksnumbered, bookmarksopen]{hyperref}
\hypersetup{
  pdftitle={Vertex-Coloring Edge-Weighting: Kernelization and Generalization},
  pdfauthor={Shubhada Aute, Fahad Panolan, Geevarghese Philip}}

\newcommand{\affil}[1]{\\[0.4ex]{\normalsize\normalfont #1}}
\newcommand{\email}[1]{%
  \\[0.2ex]{\small\href{mailto:#1}{\textcolor{softgray}{\texttt{#1}}}}}
\newcommand{\orcid}[1]{%
  \\[0.2ex]{\small\sffamily\href{https://orcid.org/#1}{\textcolor{softgray}{ORCID #1}}}}

\begin{document}

\title{Vertex-Coloring Edge-Weighting:\\
  Kernelization and Generalization}

\author{%
  Shubhada Aute
  \affil{IIT Hyderabad, India}
  \email{cs21resch11001@iith.ac.in}
  \orcid{0009-0000-2964-0368}
  \and
  Fahad Panolan
  \affil{University of Leeds, UK}
  \email{F.Panolan@leeds.ac.uk}
  \orcid{0000-0001-6213-8687}
  \and
  Geevarghese Philip
  \affil{Chennai Mathematical Institute, India}
  \email{gphilip@cmi.ac.in}
  \orcid{0000-0003-0717-7303}}

\date{}

\maketitle
\thispagestyle{empty}

\begin{abstract}
  An edge weighting of a graph induces a coloring of its vertices in which the
  color of a vertex is the total weight of the edges incident with it. Such an
  edge weighting is \emph{proper} if adjacent vertices always receive distinct
  colors. Deciding whether a graph admits a proper weighting is known to be
  \textsf{NP}-complete for the weight set \(\{0,1\}\), and also for \(\{1,2\}\).

  \smallskip In recent work~\cite{autepanolanphilip2026} we showed that both
  problems are \FPT parameterized by the vertex cover number \(k\), but it was
  open---to the best of our knowledge---whether either parameterized problem had
  a polynomial kernel. In this work, we show that both problems have polynomial
  kernels when parameterized by \(k\). We also show that both problems are
  \textsf{W[1]}-hard parameterized by treedepth, answering another question from
  our earlier work.

  \smallskip We then study the \emph{pre-weighted} versions of the two problems,
  in which the weights of some edges are fixed in advance, and the task is to
  extend the assignment to a proper weighting of the whole graph. We show that
  both pre-weighted problems are \FPT parameterized by the vertex cover number
  \(k\). For the \(\{1,2\}\) version the running time is \(2^{O(k \log k)} \cdot
  n\); for the \(\{0,1\}\) version we obtain the same running time when every
  pre-weight is \(1\), and a slower \FPT algorithm in the general case. We also
  show that both pre-weighted problems are \textsf{W[1]}-hard parameterized by
  either of (i) the feedback vertex set number or (ii) the treedepth of the
  input graph.

  \smallskip Since a graph with no pre-assigned weights is a special case, our
  algorithms for the pre-weighted versions solve the two original problems as
  well, in time \(2^{O(k \log k)} \cdot n\), significantly improving on the
  bound of \(2^{O(k^{4})} \cdot n^{O(1)}\) from our earlier work.

  \keywords{Graph coloring; edge weighting; pre-weighted edges;
    parameterized complexity; kernelization; vertex cover; treewidth;
    treedepth; 1--2--3 Conjecture.}
\end{abstract}

\vspace{1ex}
{\small\tableofcontents}
\clearpage

\section{Introduction}
\label{sec:intro}

Let \(G\) be a graph and let \(w : E(G) \to W\) be an assignment of weights to
its edges, drawn from a finite set \(W\) of integers. Such a weighting induces a
coloring of the vertices of \(G\), in which the color of a vertex is the total
weight of the edges incident with it, and the weighting is \emph{proper} if the
coloring it induces is a proper vertex coloring of \(G\)---that is, if adjacent
vertices always receive distinct colors. Which graphs admit proper weightings
from a given weight set is a question with a substantial history. Karo\'nski,
\L{}uczak and Thomason~\cite{karonski2004edge} conjectured that the weights
\(\{1,2,3\}\) suffice for every graph without a component isomorphic to \(K_2\);
this became known as the \onetwothree, and it was recently settled by
Keusch~\cite{keusch2024solution} in the affirmative. For the smaller weight sets
\(\{0,1\}\) and \(\{1,2\}\) no such universal statement holds, and deciding
whether a given graph admits a proper weighting is \textsf{NP}-complete in both
cases~\cite{dudek2011complexity}.

In recent work~\cite{autepanolanphilip2026} we investigated the parameterized
complexity of these two decision problems, \zero and \one, under the structural
parameters vertex cover number, feedback vertex set number, and treewidth. We
showed both problems to be \FPT parameterized by the vertex cover number, gave
an \XP algorithm parameterized by treewidth, and proved \zero to be \WoneHard
parameterized by the feedback vertex set number and \one to be \WoneHard
parameterized by treewidth. Two questions were left open there: whether either
problem admits a polynomial kernel parameterized by the vertex cover number; and
the complexity of \zero parameterized by treedepth. In this work we answer both
questions, and extend our study to a natural generalization of the two problems.

\paragraph{Closing two open questions.} We show first that the two problems
admit polynomial kernels parameterized by the vertex cover number \(k\): \zero
has a kernel with \(O(k^{3})\) vertices (\autoref{thm:kernelVCpoly}), and \one
has a polynomial kernel (\autoref{cor:12-polynomial-kernel}), obtained through a
polynomial compression into a variant with preassigned vertex weights. Both
kernels are based on (i) a marking procedure that retains, for each vertex of
the cover, a bounded number of representatives among its neighbors in the
independent set; and (ii) a structural lemma which guarantees that \emph{some}
proper weighting of a yes-instance uses only small colors.

We also show that both problems are \WoneHard parameterized by treedepth
(\autoref{thm:zero-td}, \autoref{thm:one-treedepth}), thus solving the second
open question for \zero. Our earlier reductions~\cite{autepanolanphilip2026}
for \zero (parameter: feedback vertex set number) have disallowing gadgets
with long paths, thus blowing up the treedepth of the reduced instance. We get
around this barrier using new bounded-depth disallowing gadgets without long
paths. The hardness result for \one was implicitly present in our earlier
work; we now make it explicit.

\paragraph{Pre-weighted edges.} We turn next to a generalization of the two
problems, in which part of the weighting is fixed in advance. An instance now
consists of a graph \(G\), a subset \(E' \subseteq E(G)\), and a
\emph{pre-edge-weighting} \(\hat{w} : E' \to W\), and the question is whether
\(\hat{w}\) can be extended to a proper weighting of the whole of \(G\). We call
the resulting problems \prezo and \preot. Setting \(E' = \emptyset\) recovers
\zero and \one, so each pre-weighted problem is at least as hard as its
unweighted counterpart. Our motivation comes from classical vertex coloring,
whose generalizations \pce and \lc are long-established objects of study.

We show that both pre-weighted problems are \FPT parameterized by the vertex
cover number. For \preot the running time is \(2^{O(k \log k)} \cdot n\)
(\autoref{thm:fptvc:preot}). For \prezo we give two algorithms: one with the
same running time \(2^{O(k \log k)} \cdot n\) for pre-edge-weightings in which
every pre-weight is \(1\) (\autoref{thm:fptvc:prezo}), and one for arbitrary
pre-edge-weightings that is \FPT but with a larger dependence on \(k\)
(\autoref{thm:prezo-general}). We also place both problems in \XP parameterized
by treewidth (\autoref{cor:prezo-xp-tw}, \autoref{cor:preot-xp-tw}).

We obtain these algorithms by first isolating a problem that we call \bods,
defined in \autoref{sec:gendp}, which captures the algorithmic content common to
all four settings. Its two parameters are the width of a given tree
decomposition and a cap \(C\) on the number of edges at any one vertex that may
receive the larger of the two available weights. We show that \bods is \FPT
parameterized by these two together (\autoref{thm:gendp}), and that the bound
improves when the decomposition is a path decomposition
(\autoref{cor:gendp-pw}).

The dynamic programming algorithm for \bods can be seen as a generalization of
the treewidth algorithm from our earlier work~\cite{autepanolanphilip2026},
where the cap serves to make the generalized form \FPT{}. Each of the four
problems reduces to \bods, and a structural lemma then bounds the cap \(C\) by a
function of the vertex cover number alone. Since a vertex cover of size \(k\)
yields a path decomposition of width \(k\), the \FPT results follow; taking
\(C\) as large as the maximum degree instead gives the \XP results.

We complement these algorithms with hardness results for various parameters
below the vertex cover number. We show that both pre-weighted problems are
\WoneHard when parameterized by the feedback vertex set number
(\autoref{cor:prezo-hard}, \autoref{thm:preot-fvs}) and---separately---when
parameterized by treedepth (\autoref{cor:prezo-hard},
\autoref{thm:preot-td}). These two parameters are incomparable, so neither
result implies the other; each of them, however, implies \WoneHard ness
parameterized by treewidth, by \eqref{eq:param-relations}.

For treedepth, and for the feedback vertex set number in the \(\{0,1\}\) case,
the \emph{pre-weighted} problems inherit hardness from their unweighted special
cases. We prove \zero and \one to be \WoneHard parameterized by treedepth
(\autoref{thm:zero-td}, \autoref{thm:one-treedepth}); we proved the hardness of
\zero for the feedback vertex set number in our earlier
work~\cite{autepanolanphilip2026}. The pre-weighted problems do \emph{not}
inherit hardness from the unweighted versions for \preot and the feedback vertex
set number. The complexity of \one for this parameter remains open, so there is
nothing to inherit. To establish hardness for \preot we give a direct reduction
from \lc (\autoref{thm:preot-fvs}) that uses pre-weighted edges to fix the
colors of its gadget vertices.

\paragraph{Faster algorithms for the original problems.} Since a graph with no
pre-weighted edges is a special case, the algorithms for the pre-weighted
problems solve \zero and \one as well, in time \(2^{O(k \log k)} \cdot n\)
(\autoref{thm:fptvc:zero}, \autoref{thm:fptvc:one}). This improves on the bounds
of \(2^{O(k^{4})} \cdot n^{O(1)}\) from our earlier
work~\cite{autepanolanphilip2026}.
\begin{table}[t]
\centering
\caption{The parameterized complexity of the four problems. A row labeled
  \(\{a,b\}\) refers to the problem of finding a proper
  \(\{a,b\}\)-edge-weighting, and one labeled Pre-\(\{a,b\}\) to its
  pre-weighted variant. Entries in \textbf{bold} are first established in this
  paper; the remaining entries are from our earlier
  work~\cite{autepanolanphilip2026} or are its direct consequences. Here \(C\)
  denotes the cap on the number of free edges at a vertex that receive the
  larger weight. An entry marked \(\dagger\) was already known and is improved
  here: the \FPT running times drop from \(2^{O(k^{4})} \cdot n^{O(1)}\) to
  \(2^{O(k \log k)} \cdot n\), and the \XP bound from \((\Delta+1)^{4(\mathsf{tw}+1)}\) to \((\Delta+1)^{3(\mathsf{tw}+1)}\). The entry marked \(\ddagger\)
  runs in time \(2^{O(k \log k)} \cdot n\) when every pre-weight is \(1\), and
  in time \(2^{O(5^{k} \log (k+2))} \cdot n^{O(1)}\) in general.}
\label{tab:results}
\setlength{\tabcolsep}{4pt}
\footnotesize
\begin{tabular}{@{}lllllll@{}}
\hline\noalign{\smallskip}
Weight set & vc & fvs & td & tw & tw \(+\;C\) & poly.\ kernel \\
\noalign{\smallskip}\hline\noalign{\smallskip}
\(\{0,1\}\)
       & \FPT\(^{\dagger}\)
       & \WoneHard
       & \textbf{\WoneHard}
       & \XP\(^{\dagger}\), \WoneHard
       & \textbf{\FPT}
       & \textbf{yes} \\
Pre-\(\{0,1\}\)
       & \textbf{\FPT}\(^{\ddagger}\)
       & \WoneHard
       & \textbf{\WoneHard}
       & \textbf{\XP}, \WoneHard
       & \textbf{\FPT}
       & open \\
\(\{1,2\}\)
       & \FPT\(^{\dagger}\)
       & open
       & \textbf{\WoneHard}
       & \XP\(^{\dagger}\), \WoneHard
       & \textbf{\FPT}
       & \textbf{yes} \\
Pre-\(\{1,2\}\)
       & \textbf{\FPT}
       & \textbf{\WoneHard}
       & \textbf{\WoneHard}
       & \textbf{\XP}, \WoneHard
       & \textbf{\FPT}
       & open \\
\noalign{\smallskip}\hline
\end{tabular}
\end{table}

\paragraph{Organization.} \autoref{sec:prelims} fixes notation, recalls the
definitions we need, and defines \lc, the source of all our reductions.
\autoref{sec:zero} and \autoref{sec:one} treat the two unweighted problems, each
with its structural lemma, its polynomial kernel, and its hardness for
treedepth. \autoref{sec:gendp} introduces \bods and gives the dynamic
programming algorithm on which nearly all our positive results rest.
\autoref{sec:prezo} and \autoref{sec:preot} treat the two pre-weighted problems:
their definitions, the bounds on the number of edges needing the larger weight,
the second algorithm for \prezo (\autoref{sec:prezo:general}), the \XP
algorithms for treewidth, and the hardness results for the feedback vertex set
number and for treedepth. \autoref{sec:fptvc} collects the four algorithms
parameterized by the vertex cover number that follow from
\autoref{cor:gendp-pw}. The four reductions to \bods, one per problem, are
stated together in \autoref{sec:fptvc} because they follow one pattern. The \XP
results in \autoref{sec:prezo:xp} and \autoref{sec:preot:xp} use two of these
reductions in advance of that section. We summarize our work in
\autoref{sec:conclusion} and pose the questions we leave open.

\section{Preliminaries}
\label{sec:prelims}

We write \(\mathbb{Z}_{\geq 0}\) for the set of non-negative integers, and
\([p]\) for the set \(\{1, 2, \ldots, p\}\).

\subsection{Graphs}

All graphs in this work are finite, simple, and undirected. For a graph \(G\) we
write \(V(G)\) and \(E(G)\) for its vertex set and its edge set, respectively,
and we set \(n := |V(G)|\) and \(m := |E(G)|\). We abbreviate an edge \(\{u,
v\}\) to \(uv\). For a vertex \(v \in V(G)\) we write \(N_G(v)\) for the set of
neighbors of \(v\), \(E_G(v)\) for the set of edges of \(G\) that are incident
with \(v\), and \(d_G(v) := |N_G(v)| = |E_G(v)|\) for the degree of \(v\). We
write \(\Delta(G)\) for the maximum degree of \(G\). We drop the subscript
\(G\), and the argument of \(\Delta\), whenever the graph is clear from the
context.

For a subset \(X \subseteq V(G)\) we write \(G[X]\) for the subgraph of \(G\)
induced by \(X\), and for a subset \(F \subseteq E(G)\) we write \(G - F\) for
the graph \((V(G), E(G) \setminus F)\). A set \(S \subseteq V(G)\) is a
\emph{vertex cover} of \(G\) if every edge of \(G\) has at least one endpoint in
\(S\); equivalently, if \(V(G) \setminus S\) is an independent set in \(G\). The
\emph{vertex cover number} of \(G\) is the size of a smallest vertex cover of
\(G\). A set \(S \subseteq V(G)\) is a \emph{feedback vertex set} of \(G\) if
\(G - S\) is acyclic, and the \emph{feedback vertex set number} of \(G\) is the
size of a smallest feedback vertex set of \(G\).

\subsection{Edge weightings and the colors they induce}

Let \(G\) be a graph and let \(W \subseteq \mathbb{Z}\) be a finite set of
weights. A \emph{\(W\)-edge-weighting} of \(G\) is a function \(w : E(G) \to
W\). The \emph{color} that \(w\) induces on a vertex \(v \in V(G)\) is the total
weight of the edges incident with \(v\):
\[
  \mathsf{color}_{w}(v) := \sum_{e \in E_G(v)} w(e).
\]
The weighting \(w\) is \emph{proper} if \(\mathsf{color}_{w}(u) \neq \mathsf{color}_{w}(v)\) holds for every edge \(uv \in E(G)\); that is, if the function
\(\mathsf{color}_{w}\) is a proper vertex coloring of \(G\). The two weight sets
that concern us are \(\{0, 1\}\) and \(\{1, 2\}\), which give rise to the
following two problems.

\defproblem{\zero}%
{An undirected graph \(G = (V, E)\) on \(n\) vertices.}%
{Does \(G\) admit a proper weighting \(w : E(G) \to \{0, 1\}\)?}

\defproblem{\one}%
{An undirected graph \(G = (V, E)\) on \(n\) vertices.}%
{Does \(G\) admit a proper weighting \(w : E(G) \to \{1, 2\}\)?}

Both problems are \textsf{NP}-complete~\cite{dudek2011complexity}.

We record two elementary facts that we use throughout, one for each weight set.
The first says that for the weight set \(\{0, 1\}\) the color of a vertex counts
the edges of weight \(1\) incident with it, and the second says that for the
weight set \(\{1, 2\}\) the color counts, over and above the degree, the edges
of weight \(2\) incident with it. In both cases the color of a vertex is
determined by the number of incident edges that receive the \emph{larger} of the
two weights; this common shape is what our algorithms exploit.

\begin{observation}
\label{obs:color-counts}
Let \(G\) be a graph and let \(v \in V(G)\).
\begin{enumerate}
    \item For every weighting \(w : E(G) \to \{0, 1\}\), the number of edges of
          \(E_G(v)\) that receive weight \(1\) under \(w\) is \(\mathsf{color}_{w}(v)\). In particular, \(0 \leq \mathsf{color}_{w}(v) \leq
          d_G(v)\).
    \item For every weighting \(w : E(G) \to \{1, 2\}\), the number of edges of
          \(E_G(v)\) that receive weight \(2\) under \(w\) is \(\mathsf{color}_{w}(v) - d_G(v)\). In particular, \(d_G(v) \leq \mathsf{color}_{w}(v) \leq 2 d_G(v)\).
\end{enumerate}
\end{observation}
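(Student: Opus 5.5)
The plan is to unfold the definition of \(\mathsf{color}_{w}(v)\) as a sum over \(E_G(v)\), split that sum according to the weight each edge receives, and then use \(|E_G(v)| = d_G(v)\). Fix \(v\) and a weighting \(w\). Let \(A := \{e \in E_G(v) : w(e) \text{ is the larger weight}\}\) and \(B := E_G(v) \setminus A\). Because \(w\) takes only two values, \(A\) and \(B\) partition \(E_G(v)\), and so \(|A| + |B| = d_G(v)\).

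For part~1 the weights are \(0\) and \(1\). The sum becomes \(\mathsf{color}_{w}(v) = 1\cdot|A| + 0\cdot|B| = |A|\), so the color is exactly the number of weight-\(1\) edges at \(v\). Since \(A \subseteq E_G(v)\), this gives \(0 \le |A| \le d_G(v)\), which is the claimed bound.

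For part~2 the weights are \(1\) and \(2\). The sum becomes \(\mathsf{color}_{w}(v) = 2|A| + |B| = |A| + (|A|+|B|) = |A| + d_G(v)\). Rearranging gives \(|A| = \mathsf{color}_{w}(v) - d_G(v)\), so the color minus the degree counts the weight-\(2\) edges. Combining this with \(0 \le |A| \le d_G(v)\) yields \(d_G(v) \le \mathsf{color}_{w}(v) \le 2d_G(v)\).

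There is no real obstacle here. The only point needing care is that the two weight classes partition \(E_G(v)\), so that their sizes add up to the degree; this holds because \(w\) maps into a two-element set.
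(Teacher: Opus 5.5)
Your proof is correct and takes the same approach as the paper: you split the sum defining \(\mathsf{color}_{w}(v)\) by weight class and use that the two classes partition \(E_G(v)\), which has \(d_G(v)\) edges. The paper words part~2 as ``each edge contributes one unit, plus one more if it has weight \(2\)'', which is exactly your identity \(2|A| + |B| = |A| + d_G(v)\).
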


\begin{proof}
Every edge of \(E_G(v)\) contributes its own weight to \(\mathsf{color}_{w}(v)\).
For the first part, the edges of weight \(0\) contribute nothing and each edge
of weight \(1\) contributes one unit. For the second part, every edge of
\(E_G(v)\) contributes at least one unit, for a total of \(d_G(v)\), and each
edge of weight \(2\) contributes one further unit.
\end{proof}

\subsection{Pre-edge-weightings}
\label{sec:prelims:preweighting}

In the pre-weighted variants of these problems the weights of some of the edges
are fixed in advance, and the task is to weight the remaining edges so that the
resulting weighting of the whole graph is proper.

Let \(G\) be a graph, let \(W\) be a finite set of weights, let \(E' \subseteq
E(G)\), and let \(\hat{w} : E' \to W\) be a \emph{pre-edge-weighting} of \(G\).
A weighting \(w : E(G) \to W\) \emph{extends} \(\hat{w}\) if \(w|_{E'} =
\hat{w}\), that is, if \(w(e) = \hat{w}(e)\) holds for every \(e \in E'\). An
edge of \(G\) is \emph{free} with respect to \(\hat{w}\) if it does not lie in
\(E'\), and we write
\[
  E_{\mathrm{free}} := E(G) \setminus E'
  \qquad\text{and}\qquad
  E_{\mathrm{free}}(v) := E_G(v) \setminus E'
\]
for the set of free edges of \(G\) and for the set of free edges incident with a
vertex \(v\), respectively. Thus a weighting extends \(\hat{w}\) precisely if it
agrees with \(\hat{w}\) outside \(E_{\mathrm{free}}\), and the free edges are
exactly those whose weights an algorithm is at liberty to choose. Taking \(E' =
\emptyset\) recovers the setting of the previous subsection, so each of \zero
and \one is the special case of its pre-weighted variant in which no edge is
pre-weighted.

\subsection{Parameterized complexity}

We follow the terminology of Cygan et al.~\cite{cygan2015parameterized}. A
\emph{parameterized problem} is a language \(L \subseteq \Sigma^{*} \times
\mathbb{N}\), where \(\Sigma\) is a fixed finite alphabet; for an instance \((x,
k) \in \Sigma^{*} \times \mathbb{N}\) we call \(k\) the \emph{parameter}. A
parameterized problem \(L\) is \emph{fixed-parameter tractable} (\FPT) if there
is an algorithm that decides whether \((x, k) \in L\) in time \(f(k) \cdot
|x|^{O(1)}\) for some computable function \(f\), and \emph{slice-wise
polynomial} (\XP) if there is an algorithm that does so in time \(|x|^{f(k)}\).
We write \(O^{*}(\cdot)\) for the \(O(\cdot)\) notation with factors polynomial
in the input size suppressed. Showing a parameterized problem to be \WoneHard
rules out, under the standard assumption \(\FPT \neq \textsf{W[1]}\), the
existence of an \FPT algorithm for it.

A \emph{kernelization} for a parameterized problem \(L\) is a polynomial-time
algorithm that maps an instance \((x, k)\) of \(L\) to an instance \((x', k')\)
of \(L\) such that \((x, k) \in L\) if and only if \((x', k') \in L\), and such
that \(|x'| + k' \leq g(k)\) for some computable function \(g\), called the
\emph{size} of the kernel. If \(g\) is a polynomial then we speak of a
\emph{polynomial kernel}. A \emph{polynomial compression} of \(L\) into a---not
necessarily parameterized---problem \(Q\) is defined in the same way, except
that the output is an instance of \(Q\) and its size is required to be
polynomial in \(k\).

Several of our algorithms take a vertex cover as part of the input rather than
computing one. This is for convenience only: a minimum vertex cover of size
\(k\) can be found in time \(O^{*}(1.25284^{k})\)~\cite{harris2024faster}, and a
vertex cover of size at most \(2k\) can be found in linear time by taking both
endpoints of every edge of a maximal matching.

\subsection{Tree decompositions and path decompositions}

\begin{definition}
\label{def:treedecomp}
A \emph{tree decomposition} of a graph \(G\) is a pair \(\mathcal{T} = (T,
\{B_t\}_{t \in V(T)})\), where \(T\) is a tree and \(B_t \subseteq V(G)\) is a
\emph{bag} for every \(t \in V(T)\), such that:
\begin{enumerate}
    \item \(\bigcup_{t \in V(T)} B_t = V(G)\);
    \item for every edge \(uv \in E(G)\) there is a node \(t \in V(T)\) with
          \(u, v \in B_t\); and
    \item for every vertex \(v \in V(G)\), the set \(\{t \in V(T) : v \in
          B_t\}\) induces a connected subtree of \(T\).
\end{enumerate}
The \emph{width} of \(\mathcal{T}\) is \(\max_{t \in V(T)} |B_t| - 1\), and the
\emph{treewidth} \(\mathsf{tw}(G)\) of \(G\) is the smallest width of a tree
decomposition of \(G\).
\end{definition}

A \emph{path decomposition} of \(G\) is a tree decomposition \((T, \{B_t\}_{t
\in V(T)})\) in which \(T\) is a path; we then write the decomposition as a
sequence \(\langle B_1, B_2, \ldots, B_q \rangle\) of bags. The \emph{pathwidth}
\(\mathsf{pw}(G)\) of \(G\) is the smallest width of a path decomposition of \(G\).
Every path decomposition is a tree decomposition, so \(\mathsf{tw}(G) \leq \mathsf{pw}(G)\) holds for every graph \(G\).

Our algorithms operate on decompositions in the following normalized form, in
which every edge of \(G\) is accounted for at exactly one node.

\begin{definition}
\label{def:nicetreedecomp}
A tree decomposition \(\mathcal{T} = (T, \{B_t\}_{t \in V(T)})\) of \(G\) is
\emph{nice} if \(T\) is rooted at a node \(r\) with \(B_r = \emptyset\), every
leaf \(t\) of \(T\) has \(B_t = \emptyset\), and every non-leaf node of \(T\) is
of one of the following four kinds.
\begin{itemize}
    \item An \emph{introduce vertex node} \(t\) has exactly one child \(t'\),
          and \(B_t = B_{t'} \cup \{v\}\) for some vertex \(v \notin B_{t'}\);
          we say that \(t\) introduces \(v\).
    \item An \emph{introduce edge node} \(t\) has exactly one child \(t'\) with
          \(B_t = B_{t'}\), and is labeled with an edge \(uv \in E(G)\) such
          that \(u, v \in B_t\); we say that \(t\) introduces \(uv\). Every edge
          of \(G\) is introduced at exactly one node of \(T\).
    \item A \emph{forget node} \(t\) has exactly one child \(t'\), and \(B_t =
          B_{t'} \setminus \{v\}\) for some vertex \(v \in B_{t'}\); we say that
          \(t\) forgets \(v\).
    \item A \emph{join node} \(t\) has exactly two children \(t_1\) and \(t_2\),
          and \(B_t = B_{t_1} = B_{t_2}\).
\end{itemize}
A path decomposition is \emph{nice} if it satisfies the same conditions. Since
no node of a path has two children, a nice path decomposition has no join nodes,
and consists of leaf, introduce vertex, introduce edge, and forget nodes only.
\end{definition}

For a node \(t\) of a nice tree decomposition \(\mathcal{T}\) of \(G\) we write
\(V_t\) for the union of the bags of the nodes in the subtree of \(T\) rooted at
\(t\), we write \(E_t\) for the set of edges of \(G\) that are introduced at the
nodes in that subtree, and we set \(G_t := (V_t, E_t)\). Thus \(G_r = G\) for
the root \(r\), and the children \(t_1, t_2\) of a join node \(t\) satisfy
\(E(G_t) = E(G_{t_1}) \cup E(G_{t_2})\) and \(E(G_{t_1}) \cap E(G_{t_2}) =
\emptyset\).

\begin{lemma}[\cite{cygan2015parameterized}]
\label{lem:makenice}
Given a tree decomposition \(\mathcal{T} = (T, \{B_t\})\) of a graph \(G\) of
width \(\ell\), a nice tree decomposition of \(G\) of width \(\ell\) and with
\(O(\ell \cdot n)\) nodes can be computed in time \(O(\ell^{2} \cdot
\max(|V(T)|, n))\). The same holds with ``tree'' replaced by ``path''
throughout.
\end{lemma}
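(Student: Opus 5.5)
This statement is cited from Cygan et al., so I will follow the standard construction from that source. Assume we are given \(\mathcal{T} = (T, \{B_t\})\) of width \(\ell\).

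\textbf{Step 1: prune the decomposition.} Root \(T\) at an arbitrary node. Repeatedly contract any edge \(tt'\) of \(T\) with \(B_{t'} \subseteq B_t\), merging \(t'\) into \(t\). This keeps the decomposition valid and leaves a decomposition in which no bag is contained in a neighbouring bag. Such a decomposition has at most \(n\) nodes, since each non-root node \(t\) contains a vertex absent from its parent's bag, and the connectivity property makes that vertex's subtree "top" at \(t\). This costs \(O(\ell \cdot |V(T)|)\) time with bags stored as sorted lists.

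\textbf{Step 2: binarize and interpolate.} Replace each node with \(d > 2\) children by a binary tree of \(d-1\) copies of its bag. Each copy is a join node whose two children carry the same bag, so the result has \(O(n)\) nodes. For every tree edge from a parent bag \(B\) down to a child bag \(B'\), insert a path that first forgets the vertices of \(B' \setminus B\) one at a time and then introduces the vertices of \(B \setminus B'\) one at a time. Each such path has length at most \(2(\ell+1)\). Above the root, add forget nodes down to an empty root bag; below each leaf, add introduce nodes starting from an empty leaf bag. All of this gives \(O(\ell \cdot n)\) nodes, and writing out the bags takes \(O(\ell^2 \cdot n)\) time.

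\textbf{Step 3: place the edges.} For each edge \(uv\), one of \(u, v\), say \(u\), is forgotten at a node at least as high as the node where the other is forgotten. Just below the forget node of \(v\), both endpoints are in the bag. Insert a single introduce-edge node there. If the edge-to-node lookup is done through the forget node of \(v\), the total is \(m \le (\ell+1) n\) extra nodes, since an edge-less structure with treewidth \(\ell\) has at most \(\ell n\) edges. The node count therefore remains \(O(\ell n)\).

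For paths, the same procedure applies with no binarization step.

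\textbf{Main obstacle.} The real issue is the bookkeeping that keeps the time bound at \(O(\ell^2 \max(|V(T)|, n))\). In particular, the contraction in Step 1 must only compare adjacent bags. Each comparison costs \(O(\ell)\), and each one is charged to a tree edge.
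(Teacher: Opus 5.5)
Your proposal is correct and is the standard construction behind the cited lemma; the paper gives no proof of its own and only cites Cygan et al. Your steps match that construction: contract nested adjacent bags to get at most \(n\) nodes, binarize, interpolate forget/introduce chains, and place each edge \(uv\) just below the forget node of whichever endpoint is forgotten lower, using \(m \le \ell n\). Two small fixes are needed. First, for the path version you must root \(T\) at an endpoint, since an internal root would become a join node. Second, the \(O(\ell\cdot|V(T)|)\) bound for Step~1 is cleanest if you contract only edges whose child bag lies in the parent bag: the node-count argument needs nothing more, and contracting such an edge never changes the status of any other edge, so one pass over the original tree edges suffices.
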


All the algorithms in this paper take a nice decomposition of the input graph as
part of the input. For the treewidth parameterization this is essentially
without loss of generality: a tree decomposition of \(G\) of width at most
\(\mathsf{tw}(G)\) can be computed in time \(2^{O(\mathsf{tw}(G)^{2})} \cdot
n^{O(1)}\)~\cite{korhonen2023improved} and then made nice using
\autoref{lem:makenice}. For the vertex cover parameterization we need no such
machinery: a vertex cover of size \(k\) yields a nice path decomposition of
width \(k\) in linear time, as we now record. This is the only bridge we need
between the two parameters, and we use it in every one of our algorithms
parameterized by the vertex cover number.

\begin{lemma}
\label{lem:vc-to-pathdecomp}
Let \(G\) be a graph and let \(S\) be a vertex cover of \(G\) with \(|S| = k\).
Then \(G\) has a path decomposition of width at most \(k\); in particular,
\(\mathsf{pw}(G) \leq k\). Moreover, given \(G\) and \(S\), a nice path
decomposition of \(G\) of width at most \(k\) and with \(O(k \cdot n)\) nodes
can be computed in time \(O(k \cdot n)\).
\end{lemma}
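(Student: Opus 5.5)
Let \(I := V(G) \setminus S\) be the independent set and enumerate it as \(v_1, v_2, \ldots, v_{n-k}\). The path decomposition I will use has bags \(B_i := S \cup \{v_i\}\) for \(i \in [n-k]\); if \(I = \emptyset\), I use the single bag \(S\). Each bag has at most \(k+1\) vertices, so the width is at most \(k\). The three conditions of \autoref{def:treedecomp} are checked as follows:
\begin{enumerate}
  \item Every vertex lies in some bag: vertices of \(S\) lie in every bag, and each \(v_i\) lies in \(B_i\).
  \item Every edge has both endpoints in some bag. Since \(I\) is independent, every edge lies inside \(S\) or is of the form \(s v_i\) with \(s \in S\), and in either case \(B_i\) (or any bag) contains both endpoints.
  \item The bags containing a given vertex are contiguous. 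A vertex of \(S\) lies in all bags, and \(v_i\) lies only in \(B_i\).
\end{enumerate}
This gives \(\mathsf{pw}(G) \leq k\).

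For the algorithmic part, I will not invoke \autoref{lem:makenice}: its running time \(O(\ell^2 \cdot \max(|V(T)|,n))\) would give \(O(k^2 n)\), which is too slow. Instead I build the nice path decomposition directly, in the following order.
\begin{enumerate}
  \item Start from the empty leaf bag.
  \item Introduce the vertices of \(S\) one at a time (\(k\) nodes).
  \item Introduce each edge of \(G[S]\).
  \item For each \(i\) in turn: introduce \(v_i\), introduce each edge \(s v_i\) with \(s \in N(v_i)\), then forget \(v_i\).
  \item Finally, forget the vertices of \(S\) one at a time, ending at an empty root bag.
\end{enumerate}
Every edge is introduced exactly once: edges of \(G[S]\) in step 3, and edges \(s v_i\) in step 4 while \(v_i\) is present.

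The main obstacle is the node count. The number of edges can be as large as about \(k^2/2 + k(n-k)\), which is \(O(k n)\) since \(k \leq n\). Hence the total number of nodes is \(O(k + k^2 + \sum_i (d(v_i)+2)) = O(kn)\). To stay within \(O(kn)\) time, each node must be produced in constant amortized time. I will therefore represent bags implicitly: store each node as its type plus the vertex or edge it introduces or forgets, rather than copying a full bag of size up to \(k+1\). The edges inside \(S\) can be listed in \(O(k^2) \subseteq O(kn)\) time using a membership array for \(S\), and the edges at each \(v_i\) are read off its adjacency list. If explicit bags are demanded, I would argue that the bags of consecutive nodes differ by at most one vertex, so that this implicit encoding is the standard output format; this is the point where I would need to match the paper's convention.
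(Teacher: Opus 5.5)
Your proposal is correct and follows the paper's proof essentially step for step. It uses the same bags \(S \cup \{v_i\}\) with the same check of the three conditions, and it builds the nice path decomposition directly, in the same order as the paper: introduce \(S\), then the edges of \(G[S]\), then for each \(v_i\) introduce it, its at most \(k\) edges, and forget it, and finally forget \(S\). Your remark about storing bags implicitly so that the \(O(k \cdot n)\) time bound genuinely holds is extra care that the paper leaves tacit.
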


\begin{proof}
Let \(I := V(G) \setminus S\). If \(I = \emptyset\), then \(\langle V(G)
\rangle\) is a path decomposition of \(G\) of width \(|V(G)| - 1 \leq k - 1\),
and we are done. Otherwise write \(I = \{v_1, v_2, \ldots, v_p\}\) with \(p \geq
1\), and set \(B_i := S \cup \{v_i\}\) for every \(i \in [p]\). We claim that
\(\mathcal{P} := \langle B_1, B_2, \ldots, B_p \rangle\) is a path decomposition
of \(G\).

Every vertex of \(S\) lies in every bag, and every vertex \(v_i\) of \(I\) lies
in the bag \(B_i\); this gives the first condition of \autoref{def:treedecomp}.
For the second condition, let \(uv \in E(G)\). Since \(S\) is a vertex cover of
\(G\), at least one endpoint of \(uv\) lies in \(S\), and since \(I\) is
independent, at most one endpoint of \(uv\) lies in \(I\). If both endpoints lie
in \(S\), then both lie in \(B_1\); and if exactly one endpoint lies in \(I\),
say \(v = v_i\), then \(u \in S\), and both endpoints lie in \(B_i\). For the
third condition, a vertex of \(S\) lies in all \(p\) bags of \(\mathcal{P}\),
which form a connected subpath, and a vertex \(v_i\) of \(I\) lies in the single
bag \(B_i\). Each bag has size \(|S| + 1 = k + 1\), so \(\mathcal{P}\) has width
\(k\), and hence \(\mathsf{pw}(G) \leq k\).

We construct the nice path decomposition directly, rather than by invoking
\autoref{lem:makenice}, which is what gives the stated bound on the number of
nodes. Start at a leaf node with an empty bag; introduce the \(k\) vertices of
\(S\) one at a time; and then introduce each of the at most \(\binom{k}{2}\)
edges of \(G[S]\) exactly once. Next, for each \(i \in [p]\) in turn, introduce
the vertex \(v_i\), introduce each of the at most \(k\) edges of \(E_G(v_i)\)
exactly once---all of these edges have their other endpoint in \(S\), and hence
in the current bag---and then forget \(v_i\). Finally, forget the \(k\) vertices
of \(S\) one at a time, ending at a node with an empty bag.

Every edge of \(G\) is introduced exactly once by this construction, because
every edge of \(G\) either lies in \(G[S]\) or has exactly one endpoint in
\(I\). Every bag that arises is a subset of \(S \cup \{v_i\}\) for some \(i \in
[p]\), so the width is at most \(k\). The construction uses \(O(k^{2})\) nodes
for \(S\) and \(O(k)\) nodes for each vertex of \(I\), for a total of \(O(k
\cdot n)\) nodes, and it runs in time \(O(k \cdot n)\).
\end{proof}

\subsection{Treedepth}

Several of our hardness results are for the parameter treedepth, which we now
recall. A \emph{rooted forest} is a disjoint union of rooted trees; its
\emph{height} is the largest number of vertices on a root-to-leaf path. The
\emph{closure} of a rooted forest \(F\) is the graph on \(V(F)\) in which two
vertices are adjacent if and only if one is an ancestor of the other in \(F\).
The \emph{treedepth} \(\mathsf{td}(G)\) of a graph \(G\) is the least height of a
rooted forest whose closure contains \(G\) as a subgraph; such a forest is an
\emph{elimination forest} of \(G\). Thus an elimination forest of \(G\) is a
rooted forest on \(V(G)\) in which the endpoints of every edge of \(G\) are
comparable in the ancestor order, and \(\mathsf{td}(G)\) is the least height of
one.

We use the following standard relations between the three parameters. For every
graph \(G\),
\begin{equation}
\label{eq:param-relations}
  \mathsf{tw}(G) \leq \mathsf{td}(G) - 1
  \qquad\text{and}\qquad
  \mathsf{tw}(G) \leq \mathsf{fvs}(G) + 1 ,
\end{equation}
where \(\mathsf{fvs}(G)\) is the feedback vertex set number of
\(G\)~\cite{cygan2015parameterized}. The parameters \(\mathsf{td}\) and \(\mathsf{fvs}\) are incomparable: a long path has \(\mathsf{fvs} = 0\) and unbounded
treedepth, while a disjoint union of triangles has \(\mathsf{td} = 3\) and
unbounded feedback vertex set number. Consequently, \WoneHard ness for either of
them implies \WoneHard ness for treewidth, but neither implies the other. Both,
in turn, are bounded by the vertex cover number.

\begin{lemma}
\label{lem:vc-to-td}
Let \(G\) be a graph and let \(S\) be a vertex cover of \(G\) with \(|S| = k\).
Then \(G\) has an elimination forest of height at most \(k+1\) in which every
vertex of \(S\) is an ancestor of every vertex of \(V(G) \setminus S\). In
particular \(\mathsf{td}(G) \leq k+1\) and \(\mathsf{fvs}(G) \leq k\).
\end{lemma}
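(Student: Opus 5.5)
The plan is to build the elimination forest explicitly as a single path through the cover with leaves hanging off its bottom vertex, and then read off both bounds. Write \(S = \{s_1, \ldots, s_k\}\) and \(I := V(G) \setminus S\). If \(k = 0\) then \(G\) has no edges, and the trivial forest in which every vertex is its own root has height \(1 = k+1\); every edge condition holds vacuously, and the ancestor condition is vacuous too. Otherwise let \(F\) be the rooted tree in which \(s_1\) is the root, \(s_{i+1}\) is the child of \(s_i\) for \(i \in [k-1]\), and every vertex of \(I\) is a child (leaf) of \(s_k\). Thus the vertices of \(S\) form a root path, and every vertex of \(S\) is an ancestor of every vertex of \(I\), as the statement demands.

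Next I check that \(F\) is an elimination forest of \(G\). Take an edge \(uv\). Since \(S\) is a vertex cover, not both endpoints lie in \(I\). If both lie in \(S\), they sit on the same root path and are therefore comparable. If exactly one lies in \(I\), the other lies in \(S\) and is an ancestor of it. So the closure of \(F\) contains \(G\). Every root-to-leaf path consists of \(s_1, \ldots, s_k\) followed by at most one vertex of \(I\), so the height is at most \(k+1\), which gives \(\mathsf{td}(G) \leq k+1\).

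For \(\mathsf{fvs}(G) \leq k\), it suffices to note that \(G - S\) is the edgeless graph on \(I\), because \(I\) is independent. An edgeless graph is acyclic, so \(S\) itself is a feedback vertex set of size \(k\).

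No step is a real obstacle. The only points needing care are the degenerate cases \(k=0\) and \(I=\emptyset\); in the latter, the path on \(S\) alone has height \(k \le k+1\).
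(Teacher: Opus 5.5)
Your proposal is correct and follows essentially the same construction as the paper: it arranges \(S\) as a rooted path, hangs every vertex of \(V(G) \setminus S\) below \(s_k\), checks comparability of edge endpoints via the vertex cover property, and notes that \(G - S\) is edgeless. Your extra remark on the case \(V(G) \setminus S = \emptyset\) is a harmless refinement that the paper leaves implicit.
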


\begin{proof}
If \(S = \emptyset\) then \(G\) has no edges, and the forest with every vertex
of \(G\) as a separate root has height \(1\). Otherwise write \(S = \{s_1,
\ldots, s_k\}\), make \(s_1, \ldots, s_k\) a rooted path in this order, and make
every vertex of \(V(G) \setminus S\) a child of \(s_k\). Every edge of \(G\) has
at least one endpoint in \(S\), and every vertex of \(S\) is an ancestor of
every other vertex, so the endpoints of every edge are comparable. The height is
\(k+1\). Finally \(G - S\) is edgeless, hence acyclic, so \(S\) is a feedback
vertex set.
\end{proof}

\subsection{The source problem for our reductions}

All our hardness results are obtained by parameterized reductions from the
following problem.

\defproblem{\lc}%
{An undirected graph \(G = (V,E)\), and a collection
\(\mathcal{L} = \{L(v) : v \in V\}\) of lists, where \(L(v) \subseteq
\mathbb{Z}_{\geq 0}\) is the set of permitted colors for the vertex \(v\).}%
{Is there a proper coloring \(c\) of \(G\) with \(c(v) \in L(v)\) for every
\(v \in V\)?}

\lc is \WoneHard parameterized by the vertex cover number of the input
graph~\cite{fiala2011parameterized}. Since only equality between colors is
relevant, we may and do assume that the colors appearing in the lists form an
initial segment of the positive integers, or of the integers from any other
fixed starting value that is convenient; this relabeling changes neither the
answer nor the graph. We may also assume that every list is nonempty, since an
instance with an empty list is a no-instance.

\subsection{Two bounds from our earlier work}
\label{sec:prelims:bounds}

We need the following two lemmas, which we proved in our earlier
work~\cite{autepanolanphilip2026}. Each says that a yes-instance with a small
vertex cover admits a proper weighting in which no vertex needs many edges of
the larger of the two available weights.

\begin{lemma}[\cite{autepanolanphilip2026}]
  \label{lem:boundedcolor}
Let \(G\) be a yes-instance of \zero with vertex cover number \(k\). Then \(G\)
admits a proper edge-weighting \( w\colon E(G)\to\{0,1\} \) such that \(
  \mathsf{color}_{w}(v)\leq 8k^2+8k
\) for every \(v\in V(G)\).
\end{lemma}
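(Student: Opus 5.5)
Starting from an arbitrary proper \(\{0,1\}\)-weighting, I would repeatedly lower the color of any vertex whose color is large, keeping the weighting proper, until all colors are at most \(8k^2+8k\). Fix a vertex cover \(S\) of size \(k\) and set \(I := V(G)\setminus S\). A vertex \(u \in I\) has all its neighbors in \(S\), so \(d_G(u) \le k\) and by \autoref{obs:color-counts} its color is at most \(k\). Only vertices of \(S\) can have large colors, and at most \(k\) of them do. Among all proper weightings of \(G\), I choose one, \(w\), minimizing \(\Phi(w) := \sum_{s\in S}\max(0,\mathsf{color}_w(s)-B)\) with \(B := 8k^2+8k\). I would then show that if some \(s\in S\) has \(\mathsf{color}_w(s) > B\), a local modification produces a proper weighting with smaller \(\Phi\), which contradicts the choice of \(w\).

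\textbf{The modification.} Let \(s\) have color \(c > B\). Then \(s\) has more than \(B - k\) incident weight-\(1\) edges going into \(I\). I would partition these neighbors \(u \in I\) by their \emph{type}: the set \(N(u)\subseteq S\) together with the current color of \(u\), which lies in \(\{1,\dots,k\}\). I would try to change the weight of a single edge \(su\) from \(1\) to \(0\). This decreases the colors of \(s\) and \(u\) by one each and changes nothing else. The new weighting is proper unless one of two things happens.
\begin{enumerate}
\item The new color \(c-1\) of \(s\) equals the color of some neighbor of \(s\). Neighbors in \(I\) have color at most \(k < c-1\), so such a neighbor must lie in \(S\). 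This rules out at most \(k-1\) target values, and the target is fixed at \(c-1\), so the issue is whether \(c-1\) is blocked.
\item The new color \(\mathsf{color}_w(u)-1\) of \(u\) equals the color of some \(s' \in N(u)\setminus\{s\}\).
\end{enumerate}

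To handle the first obstruction I would allow flipping \(t\) edges at once, for some \(t\in[k]\), choosing \(t\) so that \(c-t\) avoids the at most \(k-1\) colors of the other vertices of \(S\); such a \(t\) exists by pigeonhole. The new color \(c-t\) still exceeds \(B-k\), which is greater than \(k\), so \(s\) never clashes with vertices of \(I\). To handle the second obstruction I need \(t\) neighbors \(u\) whose decremented colors avoid the colors of \(N(u)\setminus\{s\}\). For a fixed \(u\), its value \(\mathsf{color}_w(u)-1\) can collide only with vertices \(s'\) that currently have small color, at most \(k\). A vertex \(s'\) of small color has few weight-\(1\) edges, at most \(k\) of them. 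I would therefore count: at most \((k-1)\cdot k\) neighbors \(u\) of \(s\) are incident to a weight-\(1\) edge at such an \(s'\) in a way that matters. A cleaner way to say this is that \(u\) is blocked only if some \(s'\in N(u)\) has color exactly \(\mathsf{color}_w(u)-1 \le k-1\). Each \(s'\) of small color can have many neighbors \(u\), though, so this bound needs care.

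I expect this step to be the main obstacle, because a low-color \(s'\) can be adjacent to many \(u\) through weight-\(0\) edges. To get around it I would use the freedom to change \(u\)'s other edges as well. Alternatively, I would pick \(u\) whose color is not of the form \(\mathsf{color}_w(s')+1\) for any \(s' \in S\). There are only \(k\) forbidden values and colors of \(u\) range over \(\{1,\dots,k\}\), so this may fail if all remaining values are forbidden. In that case I would fall back to the type classes. There are at most \(2^k\cdot k\) of them, which is too many for a polynomial bound, so I would instead group \(u\) only by its color, giving at most \(k\) classes.

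The bound \(8k^2+8k\) suggests the following count. Among more than \(8k^2+7k\) weight-\(1\) neighbors of \(s\) in \(I\), some color class has at least \(8k+7\) members. Within that class, a low-color \(s'\) can block only those \(u\) with \(\mathsf{color}_w(u)-1=\mathsf{color}_w(s')\). For such an \(s'\) I would first raise its color by flipping a weight-\(0\) edge at \(s'\) to weight \(1\), which can be done while keeping \(s'\) at most \(2k\). This makes the class unblocked, and each of the at most \(k\) such repairs costs \(O(k)\) of the available slack. Checking that these repairs do not create new conflicts, and that \(\Phi\) strictly drops, is the delicate bookkeeping that the constant \(8\) must absorb.
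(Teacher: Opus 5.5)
Your set-up is sound. Minimizing the surplus potential, noting that only vertices of $S$ can exceed the bound, and lowering $t\in[k]$ edges $su$ at once so that $c-t$ avoids the colors of $S\setminus\{s\}$ correctly handles the case where enough weight-$1$ neighbours $u\in I$ of $s$ can be lowered without conflict. Whether a given $u$ can be lowered does not depend on the others, since $s$ stays above $k$.

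The gap is the complementary case. It is the whole difficulty, and you defer it to ``delicate bookkeeping''. The repair you sketch does not work as stated, for three reasons. First, flipping a weight-$0$ edge $s'u'$ also raises the color of $u'$, which may then clash with its other neighbours in $S$, unless $u'$ happens to be a vertex whose edge to $s$ you also lower. Second, raising $s'$ only slightly can make it equal to the color of some of its many $I$-neighbours, for instance members of the class that you do not end up lowering. To be safe, $s'$ must end above $k$ and avoid all colors on $S$, which needs many flips whose other endpoints you do not control. Third, grouping the blocked vertices by color gives them no common blocker, so no single repair unblocks many of them.

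The missing idea is to group by blocker and to couple the two flips at the same vertex of $I$. This is the argument in the proof of \autoref{lem:precolor}, which the paper notes specializes to this lemma when nothing is pre-weighted. Let $Y$ be the at least $8k^2+7k+2$ vertices $y\in I$ with $w(sy)=1$.

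\begin{enumerate}
\item By your first step, only $O(k)$ of them can be lowered without conflict. Every other $y$ has a neighbour $z_y\in S\setminus\{s\}$ with $\mathsf{color}_w(z_y)=\mathsf{color}_w(y)-1\le k-1$.
\item Averaging over the at most $k-1$ possible $z_y$ gives a single $z$, of color $t\le k-1$, that blocks more than $8k$ vertices of $Y$, all of color $t+1$.
\item Since $z$ has at most $t$ edges of weight $1$, more than $7k+1$ of the edges $zy$ have weight $0$.
\item For $\hat d$ of these $y$, set simultaneously $w(sy)=0$ and $w(zy)=1$. Each such $y$ keeps its color, so none of its edges needs rechecking; $s$ drops by $\hat d$ and $z$ rises by $\hat d$.
\item Choose $\hat d\in[3k-1,7k+1]$ so that both $c-\hat d$ and $t+\hat d$ avoid the at most $2k+1$ colors in use.
\end{enumerate}

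Then $z$'s new color lies in $(k,8k]$, below the new color of $s$. So $z$ clashes with no neighbour and contributes nothing to the potential, while the surplus at $s$ strictly decreases. The weight-$0$ edges at a low-color blocker, which you identified as the obstacle, are exactly the resource this swap consumes.
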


\begin{lemma}[\cite{autepanolanphilip2026}]
  \label{lem:12bounded}
Let \(G\) be a yes-instance of \one with vertex cover number \(k\). Then \(G\)
admits a proper edge-weighting \( w\colon E(G)\to\{1,2\} \) such that \(
  \mathsf{color}_{w}(v)\leq d_G(v)+2k^2
\) for every \(v\in V(G)\).
\end{lemma}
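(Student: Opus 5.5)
The plan is an extremal argument. Among all proper \(\{1,2\}\)-weightings of \(G\), fix one, \(w\), that minimizes the total number of edges of weight \(2\); if needed, break ties lexicographically by the sorted sequence of colors. Let \(S\) be a minimum vertex cover, with \(|S|=k\), and let \(I:=V(G)\setminus S\).

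By \autoref{obs:color-counts}, \(\mathsf{color}_w(v)-d_G(v)\) is the number of weight-\(2\) edges at \(v\). A vertex \(x\in I\) has \(d_G(x)\le k\), so it has at most \(k\le 2k^2\) such edges, and the bound holds for it trivially. A vertex \(s\in S\) has at most \(k-1\) weight-\(2\) edges into \(S\). So it suffices to show that \(s\) has at most about \(2k^2-k\) weight-\(2\) edges into \(I\).

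Suppose \(s\) has more than that many weight-\(2\) edges to \(I\). We reduce \(j\) of them to weight \(1\) at once, with \(j\) chosen carefully. This lowers \(\mathsf{color}(s)\) by \(j\) and lowers the color of each affected \(x\in I\) by exactly \(1\); no other color changes.

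Three kinds of conflicts can arise.
\begin{itemize}
\item \(s\) against its \(S\)-neighbours: this excludes at most \(k-1\) values of \(j\).
\item \(s\) against its \(I\)-neighbours: every vertex of \(I\) has color at most \(2k\), while \(\mathsf{color}(s)\) is large (it exceeds \(d_G(s)+2k^2\)). So we only need \(\mathsf{color}(s)-j>2k\), which holds for all \(j\) up to about \(2k^2-2k\).
\item An affected \(x\) against a neighbour \(s'\in S\setminus\{s\}\): this happens exactly when \(\mathsf{color}(x)-1=\mathsf{color}(s')\). Call such an \(x\) \emph{bad}; the edge \(sx\) must then not be lowered. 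Against \(s\) itself there is no conflict, since \(s\)'s new color still exceeds \(2k\).
\end{itemize}
So if \(s\) has at least \(k\) good weight-\(2\) partners in \(I\), some \(j\in[k]\) among those good edges gives a proper weighting with fewer weight-\(2\) edges, contradicting minimality.

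It remains to bound the bad partners, and this is the step I expect to be the main obstacle. A bad \(x\) has \(\mathsf{color}(x)-1\) equal to the color of one of at most \(k-1\) vertices \(s'\), and that color lies in \([1,2k]\). I would classify bad partners by the pair \((s',\mathsf{color}(x))\), giving at most \((k-1)\cdot 2k<2k^2\) classes. The aim is to show that each class contains at most one bad partner, or that a class with two members yields a local exchange.

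The exchange I would try first: for two bad partners \(x,x'\) of \(s\) in the same class, lower both \(sx\) and \(sx'\) to \(1\). To restore the colors of \(x\) and \(x'\), raise one weight-\(1\) edge at each of them to a vertex whose color is not thereby forced into conflict. If such edges are unavailable, then \(x\) and \(x'\) have all edges of weight \(2\) to the same structure. In that case one instead decreases \(\mathsf{color}(s')\), arguing via the tie-breaking order.

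If the exchange cannot be made clean, the fallback is weaker. Bound the bad partners per class by \(O(k)\) by repeating the good-partner argument with the roles of \(s\) and \(s'\) swapped. This yields \(d_G(v)+O(k^3)\) first, and then the counting would need to be sharpened to reach the constant \(2k^2\).
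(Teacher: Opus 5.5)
Your overall plan matches the paper's (it proves this statement as the case \(E'=\emptyset\) of \autoref{lem:pre12main}). You fix an extremal proper weighting, take \(s\in S\) with more than \(2k^2\) edges of weight \(2\), and lower \(j\in[k]\) of its weight-\(2\) edges into \(I\), with \(j\) chosen to avoid the at most \(k-1\) colors on \(S\setminus\{s\}\). Minimizing the total number of weight-\(2\) edges works as well as the paper's potential.

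However, the proof is not complete. The step you flag as the main obstacle, bounding the bad partners, is left open, and neither proposed route closes it. The exchange argument is only a sketch, and its target of at most one bad partner per class is unjustified and unnecessary. The fallback gives only \(O(k^3)\). Your class count is also inflated: a bad \(x\) satisfies \(\mathsf{color}_w(x)=\mathsf{color}_w(s')+1\), so \(\mathsf{color}_w(x)\) is determined by \(s'\), and the classes are indexed by \(s'\) alone.

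The missing idea is a one-line degree bound. In any \(\{1,2\}\)-weighting, \(d_G(s')\le\mathsf{color}_w(s')\) by \autoref{obs:color-counts}. If \(x\) is bad because of \(s'\), then \(\mathsf{color}_w(s')=\mathsf{color}_w(x)-1\le 2k-1\), hence \(d_G(s')\le 2k-1\). So \(x\) is one of at most \(2k-1\) neighbours of \(s'\). With at most \(k-1\) choices of \(s'\in S\setminus\{s\}\), there are at most \((k-1)(2k-1)\) bad partners in total; the paper uses the cruder bound \((k-1)\cdot 2k\).

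Now suppose \(s\) has at least \(2k^2+1\) weight-\(2\) edges. At most \(k-1\) of them go into \(S\), so at least \(2k^2-k+2\) go to \(I\). At least \(2k+1\ge k\) of those partners are therefore good, and your good-partner argument finishes the proof. This also needs that whether \(x\) is good depends only on \(x\): its other neighbours lie in \(S\setminus\{s\}\), whose colors do not change. That is what lets you lower several good edges at once; you use it implicitly and should state it.
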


\section{\zero}
\label{sec:zero}

In this section we settle two open questions about \zero from our earlier
work~\cite{autepanolanphilip2026}. First, we show that the problem has a kernel
with \(O(k^{3})\) vertices, where \(k\) is the size of a given vertex cover
(\autoref{thm:kernelVCpoly}). Second, we show that \zero is \WoneHard when
parameterized by treedepth (\autoref{thm:zero-td}).

\subsection{Kernel}
We first prove some lemmas and then present a reduction rule that yields a
kernel of polynomial size. Let \(S\) be a vertex cover of size \(k\), and let
\(I:=V(G)\setminus S\). Thus, \(I\) is an independent set. We begin by
partitioning the vertices into marked (red) and unmarked (blue) vertices.
Initially, every vertex is unmarked. If a vertex \(v\in S\) has at most \(5k\)
unmarked neighbors in \(I\), we mark \(v\) and all its neighbors in \(I\). We
repeat this operation until no such vertex remains. Consequently, every unmarked
vertex in \(S\) has at least \(5k+1\) unmarked neighbors in \(I\). When a vertex
\(v\in S\) is marked, only its neighbors in \(I\) are marked; its neighbors in
\(S\) remain unaffected. Let \(S_R\) and \(I_R\) denote the marked vertices in
\(S\) and \(I\), respectively, and let \(S_B\) and \(I_B\) denote the
corresponding unmarked vertices. There are no edges between \(S_R\) and \(I_B\):
whenever a vertex of \(S_R\) is marked, all its neighbors in \(I\) are marked as
well. We have therefore partitioned \(V(G)\) into the four sets \( S_R, S_B,
I_R, \text{ and } I_B, \) with no edges between \(S_R\) and \(I_B\).

By construction, the partition \(S=S_R\cup S_B\), \(I=I_R\cup I_B\) of
\(V(G)\) satisfies hypotheses 1 and 2 of \autoref{lem:mincolor}.
\begin{lemma}
  \label{lem:mincolor}
  Let \(G'\) be a graph, let \(S\) be a vertex cover of \(G'\) with
  \(|S|\leq k\), and let \(I:=V(G')\setminus S\). Suppose that
  \(S=S_R\cup S_B\) and \(I=I_R\cup I_B\) are partitions such that
  \begin{enumerate}
  \item every vertex of \(S_B\) has at least \(5k+1\) neighbors in
    \(I_B\), and
  \item there is no edge of \(G'\) between \(S_R\) and \(I_B\).
  \end{enumerate}
  Let \(w\) be a proper edge-weighting of \(G'\) satisfying the bound in
  \autoref{lem:boundedcolor}. Then there exists a proper edge-weighting
  \(\hat{w}\) of \(G'\) that also satisfies this bound, and for which
  \(
  \mathsf{color}_{\hat{w}}(v)\geq k+1
  \)
  for every \(v\in S_B\).
\end{lemma}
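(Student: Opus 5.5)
The plan is to modify \(w\) only on edges between \(S_B\) and \(I_B\), and to raise each ``low'' vertex of \(S_B\) in one step to a target color \(t_v\) with \(k+1 \le t_v \le 2k+1\). Doing all the raises at once, rather than one edge at a time, makes every potential conflict with a vertex of \(I\) vanish at the end. Two facts are used throughout. First, every vertex of \(I\) has all its neighbors in \(S\), so it has degree at most \(k\); hence its color under any \(\{0,1\}\)-weighting is at most \(k\) by \autoref{obs:color-counts}. Second, by hypothesis 2, every edge between \(S_B\) and \(I\) that we change has its \(I\)-endpoint in \(I_B\). We never touch edges incident with \(S_R\) or \(I_R\), so the colors of \(S_R\) and \(I_R\) vertices stay as under \(w\).

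Let \(L := \{v \in S_B : \mathsf{color}_w(v) \le k\}\) be the low vertices, and order them arbitrarily as \(v_1, \dots, v_r\). We process them in this order, keeping a current weighting that starts as \(w\). When we reach \(v_i\), we choose \(t_{v_i} \in \{k+1, \dots, 2k+1\}\) different from the current colors of all neighbors of \(v_i\) in \(S\). There are at most \(k-1\) such neighbors and \(k+1\) candidate values, so this is possible.

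We then realize \(t_{v_i}\) by switching edges \(v_i u\) with \(u \in I_B\) from weight \(0\) to weight \(1\). By hypothesis 1, \(v_i\) has at least \(5k+1\) neighbors in \(I_B\). At most \(\mathsf{color}(v_i) \le k\) of these edges carry weight \(1\), so at least \(4k+1\) carry weight \(0\). We need only \(t_{v_i} - \mathsf{color}(v_i) \le 2k+1\) switches, so there are enough edges available. A switch changes only the colors of \(v_i\) and of the chosen \(u\). Since the edges \(v_i u\) are distinct for distinct \(i\), once \(v_i\) has been processed its color is never changed again. Vertices of \(S_B \setminus L\) are never changed either, because we only switch edges at vertices of \(L\).

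Let \(\hat w\) be the final weighting. We check properness edge by edge.
\begin{itemize}
\item Edges with both ends in \(S\): when \(v_i\) was processed, every \(S\)-neighbor other than a later \(v_j\) already had its final color, and \(t_{v_i}\) avoided it. A later \(v_j\) in turn avoids \(t_{v_i}\). Edges inside \(S \setminus L\) keep their \(w\)-colors and were proper already.
\item Edges \(xu\) with \(x \in S_B\) and \(u \in I\): here \(u \in I_B\) by hypothesis 2. If \(x \in L\), then \(\mathsf{color}_{\hat w}(x) = t_x \ge k+1\). If \(x \in S_B \setminus L\), then its color is unchanged and at least \(k+1\). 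In both cases \(\mathsf{color}_{\hat w}(u) \le d(u) \le k\), so the colors differ.
\item Edges \(xu\) with \(x \in S_R\): here \(u \in I_R\), and both endpoints keep their \(w\)-colors, so the edge stays proper.
\end{itemize}

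For the bound of \autoref{lem:boundedcolor}, note that every changed color lies in \(S_B\) and is at most \(2k+1\), or lies in \(I_B\) and is at most \(k\). All unchanged colors already satisfied the bound, so \(\hat w\) satisfies it, and by construction every \(v \in S_B\) has \(\mathsf{color}_{\hat w}(v) \ge k+1\).

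The step I expect to be the main obstacle is the one-edge-at-a-time viewpoint. Switching a single edge \(vu\) can make \(u\) collide with another \(S\)-neighbor of \(u\) whose color is currently low, and no local choice of \(u\) obviously avoids this. The way around it is to change colors only at vertices of \(L\) and \(I_B\), and to check properness only for the final weighting. At that point every \(S_B\) color is at least \(k+1\) and every \(I_B\) color is at most \(k\), so all \(S_B\)–\(I_B\) conflicts disappear at once, and only the \(S\)–\(S\) constraints need the greedy choice of targets.
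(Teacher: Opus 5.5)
Your proof is correct, and its skeleton is the paper's. Like the paper, you never touch edges at \(S_R\cup I_R\). You raise each vertex \(v\in S_B\) with \(\mathsf{color}_w(v)\le k\) above \(k\) by switching weight-\(0\) edges from \(v\) to \(I_B\) to weight \(1\). You then dispose of all \(S\)--\(I\) edges at once, because every vertex of \(I\) has color at most \(k\).

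The difference is in how the target colors are chosen.
\begin{itemize}
\item The paper picks all targets at once: distinct \(a_i\in\{k+1,\ldots,5k\}\setminus P_w\), colors that occur nowhere in the palette of \(w\). A fresh color cannot collide with anything, so the \(S\)--\(S\) check needs no ordering. The cost is targets as large as \(5k\), which is why the paper needs the \(5k+1\) neighbors in \(I_B\).
\item You process the low vertices in order and pick each target from \(\{k+1,\ldots,2k+1\}\), avoiding only the current colors of that vertex's \(S\)-neighbors. This requires your small ordering argument for edges \(v_iv_j\). In return, every new color is at most \(2k+1\), and you need only \(2k+1\) weight-\(0\) edges from \(v\) to \(I_B\). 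So your argument proves the lemma even with hypothesis~1 weakened to \(2k+1\) neighbors in \(I_B\).
\end{itemize}

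One small inaccuracy does not affect correctness. Hypothesis~2 forbids edges between \(S_R\) and \(I_B\); it does not forbid edges between \(S_B\) and \(I_R\). So in your second bullet, \(u\) may lie in \(I_R\), and ``\(u\in I_B\) by hypothesis 2'' is false. Nothing breaks, because the bound \(\mathsf{color}_{\hat w}(u)\le d(u)\le k\) holds for every \(u\in I\). Similarly, the edges you change end in \(I_B\) because you chose them that way, not because of hypothesis~2.
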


\begin{proof}
  For every vertex \(u\in I\), its color is at most its degree. Since \(S\) is a
  vertex cover of size at most \(k\), every vertex in \(I\) has degree at most
  \(k\), and hence \( \mathsf{color}_w(u)\leq k.\) Moreover, because \(|S|\leq k\),
  at most \(k\) distinct colors appear on \(S\). Every vertex of \(I\) has color
  in \(\{0,1,\ldots,k\}\), so at most \(k+1\) distinct colors appear on \(I\).
  Thus, at most \(2k+1\) colors appear in total.

  Let \( P_w:=\{\mathsf{color}_w(v):v\in V(G')\} \) be the palette induced by
  \(w\). By the preceding observation, \(|P_w|\leq 2k+1\). Consider the set \(
  D:=\{k+1,k+2,\ldots,5k\}. \) Since \(|D|=4k\), and since no color of a vertex
  of \(I\) lies in \(D\) (every such color is at most \(k\), while \(\min
  D=k+1\)), the only colors of \(P_w\) that can lie in \(D\) are those of the at
  most \(k\) vertices of \(S\). Hence \( |D\setminus P_w|\geq 4k-k=3k. \)

Let \(v_1,\ldots,v_p\) be the vertices of \(S_B\) whose colors under
\(w\) are at most \(k\). Since \(p\leq |S|\leq k\), we may choose
distinct colors
\(
    a_1,\ldots,a_p\in D\setminus P_w.
\)
We construct \(\hat{w}\) so that
\(
    \mathsf{color}_{\hat{w}}(v_i)=a_i
\)
for every \(i\in[p]\).

Each \(v_i\in S_B\) has at least \(5k+1\) neighbors in \(I_B\).
Furthermore, exactly \(\mathsf{color}_w(v_i)\) of its incident edges have
weight \(1\). Because \(a_i\leq 5k\), there are sufficiently many
weight-\(0\) edges between \(v_i\) and \(I_B\) to choose
\(
    a_i-\mathsf{color}_w(v_i)
\)
of them. Change the weight of each chosen edge from \(0\) to \(1\).
Perform this operation for every \(i\in[p]\), and leave the weights of
all remaining edges unchanged. This defines \(\hat{w}\).

It remains to verify that \(\hat{w}\) is proper. Only the colors of vertices in
\(S_B\cup I_B\) can change. The colors of all vertices in \(S_R\cup I_R\) remain
unchanged, so no new conflict is introduced on an edge between \(S_R\) and
\(I_R\). There are no edges between \(S_R\) and \(I_B\). Every vertex in \(I_B\)
still has color at most \(k\), whereas \( \mathsf{color}_{\hat{w}}(v_i)=a_i>k \)
for each \(i\in[p]\). Thus, no conflict arises between a modified vertex \(v_i\)
and a vertex of \(I\). The colors \(a_1,\ldots,a_p\) are distinct and do not
belong to \(P_w\), so they create no conflicts with vertices in \(S\). The
colors of all remaining vertices in \(S_B\) are unchanged. Each such vertex has
color at least \(k+1\), while every vertex of \(I_B\) has color at most \(k\)
under \(\hat{w}\), so no conflict arises between them either. Finally, because
\(I\) is an independent set, there are no edges within \(I_R\cup I_B\).

Therefore, \(\hat{w}\) is proper. Each newly assigned color is at most
\(5k\), so \(\hat{w}\) continues to satisfy the bound in
\autoref{lem:boundedcolor}. In addition, every vertex in \(S_B\) has
color at least \(k+1\), as required.
\end{proof}

We do the following procedure to set the reduction rule.
Let
\(
    T:=8k^2+8k.
\)
For every vertex \(v\in S_B\), choose an arbitrary set
\(
    X_v\subseteq N_G(v)\cap I_B
\)
such that
\(
    |X_v|
    =
    \min\bigl\{T,\lvert N_G(v)\cap I_B\rvert\bigr\}.
\)
We call the vertices of \(X_v\) the vertices selected for \(v\). Let
\(
    X:=\bigcup_{v\in S_B}X_v.
\)

\medskip
\noindent
\textbf{Reduction Rule:}
 Retain the vertices of \(X\) and delete every vertex in
\(I_B\setminus X\), together with all its incident edges.

Thus, the resulting graph is
\(
    H:=G[S_R\cup S_B\cup I_R\cup X].
\)
The vertex cover \(S\) and the parameter \(k\) remain unchanged.

\begin{lemma}
\label{lem:rrsafe}
Reduction Rule  is safe.
\end{lemma}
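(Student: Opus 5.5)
The plan is to prove both directions of \(G\) is a yes-instance \(\iff\) \(H\) is a yes-instance. In each direction we take a proper weighting of the appropriate graph in normal form, where all colors are at most \(T\) and every vertex of \(S_B\) has color at least \(k+1\). We then transfer it to the other graph, changing only edges between \(S_B\) and \(I_B\). The recurring principle is as follows. Every vertex of \(I_B\) (and of \(X\)) has all its neighbors in \(S_B\), since there are no \(S_R\)--\(I_B\) edges and \(I\) is independent. Its color is at most its degree, hence at most \(k\). So once all of \(S_B\) has color \(\ge k+1\), the colors on \(I_B\) never cause conflicts, however we redistribute weights there. We may assume \(k\ge 1\); if \(k=0\) the graph is edgeless and both instances are trivially yes.

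\emph{From \(H\) to \(G\).} First check that the partition \(S_R, S_B, I_R, X\) of \(V(H)\) satisfies the hypotheses of \autoref{lem:mincolor} for \(H\), with the same \(S\) and \(k\). There are no \(S_R\)--\(X\) edges, because \(X\subseteq I_B\). Every \(v\in S_B\) has at least \(|X_v| = \min\{T, |N_G(v)\cap I_B|\}\) neighbors in \(X\). This is at least \(5k+1\), since \(T=8k^2+8k\ge 5k+1\) and \(v\) had at least \(5k+1\) neighbors in \(I_B\) by the marking procedure. Apply \autoref{lem:boundedcolor} to the yes-instance \(H\), whose vertex cover number is at most \(k\), and then \autoref{lem:mincolor}. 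This gives a proper weighting of \(H\) in which every vertex of \(S_B\) has color \(\ge k+1\). Extend it to \(G\) by giving weight \(0\) to every edge incident with a deleted vertex. Colors of vertices of \(H\) are unchanged, and the deleted vertices get color \(0\le k\). Their neighbors all lie in \(S_B\) and have color \(\ge k+1\), so the extension is proper.

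\emph{From \(G\) to \(H\).} Apply \autoref{lem:boundedcolor} and \autoref{lem:mincolor} to \(G\) itself; the hypotheses hold by construction. This gives a proper \(w\) with all colors \(\le T\) and \(\mathsf{color}_w(v)\ge k+1\) on \(S_B\). Define \(w_H\) on \(H\) as follows: keep \(w\) on all edges of \(G[S]\) and on all edges incident with \(I_R\). For each \(v\in S_B\), let \(r_v\) be the number of weight-\(1\) edges from \(v\) to \(I_B\) under \(w\). Then \(r_v\le \mathsf{color}_w(v)\le T\) and \(r_v\le |N_G(v)\cap I_B|\), so \(r_v\le |X_v|\). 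Give weight \(1\) to exactly \(r_v\) edges from \(v\) to \(X_v\), and weight \(0\) to all other edges from \(v\) into \(X\). This is well defined because each \(S_B\)--\(X\) edge has exactly one endpoint in \(S_B\).

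Then every vertex of \(S\) keeps its color from \(w\). For \(S_R\), all of its \(I\)-neighbors lie in \(I_R\) and are retained with their weights. For \(S_B\), the count into \(I_B\) is preserved and all other edges are unchanged. Vertices of \(I_R\) also keep their colors. Hence every edge inside \(S\) and every \(S\)--\(I_R\) edge stays proper. The only new colors are those on \(X\), which are at most \(k\), against neighbors in \(S_B\) of color \(\ge k+1\). So \(w_H\) is proper.

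The main subtlety is the \(G\to H\) direction. A vertex \(v\in S_B\) may lose weight-\(1\) edges to deleted vertices, and that lost weight must be re-routed into \(X_v\) without conflicts. This is exactly where the size \(T\) of \(X_v\), together with the color bound from \autoref{lem:boundedcolor}, guarantees enough room. It is also where the lower bound \(k+1\) from \autoref{lem:mincolor} makes the arbitrary re-routing harmless for the vertices of \(X\).
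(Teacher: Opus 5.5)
Your proof is correct and follows the paper's route. You normalize via \autoref{lem:boundedcolor} and \autoref{lem:mincolor}, after checking that their hypotheses hold in \(H\) with \(X\) in place of \(I_B\). From \(H\) to \(G\) you extend by weight \(0\), and from \(G\) to \(H\) you re-route weight-\(1\) edges into \(X_v\); both directions rest on colors being at most \(k\) on \(I\) and at least \(k+1\) on \(S_B\).

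The only difference is in the \(G\to H\) direction and is cosmetic. You move all of \(v\)'s weight-\(1\) edges into \(I_B\) onto \(X_v\), using \(r_v\le\min\{T,|N_G(v)\cap I_B|\}\). The paper instead keeps the retained weights and flips \(t_v\) weight-\(0\) edges in \(X_v\) to compensate only for the deleted ones, which needs the extra observation that \(|X_v|=T\) whenever \(t_v>0\).
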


\begin{proof}
Let
\(
    T:=8k^2+8k
\)
and let \(H\) be the graph obtained by applying
Reduction Rule. We prove that \(G\) is a yes-instance if and
only if \(H\) is a yes-instance.

Recall that \(I\) is an independent set, and that there are no edges
between \(S_R\) and \(I_B\).
Consequently, every neighbor of a vertex in \(I_B\) belongs to
\(S_B\).
Suppose that \(G\) is a yes-instance. By
\autoref{lem:boundedcolor} and \autoref{lem:mincolor}, \(G\) admits a
proper edge-weighting
\(
    w\colon E(G)\to\{0,1\}
\)
such that
\(
    \mathsf{color}_w(z)\leq T
    \quad\text{for every }z\in V(G)
\)
and
\(
    \mathsf{color}_w(v)\geq k+1
    \quad\text{for every }v\in S_B.
\)

We construct a proper edge-weighting \(w_1\) of \(H\). Consider a
vertex \(v\in S_B\), and let
\[
    t_v
    :=
    \bigl|
        \{u\in I_B\setminus X:uv\in E(G)\text{ and }w(uv)=1\}
    \bigr|.
\]
Thus, \(t_v\) is the number of weight-\(1\) edges incident with \(v\)
that are removed by the reduction rule.

If \(t_v=0\), no modification is required at \(v\). Suppose that
\(t_v>0\). Then \(v\) has a neighbor in \(I_B\setminus X\). Since
every neighbor of \(v\) would have been selected if
\(
    |N_G(v)\cap I_B|\leq T,
\)
we must have
\(
    |N_G(v)\cap I_B|>T.
\)
Therefore, \(|X_v|=T\).

Let
\(
    r_v
    :=
    \bigl|
        \{x\in X_v:w(vx)=1\}
    \bigr|.
\)
Since \(w\) assigns only weights \(0\) and \(1\), the color of \(v\)
is the number of weight-\(1\) edges incident with \(v\). Hence,
\(
    r_v+t_v\leq \mathsf{color}_w(v)\leq T.
\)
It follows that, among the \(T\) edges between \(v\) and the vertices of
\(X_v\),
exactly \(r_v\) have weight \(1\). Therefore, at least \(t_v\) of
these edges have weight \(0\).

Choose \(t_v\) such weight-\(0\) edges and change their weights from
\(0\) to \(1\). Perform this operation independently for every
\(v\in S_B\). All other edges of \(H\) retain their weights from \(w\).
Denote the resulting edge-weighting of \(H\) by \(w_1\).

At every vertex \(v\in S_B\), the total weight lost through edges to
deleted vertices is exactly \(t_v\), while the total weight added to
edges between \(v\) and \(X_v\) is also \(t_v\). Consequently,
\(
    \mathsf{color}_{w_1}(v)=\mathsf{color}_w(v),
    \text{ for every }v\in S_B.
\)
The colors of vertices in \(S_R\) are also unchanged because there are
no edges between \(S_R\) and \(I_B\). Thus, the colors of all vertices
in \(S\) remain unchanged, and no new conflict arises on an edge with
both endpoints in \(S\).

The colors of vertices in \(I_R\) remain unchanged. The only other
vertices whose colors may change are the selected vertices in
\(X\subseteq I_B\). Because \(S\) is a vertex cover of size \(k\) and
\(I\) is independent, every vertex \(x\in I\) has degree at most \(k\).
Therefore,
\(
    \mathsf{color}_{w_1}(x)\leq k
    \quad\text{for every }x\in X.
\)
On the other hand, the colors of vertices in \(S_B\) are preserved, so
\(
    \mathsf{color}_{w_1}(v)
    =
    \mathsf{color}_w(v)
    \geq k+1
    \quad\text{for every }v\in S_B.
\)
Hence, no conflict arises on an edge between \(S_B\) and \(X\).

All edges between \(S_R\) and \(I_R\), as well as all edges between
\(S_B\) and \(I_R\), retain their endpoint colors. Furthermore, there
are no edges within \(I\). It follows that \(w_1\) is a proper
edge-weighting of \(H\). Therefore, \(H\) is a yes-instance.

Conversely, suppose that \(H\) is a yes-instance. We first verify that the
properties needed for \autoref{lem:mincolor} continue to hold in \(H\).

Let \(v\in S_B\). If
\(
    |N_G(v)\cap I_B|\leq T,
\)
then all neighbors of \(v\) in \(I_B\) belong to \(X_v\) and are
therefore retained in \(H\). Hence,
\(
    N_H(v)\cap X=N_G(v)\cap I_B.
\)
Since \(v\in S_B\), it originally had at least \(5k+1\) neighbors in
\(I_B\), and therefore
\(
    |N_H(v)\cap X|\geq 5k+1.
\)

If
\(
    |N_G(v)\cap I_B|>T,
\)
then \(|X_v|=T\), and all vertices of \(X_v\) are retained in
\(H\). Consequently,
\(
    |N_H(v)\cap X|
    \geq |X_v|
    =T
    \geq 5k+1.
\)
Thus, every vertex in \(S_B\) has at least \(5k+1\) neighbors in the
remaining blue independent set \(X\). Moreover,
\(
    E_H(S_R,X)=\emptyset
\)
because \(X\subseteq I_B\) and \(E_G(S_R,I_B)=\emptyset\).

We may therefore apply \autoref{lem:boundedcolor} and
\autoref{lem:mincolor} to \(H\). It follows that \(H\) admits a proper
edge-weighting
\(
    w_1\colon E(H)\to\{0,1\}
\)
such that
\(
    \mathsf{color}_{w_1}(v)\geq k+1
    \text{, for every }v\in S_B.
\)

We extend \(w_1\) to an edge-weighting \(w\) of \(G\) by assigning
weight \(0\) to every edge incident with a deleted vertex. Formally,
define
\[
    w(e):=
    \begin{cases}
        w_1(e), & e\in E(H),\\
        0,      & e\in E(G)\setminus E(H).
    \end{cases}
\]

Because all newly restored edges have weight \(0\), the color of every
vertex retained in \(H\) remains unchanged. Every deleted vertex
\(u\in I_B\setminus X\) receives color
\(
    \mathsf{color}_w(u)=0.
\)
Furthermore, every neighbor of \(u\) belongs to \(S_B\), and hence
\(
    \mathsf{color}_w(v)
    =
    \mathsf{color}_{w_1}(v)
    \geq k+1
\)
for every \(v\in N_G(u)\). Thus, no restored edge creates a color
conflict.

Finally, the deleted vertices all belong to the independent set \(I\),
so there are no edges between them. Hence, \(w\) is a proper
edge-weighting of \(G\), and \(G\) is a yes-instance.
Therefore, the Reduction Rule is safe.
\end{proof}

\begin{theorem}
\label{thm:kernelVCpoly}
\zero, supplied with a vertex cover of size \(k\), admits a kernel
\((H,k')\) with \(O(k^3)\) vertices and \(O(k^4)\) edges, where \(k'\leq k\).
\end{theorem}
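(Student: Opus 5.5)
The kernel is obtained by running the marking procedure and then applying the Reduction Rule once. Correctness follows from \autoref{lem:rrsafe}, so what remains is to bound the size of the output graph \(H\) and to check that the procedure runs in polynomial time. We output \((H,k)\) with the same vertex cover \(S\), so \(k'=k\). \(S\) is still a vertex cover of \(H\), since \(H\) is an induced subgraph of \(G\) containing \(S\).

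\textbf{Counting vertices.} We have \(V(H)=S_R\cup S_B\cup I_R\cup X\) and \(|S|\le k\).
\begin{itemize}
\item \(I_R\): every vertex of \(I_R\) was marked as a neighbor of some marked vertex \(v\in S_R\). When \(v\) was marked it had at most \(5k\) unmarked neighbors in \(I\), and only those became newly marked at that step. Summing over at most \(k\) marked vertices of \(S\) gives \(|I_R|\le 5k^2\).
\item \(X\): we have \(|X_v|\le T=8k^2+8k\) for each \(v\in S_B\). Hence \(|X|\le k(8k^2+8k)=O(k^3)\).
\end{itemize}
Altogether \(|V(H)|\le k+5k^2+8k^3+8k^2=O(k^3)\).

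\textbf{Counting edges.} Every edge of \(H\) has an endpoint in \(S\). There are at most \(\binom{k}{2}\) edges inside \(S\). Each vertex of \(I_R\cup X\) has at most \(k\) neighbours, so it contributes at most \(k\) edges. This gives \(O(k^2)+k\cdot O(k^3)=O(k^4)\) edges.

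\textbf{Running time.} The marking procedure performs at most \(k\) marking steps, and each step scans neighborhoods in polynomial time. Choosing the sets \(X_v\) and taking the induced subgraph are also polynomial.

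\textbf{Main subtlety.} The one point to check is that \autoref{lem:rrsafe} needs no special case when \(k\) is small or \(S_B=\emptyset\). If \(S_B=\emptyset\), then \(X=\emptyset\) and all vertices of \(I_B\) are isolated, so deleting them is trivially safe. This case is covered by the lemma anyway. Apart from this, the proof is routine counting.
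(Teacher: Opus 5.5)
Your proposal is correct and follows essentially the same route as the paper's proof. Both take safeness from Lemma~\ref{lem:rrsafe} and obtain the same count $|V(H)|\le k+5k^2+8k^3+8k^2$. Both bound the edges by noting that vertices outside $S$ have degree at most $k$, and both set $k'=k$ because $S$ remains a vertex cover of $H$. Your remarks on polynomial running time and on the degenerate case $S_B=\emptyset$ are harmless additions that the paper leaves implicit.
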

\begin{proof}
 \autoref{lem:rrsafe} proves the safeness of the reduction rule. We now bound
the size of the graph \(H\). Here \(V(H)=S_R\cup S_B\cup I_R\cup X\), and
\(S_R\cup S_B=S\) with \(|S|=k\). Every vertex of \(I_R\) is marked because
some vertex of \(S_R\) has at most \(5k\) unmarked neighbors in \(I\), so
\(|I_R|\leq 5k\cdot|S_R|\leq 5k^2\). For every \(v\in S_B\) we have
\(|X_v|\leq T=8k^2+8k\), so
\(
    |X|\leq T\cdot|S_B|\leq (8k^2+8k)k = 8k^3+8k^2.
\)
Hence
\(
    |V(H)|\leq k+5k^2+8k^3+8k^2 = 8k^3+13k^2+k = O(k^3).
\)
Every vertex of \(H\) outside \(S\) lies in the independent set and so has
degree at most \(|S|=k\) in \(H\); hence \(H\) has \(O(k^4)\) edges. The set
\(S\) remains a vertex cover of \(H\), so \(k'=k\).
\end{proof}

\subsection{Hardness parameterized by treedepth}
\label{sec:zero:td}

The reduction from our earlier work that shows \zero to be \WoneHard
parameterized by the feedback vertex set number~\cite{autepanolanphilip2026}
produces graphs of unbounded treedepth, and the complexity of \zero
parameterized by treedepth was left open there. We settle it here by a different
reduction from \lc, parameterized by the vertex cover number. Our earlier
reduction used long paths as part of the construction, blowing up the treedepth.
We get around this here using \emph{disallowing gadgets}: a disallowing gadget is
a subgraph attached at a vertex \(a\) that forces \(a\) to avoid one prescribed
color, and we attach one bounded-depth gadget per disallowed color, never using
a long path.

\paragraph{Suspended paths.} A \emph{suspended path} at a vertex \(u\) consists
of two new vertices \(x, y\) and the edges \(ux\) and \(xy\); the vertices \(x\)
and \(y\) have no other neighbors. Suspended paths are the device by which we
force weight \(1\) on a chosen edge.

\begin{lemma}
\label{lem:suspended-path}
Let \(u\) be a vertex with a suspended path \(uxy\) attached at it.
\begin{enumerate}
    \item Every proper \(\{0,1\}\)-weighting \(w\) satisfies \(w(ux) = 1\), so
          the suspended path contributes exactly \(1\) to \(\mathsf{color}_w(u)\).
    \item For every prescribed color \(c \geq 2\) of \(u\), assigning \(w(ux) =
          1\) and \(w(xy) = 0\) gives \(x\) and \(y\) the colors \(1\) and
          \(0\), and creates no conflict on \(ux\) or \(xy\).
\end{enumerate}
\end{lemma}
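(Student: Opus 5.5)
For the first part, the idea is to look at the leaf \(y\) of the suspended path. Its only incident edge is \(xy\), so \(\mathsf{color}_w(y) = w(xy)\), and by \autoref{obs:color-counts} \(\mathsf{color}_w(x) = w(ux) + w(xy)\). Properness on the edge \(xy\) requires \(\mathsf{color}_w(x) \neq \mathsf{color}_w(y)\). This means \(w(ux) + w(xy) \neq w(xy)\), which forces \(w(ux) \neq 0\), hence \(w(ux) = 1\). Nothing else is needed here. No constraint from the rest of the graph enters, so the conclusion holds for every proper \(\{0,1\}\)-weighting. Since \(ux\) is the only edge of the suspended path incident with \(u\), the path contributes exactly \(w(ux) = 1\) to \(\mathsf{color}_w(u)\).

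For the second part, I would fix \(w(ux) = 1\) and \(w(xy) = 0\) and compute directly. Then \(\mathsf{color}_w(y) = 0\) and \(\mathsf{color}_w(x) = 1 + 0 = 1\), so the edge \(xy\) is fine because \(1 \neq 0\). For the edge \(ux\), the color of \(u\) is the prescribed \(c \geq 2\), which differs from \(\mathsf{color}_w(x) = 1\).

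I read ``prescribed color \(c\)'' as the total color of \(u\) under the full weighting, including the contribution \(1\) from \(ux\). I would state this explicitly so that the reader knows \(c\) already accounts for the suspended path. Because \(x\) and \(y\) have no other neighbours, these are the only two edges whose properness involves \(x\) or \(y\), so no further conflicts can arise.

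I expect no real obstacle. The only point needing care is the interpretation of \(c\) in part 2 and the observation that the argument in part 1 is purely local, using only the edge \(xy\).
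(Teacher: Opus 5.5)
Your proposal is correct and follows essentially the same argument as the paper: you read off $\mathsf{color}_w(y) = w(xy)$ and $\mathsf{color}_w(x) = w(ux) + w(xy)$, so properness on $xy$ forces $w(ux) = 1$, and for part 2 the colors $c \geq 2$, $1$, $0$ of $u$, $x$, $y$ are pairwise distinct. Your reading of $c$ as the actual total color of $u$ also matches how the paper later applies this lemma.
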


\begin{proof}
We have \(\mathsf{color}_w(y) = w(xy)\) and \(\mathsf{color}_w(x) = w(ux) + w(xy)\).
Properness on the edge \(xy\) forces \(w(ux) \neq 0\), hence \(w(ux) = 1\). For
the second part, the colors \(c \geq 2\), \(1\) and \(0\) of \(u\), \(x\) and
\(y\) are pairwise distinct.
\end{proof}

\paragraph{Type-A disallowing gadgets.} For an integer \(k \geq 3\), a
\emph{Type-A \(k\)-disallowing gadget} attached at a vertex \(a\) consists of a
triangle \(auv\) together with \(k-1\) suspended paths at each of \(u\) and
\(v\). All vertices other than \(a\) are new and have no neighbors outside the
gadget.

\begin{lemma}
\label{lem:type-a}
A Type-A \(k\)-disallowing gadget attached at \(a\), where \(k \geq 3\),
satisfies the following.
\begin{enumerate}
    \item Every proper \(\{0,1\}\)-weighting \(w\) satisfies \(w(au) + w(av) =
          1\) and \(\mathsf{color}_w(a) \neq k\).
    \item For every prescribed color \(c \neq k\) of \(a\), the gadget admits a
          weighting of its edges that contributes exactly \(1\) to the color of
          \(a\) and creates no conflict on any edge of the gadget.
\end{enumerate}
\end{lemma}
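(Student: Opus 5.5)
The plan is to use \autoref{lem:suspended-path} to pin the colors of \(u\) and \(v\), and then argue on the triangle \(auv\) alone.

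\textbf{Step 1: colors of \(u\) and \(v\).} By \autoref{lem:suspended-path}(1), each of the \(k-1\) suspended paths at \(u\) forces weight \(1\) on its first edge. Hence
\(\mathsf{color}_w(u) = (k-1) + w(au) + w(uv)\) and \(\mathsf{color}_w(v) = (k-1) + w(av) + w(uv)\).
Properness on the edge \(uv\) forces these to differ, which gives \(w(au) \neq w(av)\). Both weights lie in \(\{0,1\}\), so \(w(au) + w(av) = 1\).

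\textbf{Step 2: excluding color \(k\) at \(a\).} By Step 1, one of \(u,v\) receives color \((k-1) + w(uv)\) and the other receives \(k + w(uv)\). Suppose \(w(uv) = 0\). Then the colors of \(u\) and \(v\) are \(k-1\) and \(k\), so the endpoint with color \(k\) is adjacent to \(a\). If \(\mathsf{color}_w(a) = k\) we get a conflict. Suppose instead \(w(uv) = 1\). Then the colors are \(k\) and \(k+1\), and again a neighbor of \(a\) has color \(k\). In either case \(\mathsf{color}_w(a) \neq k\), which proves part 1.

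\textbf{Step 3: realizing any \(c \neq k\) (part 2).} Set \(w(au) = 1\) and \(w(av) = 0\), so the gadget contributes exactly \(1\) to \(a\). Then choose \(w(uv)\) so that neither \(u\) nor \(v\) has color \(c\). With \(w(uv) = 0\), the colors of \(u\) and \(v\) are \(k\) and \(k-1\). With \(w(uv) = 1\), they are \(k+1\) and \(k\).
\begin{itemize}
\item If \(c \notin \{k-1, k+1\}\), either choice works, since \(c \neq k\).
\item If \(c = k-1\), take \(w(uv) = 1\).
\item If \(c = k+1\), take \(w(uv) = 0\).
\end{itemize}
Finally, weight each suspended path by \autoref{lem:suspended-path}(2): its first edge gets \(1\) and its second edge gets \(0\).

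This requires the color of \(u\) (and of \(v\)) to be at least \(2\). Their colors are at least \(k-1 \geq 2\) because \(k \geq 3\), which is exactly where that hypothesis is used. The path vertices then get colors \(1\) and \(0\), and these do not clash with their attachment vertex.

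\textbf{Main obstacle.} The only delicate point is checking all the conflicts in Step 3. The edge \(uv\) is fine because the two colors differ by \(1\). The edges \(au\) and \(av\) are handled by the case split. The suspended-path edges are handled by the bound \(k-1 \geq 2\). Everything else is routine bookkeeping.
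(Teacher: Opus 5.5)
Your proof is correct and follows the paper's argument essentially step for step. You use the same color formulas for \(u\) and \(v\) from \autoref{lem:suspended-path}(1), the same observation that one of them always has color \(k\), and the same completion of the suspended paths via \(k-1 \geq 2\); the only cosmetic differences are that the paper swaps the roles of \(u\) and \(v\), and it splits Part~2 into just the cases \(c > k\) (take \(w(uv)=0\)) and \(c < k\) (take \(w(uv)=1\)), which covers your three-way split.
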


\begin{proof}
By \autoref{lem:suspended-path}(1) the suspended paths contribute exactly
\(k-1\) at each of \(u\) and \(v\), so
\[
  \mathsf{color}_w(u) = k-1 + w(au) + w(uv)
  \qquad\text{and}\qquad
  \mathsf{color}_w(v) = k-1 + w(av) + w(uv).
\]
Properness on \(uv\) forces \(w(au) \neq w(av)\), and as both weights lie in
\(\{0,1\}\) their sum is \(1\). The colors of \(u\) and \(v\) are therefore
\(\{k-1, k\}\) if \(w(uv) = 0\) and \(\{k, k+1\}\) if \(w(uv) = 1\). In either
case some neighbor of \(a\) has color \(k\), which proves the first part.

For the second part set \(w(au) = 0\) and \(w(av) = 1\). If \(c > k\) set
\(w(uv) = 0\), giving \(u\) and \(v\) the colors \(k-1\) and \(k\); if \(c < k\)
set \(w(uv) = 1\), giving them \(k\) and \(k+1\). In both cases the two colors
are distinct and differ from \(c\). Complete every suspended path as in
\autoref{lem:suspended-path}(2); this is legitimate because \(u\) and \(v\) have
colors at least \(k-1 \geq 2\). Every edge of the gadget now has differently
colored endpoints, and the gadget contributes \(w(au) + w(av) = 1\) at \(a\).
\end{proof}

\paragraph{Construction.} Let \((G, \mathcal{L})\) be an instance of \lc with
\(n := |V(G)| \geq 1\). By the conventions of \autoref{sec:prelims} every list
is nonempty, and we may assume that the colors appearing in the lists form an
initial segment of the integers starting at \(3\). Let \(t\) be the largest
color; then \(L(v) \subseteq \{3, \ldots, t\}\) for every \(v\), \(t \geq 3\),
and \(t\) is polynomially bounded in the input size. Set
\[
  N := t + n - 1,
  \qquad
  L'(v) := \{\, N + r : r \in L(v) \,\},
  \qquad
  D(v) := [N, 2N] \setminus L'(v).
\]
Since \(L(v) \subseteq [t]\) and \(t \leq N\), we have \(L'(v) \subseteq [N+1,
2N] \subseteq [N, 2N]\), and
\begin{equation}
\label{eq:gadget-count}
  |D(v)| = (N+1) - |L(v)| .
\end{equation}

Let \(H\) be obtained from \(G\) by performing the following operations at every
vertex \(v \in V(G)\), all new vertices being distinct.
\begin{enumerate}
    \item For each \(k \in D(v)\), attach a Type-A \(k\)-disallowing gadget at
          \(v\). This is well defined because \(k \geq N \geq 3\).
    \item Attach \(t\) pendant vertices at \(v\).
    \item Attach \(n - 1 - d_G(v)\) further pendant vertices at \(v\).
    \item Attach \(|L(v)| - 1\) suspended paths at \(v\).
\end{enumerate}
By \eqref{eq:gadget-count} the numbers of Type-A gadgets and of suspended paths
attached at \(v\) sum to exactly \(N\). Each Type-A gadget has
\(2 + 4(k-1) = O(N)\) new vertices, and there are at most \(N+1\) of them at
each vertex, so \(H\) has \(O(nN^{2})\) vertices and is constructed in
polynomial time.

\begin{lemma}
\label{lem:admissible-colors}
For every proper \(\{0,1\}\)-weighting \(w\) of \(H\) and every \(v \in V(G)\),
\(\mathsf{color}_w(v) \in L'(v)\).
\end{lemma}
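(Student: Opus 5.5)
We will show that \(\mathsf{color}_w(v)\) lies in the interval \([N, 2N]\), and then use the Type-A gadgets to exclude every value of \(D(v)\) from that interval.

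\textbf{Step 1: forced contributions.} Fix a proper \(\{0,1\}\)-weighting \(w\) of \(H\) and a vertex \(v \in V(G)\). We split \(\mathsf{color}_w(v)\) by the kinds of edges incident with \(v\) in \(H\).
\begin{itemize}
  \item By \autoref{lem:type-a}(1), each of the \(|D(v)|\) Type-A gadgets at \(v\) contributes exactly \(1\).
  \item By \autoref{lem:suspended-path}(1), each of the \(|L(v)|-1\) suspended paths contributes exactly \(1\).
  \item By \eqref{eq:gadget-count}, these forced contributions add up to exactly \(|D(v)| + |L(v)| - 1 = N\).
  \item The remaining edges at \(v\) are the \(d_G(v)\) edges of \(G\) and the \(t + (n-1-d_G(v))\) pendant edges. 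There are \(t + n - 1 = N\) of them in total, each contributing \(0\) or \(1\).
\end{itemize}
Hence \(N \le \mathsf{color}_w(v) \le 2N\).

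\textbf{Step 2: exclusion of \(D(v)\).} For every \(k \in D(v)\), a Type-A \(k\)-disallowing gadget is attached at \(v\). By \autoref{lem:type-a}(1), \(\mathsf{color}_w(v) \neq k\). Therefore \(\mathsf{color}_w(v) \in [N,2N] \setminus D(v) = L'(v)\), using \(L'(v) \subseteq [N,2N]\).

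\textbf{Main point to check.} The only real obstacle is that the forced parts are genuinely forced in \emph{every} proper weighting, independent of \(v\)'s color and of the rest of \(H\).
\begin{itemize}
  \item \autoref{lem:suspended-path}(1) uses only the properness of the edge \(xy\). Its vertices \(x, y\) have no neighbours outside the path, so it applies regardless of what else is attached at \(v\).
  \item \autoref{lem:type-a}(1) uses only the suspended paths at \(u\) and \(v\) and the properness of the gadget's internal edge \(uv\). All gadget vertices other than \(v\) are new, with no other neighbours, so it applies directly.
\end{itemize}

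The pendant edges impose no constraint here; they only pad the free part up to exactly \(N\), so that the upper end of the interval is \(2N\). Note that we do not need the pendant vertices to be properly colored for this direction. Also, \(d_G(v) \le n-1\), so the padding count \(n-1-d_G(v)\) in step (3) of the construction is nonnegative.
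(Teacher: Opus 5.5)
Your proposal is correct and follows the paper's proof essentially step for step. The Type-A gadgets and suspended paths at \(v\) contribute exactly \(N\) by \eqref{eq:gadget-count}, and the remaining \(N\) edges at \(v\) place \(\mathsf{color}_w(v)\) in \([N,2N]\). Then \autoref{lem:type-a}(1) excludes every value in \(D(v)\), leaving \(L'(v)\).
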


\begin{proof}
Fix \(v \in V(G)\). By \autoref{lem:type-a}(1) each Type-A gadget at \(v\)
contributes exactly \(1\) to \(\mathsf{color}_w(v)\), and by
\autoref{lem:suspended-path}(1) so does each suspended path at \(v\); by
\eqref{eq:gadget-count} these contributions total \((N+1-|L(v)|) + (|L(v)|-1) =
N\). The remaining edges at \(v\) are the \(d_G(v)\) edges of \(G\) and the \(t
+ (n-1-d_G(v)) = N - d_G(v)\) pendant edges, which is \(N\) edges in all, each
of weight \(0\) or \(1\). Hence \(N \leq \mathsf{color}_w(v) \leq 2N\). For every
\(k \in [N,2N] \setminus L'(v)\) a Type-A \(k\)-disallowing gadget is attached
at \(v\), and \autoref{lem:type-a}(1) excludes the color \(k\). Therefore \(\mathsf{color}_w(v) \in L'(v)\).
\end{proof}

\begin{lemma}
\label{lem:equivalence}
\((G, \mathcal{L})\) is a yes-instance of \lc if and only if \(H\) is a
yes-instance of \zero.
\end{lemma}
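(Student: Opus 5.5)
The plan is to prove both directions directly, using the color window \([N,2N]\) set up in \autoref{lem:admissible-colors} and the gadget lemmas \autoref{lem:suspended-path} and \autoref{lem:type-a}. A list coloring \(c\) will correspond to a weighting in which each \(v \in V(G)\) gets color \(N + c(v)\).

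\emph{Backward direction.} Let \(w\) be a proper \(\{0,1\}\)-weighting of \(H\). For \(v \in V(G)\), \autoref{lem:admissible-colors} gives \(\mathsf{color}_w(v) \in L'(v)\). Set \(c(v) := \mathsf{color}_w(v) - N\); then \(c(v) \in L(v)\). Since \(G\) is a subgraph of \(H\) and \(w\) is proper on \(H\), adjacent vertices \(u,v\) of \(G\) satisfy \(\mathsf{color}_w(u) \neq \mathsf{color}_w(v)\), hence \(c(u) \neq c(v)\). So \(c\) is a proper list coloring.

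\emph{Forward direction.} Let \(c\) be a proper list coloring of \((G,\mathcal{L})\). I would build \(w\) as follows.
\begin{enumerate}
    \item Give every edge of \(G\) weight \(0\).
    \item At each \(v\), the \(N+1-|L(v)|\) Type-A gadgets and the \(|L(v)|-1\) suspended paths each contribute exactly \(1\), for a total of \(N\). Choose exactly \(c(v)\) of the \(t + n - 1 - d_G(v) \geq t \geq c(v)\) pendant edges at \(v\) and give them weight \(1\); give the other pendant edges weight \(0\). This makes \(\mathsf{color}_w(v) = N + c(v) \in L'(v)\).
    \item Weight each suspended path at \(v\) as in \autoref{lem:suspended-path}(2). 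This applies because \(N + c(v) \geq 2\).
    \item Since \(N + c(v) \notin D(v)\), the color of \(v\) differs from every \(k\) for which a Type-A \(k\)-gadget is attached at \(v\). So each gadget can be weighted as in \autoref{lem:type-a}(2), contributing exactly \(1\) at \(v\).
\end{enumerate}

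Properness is then checked edge by edge.
\begin{itemize}
    \item Edges of \(G\): the endpoints have colors \(N+c(u) \neq N+c(v)\).
    \item Pendant edges: the pendant vertex has color \(0\) or \(1\), while \(v\) has color at least \(N \geq 3\).
    \item Edges inside suspended paths and gadgets: these are conflict-free by the cited lemmas. The gadget vertices other than \(v\) have no neighbors outside their gadget, so there are no other edges to check.
\end{itemize}

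The main thing to get right is the bookkeeping in the forward direction. The contributions at \(v\) must add up exactly: gadgets plus suspended paths give \(N\), and the pendant edges supply the extra \(c(v)\). There must also be enough pendant edges, which holds because \(c(v) \leq t\). Finally, the gadget and suspended-path lemmas must be invoked with the final color of \(v\), which is legitimate because each gadget's contribution to \(v\) is fixed at \(1\) independently of how its internal edges are weighted.
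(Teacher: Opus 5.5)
Your proposal is correct and follows the paper's proof essentially step for step. Both arguments use the same weighting: weight \(0\) on \(E(G)\), exactly \(c(v)\) pendant edges of weight \(1\) at each \(v\), and every gadget and suspended path completed via \autoref{lem:type-a}(2) and \autoref{lem:suspended-path}(2) with the prescribed color \(N + c(v)\). Both also derive the converse from \autoref{lem:admissible-colors}. The only cosmetic difference is that you take the \(c(v)\) weight-\(1\) edges from all \(t + n - 1 - d_G(v)\) pendant edges rather than only the \(t\) pendant edges from step~2, which changes nothing.
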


\begin{proof}
Suppose first that \(c\) is a proper list coloring of \((G, \mathcal{L})\).
Define a weighting \(w\) of \(H\) as follows. Give every edge of \(G\) weight
\(0\). At each \(v \in V(G)\), give weight \(1\) to exactly \(c(v)\) of the
\(t\) pendant edges from step 2---possible since \(c(v) \in L(v) \subseteq
[t]\)---and weight \(0\) to all other pendant edges at \(v\). Complete every
suspended path at \(v\) as in \autoref{lem:suspended-path}(2). For every Type-A
gadget at \(v\), apply \autoref{lem:type-a}(2) with the prescribed color \(N +
c(v)\); this is legitimate because \(N + c(v) \in L'(v)\) while the index \(k\)
of every gadget at \(v\) lies in \(D(v)\), which is disjoint from \(L'(v)\).
Every gadget and every suspended path then contributes exactly \(1\) at \(v\),
so by \eqref{eq:gadget-count},
\[
  \mathsf{color}_w(v) = (N+1-|L(v)|) + (|L(v)|-1) + c(v) = N + c(v),
\]
which is the color prescribed to the gadgets.

We check properness on every edge of \(H\). On an edge \(uv \in E(G)\) the
colors are \(N + c(u) \neq N + c(v)\). A pendant vertex has color \(0\) or
\(1\), whereas its neighbor in \(V(G)\) has color at least \(N+1 \geq 4\). The
edges of every suspended path at \(v\) are proper by
\autoref{lem:suspended-path}(2), since \(\mathsf{color}_w(v) \geq 2\), and the
edges of every Type-A gadget are proper by \autoref{lem:type-a}(2), since the
actual color of \(v\) is the prescribed one. Hence \(w\) is proper.

Conversely, let \(w\) be a proper weighting of \(H\), and set \(c(v) := \mathsf{color}_w(v) - N\) for \(v \in V(G)\). By \autoref{lem:admissible-colors}, \(\mathsf{color}_w(v) \in L'(v)\), so \(c(v) \in L(v)\). For every edge \(uv \in E(G)
\subseteq E(H)\), properness of \(w\) gives \(\mathsf{color}_w(u) \neq \mathsf{color}_w(v)\), hence \(c(u) \neq c(v)\). So \(c\) is a proper list coloring.
\end{proof}

\begin{lemma}
\label{lem:td}
If \(G\) has a vertex cover of size \(k\), then \(\mathsf{td}(H) \leq k + 5\).
\end{lemma}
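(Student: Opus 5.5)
We construct an explicit elimination forest of \(H\). Let \(S\) be a vertex cover of \(G\) with \(|S| = k\), and let \(I := V(G) \setminus S\). The idea is to put \(S\) on a top path, hang each vertex of \(I\) below it, and then hang every attached structure below the vertex of \(V(G)\) it is attached to, arranging each structure with small depth. Since every edge of \(H\) either lies in \(G\) or lies inside a structure attached at a single vertex \(v \in V(G)\), it suffices to make every such structure sit in the subtree of \(v\).

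\textbf{The top part.} As in \autoref{lem:vc-to-td}, we make \(s_1, \ldots, s_k\) a rooted path, and make every vertex of \(I\) a child of \(s_k\). The edges of \(G\) then have comparable endpoints. Every \(v \in V(G)\) now has depth at most \(k+1\); vertices of \(S\) have depth at most \(k\).

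\textbf{The attachments at \(v\).} Each attachment at \(v\) is hung as a subtree whose root is a child of \(v\). A pendant vertex is a single child, adding depth \(1\). For a suspended path \(vxy\), we make \(x\) a child of \(v\) and \(y\) a child of \(x\), adding depth \(2\). For a Type-A gadget with triangle \(vuw\), we make \(u\) a child of \(v\) and \(w\) a child of \(u\). Every suspended path \(u x y\) at \(u\) is hung below \(w\) as \(x\) a child of \(w\) and \(y\) a child of \(x\); the same is done for the suspended paths at \(w\).

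We then check comparability within the gadget:
\begin{itemize}
\item[] \(vu\), \(vw\) and \(uw\) are comparable because \(v, u, w\) lie on a chain;
\item[] \(ux\) is comparable since \(u\) is an ancestor of \(w\), which is an ancestor of \(x\);
\item[] \(wx\) is comparable directly, and \(xy\) is comparable as parent and child.
\end{itemize}
The gadget therefore adds depth \(4\) below \(v\) (namely \(u, w, x, y\)).

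\textbf{Counting.} Distinct attachments share no vertices and have no edges between them, so they may be siblings. The height of the forest is at most \((k+1) + 4 = k+5\). This proves \(\mathsf{td}(H) \le k+5\).

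The only step needing care is the Type-A gadget. The key trick is to place the suspended paths of \(u\) below \(w\) rather than directly below \(u\), since \(u\) remains an ancestor of everything below \(w\); this keeps the gadget's depth at \(4\) and avoids any dependence on the index \(k\) of the gadget. Everything else is routine.
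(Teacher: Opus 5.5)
Your proof is correct and follows essentially the same route as the paper: you extend the elimination forest of \autoref{lem:vc-to-td} by hanging each attachment below its vertex, with the Type-A triangle as the chain \(v, u, w\) and each suspended path as a two-vertex chain, for an added depth of at most \(4\). The one difference is that you hang the suspended paths at \(u\) below \(w\), whereas the paper hangs them directly below \(u\); this is harmless but is not the ``key trick'' you describe, because the depth \(4\) comes from the suspended paths at \(w\), and those at \(u\) would reach only depth \(3\) if hung below \(u\).
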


\begin{proof}
Let \(F\) be the elimination forest of \(G\) of height at most \(k+1\) given by
\autoref{lem:vc-to-td}. We extend it to an elimination forest of \(H\). Make
every pendant vertex a child of its attachment vertex. For a suspended path
\(vxy\), make \(x\) a child of \(v\) and \(y\) a child of \(x\). For a Type-A
gadget with triangle \(vuv'\) attached at \(v\), make \(u\) a child of \(v\) and
\(v'\) a child of \(u\); for each suspended path at \(u\) or at \(v'\), place
its two vertices as a chain below its attachment vertex. The endpoints of every
edge of \(H\) are then comparable: the edges of \(G\) by the choice of \(F\),
and every added edge joins a vertex to a descendant. Distinct attachments occupy
separate branches, so the height increases by at most the depth of a single
attachment, which is \(4\) for a Type-A gadget (the vertices \(u\), \(v'\), and
the two vertices of a suspended path at \(v'\)) and at most \(2\) otherwise.
Hence \(\mathsf{td}(H) \leq (k+1) + 4 = k+5\).
\end{proof}

\begin{theorem}
\label{thm:zero-td}
\zero is \WoneHard parameterized by the treedepth of the input graph.
\end{theorem}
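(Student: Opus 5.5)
The plan is to assemble the construction above into a parameterized reduction from \lc parameterized by the vertex cover number, which is \WoneHard by~\cite{fiala2011parameterized}. Given an instance \((G,\mathcal{L})\) together with its parameter \(k\), the vertex cover number of \(G\), we first normalize the lists as in \autoref{sec:prelims}. The relabeling keeps the colors \(\{3,\ldots,t\}\) with \(t\) polynomially bounded, since only the at most \(\sum_v |L(v)|\) distinct colors occurring in the lists matter. Empty lists are handled by outputting a fixed no-instance of \zero; the single edge \(K_2\) is one, since its two endpoints always get equal colors. We then output the graph \(H\) built above, with new parameter \(k+5\).

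Three things must be checked.
\begin{enumerate}
\item \emph{Running time.} \(H\) has \(O(nN^{2})\) vertices with \(N=t+n-1\) polynomial in the input size, so it is constructed in polynomial time. Step 3 of the construction is well defined because \(d_G(v)\leq n-1\).
\item \emph{Equivalence.} \((G,\mathcal{L})\) is a yes-instance if and only if \(H\) is a yes-instance of \zero. This is exactly \autoref{lem:equivalence}.
\item \emph{Parameter bound.} \(\mathsf{td}(H)\leq k+5\) by \autoref{lem:td}. The reduction only needs \(k\) to be the vertex cover number, and it need not compute a cover: \autoref{lem:td} is an existential bound, applied to a minimum vertex cover of \(G\).
\end{enumerate}
Since the new parameter depends only on the old one, this is a valid parameterized reduction.

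Together these give the theorem: an \FPT algorithm for \zero parameterized by treedepth would yield an \FPT algorithm for \lc parameterized by vertex cover number. The only subtle point I expect is the bookkeeping that makes the framework formally correct. In particular, we must confirm that the hardness of \lc holds with colors that are polynomially bounded integers, which the relabeling remark ensures. Everything substantive has already been established in \autoref{lem:admissible-colors}, \autoref{lem:equivalence} and \autoref{lem:td}.
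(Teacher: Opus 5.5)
Your proposal is correct and follows essentially the same route as the paper: a parameterized reduction from \lc parameterized by the vertex cover number, with correctness from \autoref{lem:equivalence} and the parameter bound \(\mathsf{td}(H) \leq k+5\) from \autoref{lem:td}. Your extra bookkeeping is sound but not needed by the paper, which simply assumes nonempty lists. Specifically, you send empty-list instances to the no-instance \(K_2\) and note that \(t\) stays polynomially bounded after relabeling.
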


\begin{proof}
Given an instance \((G, \mathcal{L})\) of \lc with a vertex cover of size
\(k\), the construction above produces in polynomial time a graph \(H\) with
\(\mathsf{td}(H) \leq k+5\) by \autoref{lem:td}, and \(H\) is a yes-instance of
\zero if and only if \((G, \mathcal{L})\) is a yes-instance of \lc by
\autoref{lem:equivalence}. This is a parameterized reduction from \lc
parameterized by the vertex cover number, which is
\WoneHard~\cite{fiala2011parameterized}, to \zero parameterized by treedepth.
\end{proof}

\section{\one}
\label{sec:one}

We now turn to the weight set \(\{1,2\}\). Recall from \autoref{sec:prelims} the
definition of \(\mathsf{color}_w\) and of a proper weighting, and recall from the
second part of \autoref{obs:color-counts} that the number of edges of weight
\(2\) incident with a vertex \(v\) is \(\mathsf{color}_w(v)-d_G(v)\). In this
section we prove that \one has a polynomial kernel parameterized by the size
\(k\) of a given vertex cover (\autoref{cor:12-polynomial-kernel}), and in
\autoref{sec:one:td} that it is \WoneHard when parameterized by treedepth
(\autoref{thm:one-treedepth}). We obtain the kernel in two steps: first a
polynomial compression into the following variant of \one, in which vertices
carry preassigned weights, and then a many-one reduction back to \one.

\defproblem{\vpone}%
{A graph \(H\) and a function \(a:V(H)\to\mathbb{Z}_{\geq 0}\).}%
{Is there an edge-weighting \(w:E(H)\to\{1,2\}\) such that
\(c_{a,w}(v):=a(v)+\mathsf{color}_w(v)\) is a proper vertex coloring of \(H\)?}

We call such a \(w\) a \emph{proper edge-weight function for \((H,a)\)}. When
\(a\equiv 0\) this is \one.

\paragraph{Marking.} Let \((G,S)\) be an instance of \one with a vertex cover
\(S\) of size \(k\), and let \(I:=V(G)\setminus S\); thus \(I\) is an
independent set. We use the marking procedure of \autoref{sec:zero}. Initially,
all vertices are unmarked. While there exists an unmarked vertex \(v\in S\) with
at most \(5k\) unmarked neighbors in \(I\), mark \(v\) and all its neighbors in
\(I\). Let \(S_R\) and \(I_R\) denote the marked vertices in \(S\) and \(I\),
respectively, and let \(S_B=S\setminus S_R\) and \(I_B=I\setminus I_R\). Each
iteration marks at most \(5k\) previously unmarked vertices of \(I\), and there
are at most \(k\) iterations. Consequently, \(|I_R|\leq 5k^2\). Moreover, there
are no edges between \(S_R\) and \(I_B\), and every vertex \(v\in S_B\) has at
least \(5k+1\) neighbors in \(I_B\).

By \autoref{lem:12bounded} and the second part of \autoref{obs:color-counts}, if
\(G\) is a yes-instance then some proper edge-weight function gives at most
\(2k^2\) edges of weight \(2\) at every vertex; in particular, at most \(2k^2\)
of the edges joining any \(v\in S_B\) to \(I_B\) receive weight \(2\).

\paragraph{Reduction rule.} For each vertex \(v\in S_B\), choose an arbitrary
set \(Y_v\subseteq N_G(v)\cap I_B\) of size \(\min\{2k^2,|N_G(v)\cap I_B|\}\).
Make all these choices before deleting any vertices. Let \(Y=\bigcup_{v\in
S_B}Y_v\) and \(H=G[S\cup I_R\cup Y]\). For every retained vertex \(v\in V(H)\),
define its preassigned vertex weight as \(a(v):=d_G(v)-d_H(v)\). Replace \(G\)
by \((H,a)\). The value \(a(v)\) counts all deleted edges incident with \(v\).
In particular, \(a(v)=0\) for every \(v\in S_R\cup I_R\cup Y\).

\begin{lemma}
\label{lem:vertex-weight-reduction}
The graph \(G\) admits a proper \(\{1,2\}\)-edge-weight function if and only if
\((H,a)\) admits a proper edge-weight function.
\end{lemma}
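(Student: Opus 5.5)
The plan is to prove the two directions separately. One fact carries most of the properness checks: every vertex \(v\in S_B\) has \(d_G(v)\geq 5k+1\), so any \(\{1,2\}\)-weighting of \(G\) gives it color at least \(5k+1\) by \autoref{obs:color-counts}. In \((H,a)\) the quantity \(a(v)+d_H(v)\) equals \(d_G(v)\), so there too \(c_{a,w}(v)\geq d_G(v)\geq 5k+1\). Every vertex of \(I\) has degree at most \(k\) in \(G\), hence color at most \(2k\) in either setting (with \(a=0\) on \(I_R\cup Y\)). So edges between \(S_B\) and \(I\) never cause conflicts, and, unlike \autoref{lem:mincolor} for \zero, no separate color-raising lemma is needed.

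\emph{Backward direction.} Let \(w_1\) be a proper edge-weight function for \((H,a)\). Extend it to \(G\) by giving weight \(1\) to every deleted edge. Each retained vertex \(v\) then has \(\mathsf{color}_w(v)=\mathsf{color}_{w_1}(v)+(d_G(v)-d_H(v))=c_{a,w_1}(v)\), so all edges of \(H\) stay proper. A deleted vertex \(u\in I_B\setminus Y\) has all its neighbors in \(S_B\), because there are no \(S_R\)–\(I_B\) edges. Its color is \(d_G(u)\leq k\), while each of its neighbors has color at least \(5k+1\). Since \(I\) is independent, \(w\) is proper.

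\emph{Forward direction.} By \autoref{lem:12bounded}, fix a proper \(w\) on \(G\) in which every vertex has at most \(2k^2\) incident edges of weight \(2\). For \(v\in S_B\), let \(t_v\) be the number of weight-\(2\) edges from \(v\) to deleted vertices, and let \(r_v\) be the number of weight-\(2\) edges from \(v\) to \(Y_v\). If \(t_v>0\), then \(v\) has a deleted neighbor in \(I_B\), so \(|Y_v|=2k^2\). The bound gives \(r_v+t_v\leq 2k^2=|Y_v|\), so at least \(t_v\) edges from \(v\) to \(Y_v\) have weight \(1\). Raise \(t_v\) of them to weight \(2\); this is done independently for each \(v\), on distinct edges. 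Restrict the result to \(H\) to obtain \(w_1\). Then \(c_{a,w_1}(v)=\mathsf{color}_w(v)\) for every \(v\in S\): the \(a(v)\) term counts each deleted edge as \(1\), and the lost \(t_v\) is restored on the edges to \(Y_v\). For \(S_R\) nothing changes, since it has no edges to \(I_B\). Vertices of \(I_R\) keep their colors, because all their edges go to \(S\) and, being outside \(Y\), none of those edges was modified. The only changed colors are on \(Y\), where they remain at most \(2k\), against neighbors in \(S_B\) of color at least \(5k+1\). Hence \(w_1\) is proper.

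The main obstacle is the exchange step in the forward direction. It needs the counting \(r_v+t_v\leq 2k^2=|Y_v|\) whenever \(t_v>0\), which is exactly why \(Y_v\) was chosen of size \(2k^2\) using the bound of \autoref{lem:12bounded}. It also needs the observation that the modified vertices in \(Y\) cannot clash with their high-degree neighbors in \(S_B\).
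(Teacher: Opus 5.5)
Your proof is correct and follows essentially the same argument as the paper. The backward direction is identical. The forward direction likewise preserves the number of weight-\(2\) edges at each \(v\in S_B\) by moving them into \(Y_v\), and uses the \(5k+1\) versus \(2k\) gap for all \(S_B\)--\(Y\) edges.

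The only difference is bookkeeping. You keep \(w\) on the retained edges and raise \(t_v\) weight-\(1\) edges of \(Y_v\), which is the swap from \autoref{lem:rrsafe}. The paper instead lets \(t_v\) count all weight-\(2\) edges from \(v\) to \(I_B\) and reassigns every edge from \(v\) to \(Y\). That needs only \(t_v\le\min\{2k^2,|N_G(v)\cap I_B|\}=|Y_v|\) and avoids your case split on \(t_v>0\).
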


\begin{proof}
Suppose first that \(G\) is a yes-instance. Choose a proper edge-weight function
\(w\) satisfying \autoref{lem:12bounded}. For every \(v\in S_B\), let \(t_v\) be
the number of edges joining \(v\) to \(I_B\) that receive weight \(2\). Then
\(t_v\leq\min\{2k^2,|N_G(v)\cap I_B|\}=|Y_v|\).

Construct an edge-weight function \(x:E(H)\to\{1,2\}\) as follows. Keep the
weights of all edges within \(S\) and all edges between \(S\) and \(I_R\). For
each \(v\in S_B\), assign weight \(2\) to exactly \(t_v\) edges joining \(v\) to
\(Y_v\), and assign weight \(1\) to all remaining edges joining \(v\) to \(Y\).
These choices can be made independently for different vertices of \(S_B\), since
they are incident with distinct edges.

For each vertex \(v\in S\), the number of incident edges of weight \(2\) is
preserved. Therefore, \(a(v)+\mathsf{color}_x(v)=\mathsf{color}_w(v)\). Every vertex
in \(I_R\) retains all its incident edges and their weights, so its color is
also unchanged. Thus, all edges within \(S\) and all edges between \(S\) and
\(I_R\) remain proper.

Consider an edge \(vu\) with \(v\in S_B\) and \(u\in Y\). Its endpoints satisfy
\[
    a(v)+\mathsf{color}_x(v)
    =\mathsf{color}_w(v)
    \geq d_G(v)
    \geq 5k+1
    >2k
    \geq \mathsf{color}_x(u)+a(u),
\]
where the last inequality follows from \(a(u)=0\) and \(d_H(u)\leq k\). Thus,
every such edge is also proper, and \(x\) is a solution to \((H,a)\).

Conversely, suppose that \(x\) is a proper edge-weight function for \((H,a)\).
Extend \(x\) to an edge-weight function \(w'\) on \(G\) by assigning weight
\(1\) to every deleted edge. For every retained vertex \(v\), exactly \(a(v)\)
incident edges were deleted. Hence, \(\mathsf{color}_{w'}(v)=\mathsf{color}_x(v)+a(v)\). Since \(x\) is a proper edge-weight function for \((H,a)\),
no conflict occurs on any edge of \(H\).

Every deleted vertex \(u\) belongs to \(I_B\setminus Y\), and all its neighbors
belong to \(S_B\). All edges incident with \(u\) receive weight \(1\), so \(\mathsf{color}_{w'}(u)=d_G(u)\leq k\). For every neighbor \(v\in S_B\) of \(u\), we have
\(\mathsf{color}_{w'}(v)\geq d_G(v)\geq 5k+1\). Thus, every deleted edge is also
proper. Hence, \(w'\) is a proper edge-weight function on \(G\).
\end{proof}

To show that the reduced instance is a polynomial compression we need a bound on
its size. By construction, \(|Y|\leq\sum_{v\in S_B}|Y_v|\leq 2k^3\).
Consequently, \(|V(H)|\leq k+5k^2+2k^3=O(k^3)\). Since \(S\) remains a vertex
cover of \(H\), we also have \(|E(H)|\leq\binom{k}{2}+k(|I_R|+|Y|)=O(k^4)\).

Note that these bounds alone do not establish a polynomial compression: the
preassigned vertex weights \(a(v)=d_G(v)-d_H(v)\) may depend on the size of the
original graph. We next replace them with polynomially bounded values while
preserving all proper edge-weight functions.

Let \(\Delta_H:=\max_{v\in V(H)}d_H(v)\) be the maximum degree of the reduced
graph \(H\). Since \(H\) is simple, \(\Delta_H\leq |V(H)|-1=O(k^3)\). For an
edge-weight function \(x:E(H)\to\{1,2\}\), write \(s_x(v):=\mathsf{color}_{x}(v)\),
so that the color of a vertex \(v\in V(H)\) is \(c_{a,x}(v)=a(v)+s_x(v)\). Each
edge has a weight at most \(2\), so \(0\leq s_x(v)\leq 2\Delta_H\).
Consequently, \(|s_x(u)-s_x(v)|\leq 2\Delta_H\) for all vertices \(u,v\).

An edge \(uv\) is a \emph{conflict} when \(c_{a,x}(u)=c_{a,x}(v)\), that is,
when \(a(u)-a(v)=s_x(v)-s_x(u)\). Thus, if \(|a(u)-a(v)|>2\Delta_H\), the edge
\(uv\) cannot be a conflict under any edge-weight function. We therefore
preserve small differences between preassigned vertex weights exactly, while
ensuring that larger differences remain larger than \(2\Delta_H\).

Let \(\alpha_1<\alpha_2<\cdots<\alpha_q\) be the distinct values of \(a\). Every
vertex of \(S_R\cup I_R\cup Y\) has preassigned weight \(0\), so
\(q\leq|S_B|+1\leq k+1\). Define \(\beta_1:=0\) and
\[
    \beta_{i+1}:=\beta_i+
    \min\{\alpha_{i+1}-\alpha_i,\,2\Delta_H+1\}
    \qquad (1\leq i<q).
\]
For every vertex \(v\) with \(a(v)=\alpha_i\), set \(\widehat a(v):=\beta_i\).

\begin{lemma}
\label{lem:vertex-weight-compression}
For every edge-weight function \(x:E(H)\to\{1,2\}\), the coloring \(c_{a,x}\) is
proper if and only if \(c_{\widehat a,x}\) is proper.
\end{lemma}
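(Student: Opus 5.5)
The plan is to reduce the statement to a per-edge claim. Fix \(x\) and an edge \(uv\in E(H)\), and write \(D:=s_x(v)-s_x(u)\), so that \(|D|\leq 2\Delta_H\). The edge \(uv\) is a conflict under \(a\) exactly when \(a(u)-a(v)=D\), and under \(\widehat a\) exactly when \(\widehat a(u)-\widehat a(v)=D\). It therefore suffices to show that for all vertices \(u,v\) and every integer \(D\) with \(|D|\leq 2\Delta_H\), we have \(a(u)-a(v)=D\) if and only if \(\widehat a(u)-\widehat a(v)=D\). Properness of the two colorings is then equivalent edge by edge.

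To prove the per-vertex-pair claim, suppose \(a(u)=\alpha_i\) and \(a(v)=\alpha_j\), and by symmetry (swapping \(u,v\) and negating \(D\)) assume \(i\geq j\). Then
\[
a(u)-a(v)=\sum_{\ell=j}^{i-1}(\alpha_{\ell+1}-\alpha_\ell)
\quad\text{and}\quad
\widehat a(u)-\widehat a(v)=\sum_{\ell=j}^{i-1}\min\{\alpha_{\ell+1}-\alpha_\ell,\,2\Delta_H+1\}.
\]
Both differences are nonnegative, and \(\widehat a(u)-\widehat a(v)\leq a(u)-a(v)\). If \(i=j\), both differences are \(0\). If every gap \(\alpha_{\ell+1}-\alpha_\ell\) with \(j\leq\ell<i\) is at most \(2\Delta_H\), no truncation occurs and the two differences coincide, so the equivalence is trivial.

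Otherwise some gap in the range exceeds \(2\Delta_H\). That gap contributes more than \(2\Delta_H\) to \(a(u)-a(v)\), and it contributes exactly \(2\Delta_H+1\) to \(\widehat a(u)-\widehat a(v)\). Since all other summands are nonnegative, both differences are at least \(2\Delta_H+1>|D|\), so neither can equal \(D\). The equivalence therefore holds in this case too, which completes the argument.

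The only step needing care is this truncated case: one has to check that truncation never moves a difference from above the threshold \(2\Delta_H\) down into the window \([-2\Delta_H,2\Delta_H]\). This holds because each truncated gap is replaced by exactly \(2\Delta_H+1\), which already exceeds the window on its own. I also need the bound \(|D|\leq 2\Delta_H\), which follows from \(0\leq s_x(\cdot)\leq 2\Delta_H\), as noted before the lemma.
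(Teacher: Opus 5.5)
Your proof is correct and follows essentially the same route as the paper. You reduce properness to a per-edge comparison of \(a(u)-a(v)\) (respectively \(\widehat a(u)-\widehat a(v)\)) with \(s_x(v)-s_x(u)\), where \(|s_x(v)-s_x(u)|\leq 2\Delta_H\). You then use the telescoping gap sums to show that differences are preserved exactly when no intervening gap is truncated, and that both differences are at least \(2\Delta_H+1\) when some gap is truncated. Splitting cases on whether a gap is truncated, rather than on the size of \(\alpha_j-\alpha_i\) as the paper does, is a slightly cleaner way to organize the same argument.
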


\begin{proof}
Consider two distinct preassigned vertex weights \(\alpha_i<\alpha_j\). If
\(\alpha_j-\alpha_i\leq 2\Delta_H\), every consecutive gap between them is at
most \(2\Delta_H\). None of these gaps is changed, so
\(\beta_j-\beta_i=\alpha_j-\alpha_i\).

If \(\alpha_j-\alpha_i>2\Delta_H\) and no intervening gap is shortened, the
difference is unchanged. Otherwise, at least one intervening gap is replaced by
\(2\Delta_H+1\), so \(\beta_j-\beta_i\geq 2\Delta_H+1\). The transformation also
preserves the order of the distinct values. Thus, every signed difference of
absolute value at most \(2\Delta_H\) is preserved exactly, and every larger
absolute difference remains larger than \(2\Delta_H\). Equal values remain equal
as well.

Fix an edge-weight function \(x\) and an edge \(uv\). If \(|a(u)-a(v)|\leq
2\Delta_H\), the difference is preserved exactly, and hence
\[
    a(u)+s_x(u)=a(v)+s_x(v)
    \quad\Longleftrightarrow\quad
    \widehat a(u)+s_x(u)=\widehat a(v)+s_x(v).
\]
If \(|a(u)-a(v)|>2\Delta_H\), neither equality is possible, since
\(|s_x(u)-s_x(v)|\leq 2\Delta_H\). Therefore, the set of conflicting edges is
identical under \(a\) and \(\widehat a\).
\end{proof}

\begin{theorem}
\label{thm:12-compression}
\one, supplied with a vertex cover of size \(k\), admits a polynomial
compression into the variant with preassigned vertex weights. The resulting
instance has \(O(k^3)\) vertices and encoding length \(O(k^4\log(k+1))\).
\end{theorem}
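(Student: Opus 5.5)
The compression is the composition of the two steps already set up. Given \((G,S)\) with \(|S|=k\), we first run the marking procedure and apply the reduction rule to obtain \((H,a)\). Then we replace \(a\) by the compressed weights \(\widehat a\) and output \((H,\widehat a)\).

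Correctness follows by chaining two equivalences. \autoref{lem:vertex-weight-reduction} says that \(G\) is a yes-instance of \one if and only if \((H,a)\) admits a proper edge-weight function. \autoref{lem:vertex-weight-compression} says that, for every fixed \(x\), the coloring \(c_{a,x}\) is proper exactly when \(c_{\widehat a,x}\) is proper. Hence \((H,a)\) and \((H,\widehat a)\) have the same set of solutions, and in particular the same answer.

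Next we check that everything runs in polynomial time. Marking takes at most \(k\) rounds, each scanning neighborhoods. The choice of the sets \(Y_v\), the construction of \(H\), and the computation of \(a(v)=d_G(v)-d_H(v)\) are clearly polynomial. To compute \(\widehat a\), we sort the at most \(k+1\) distinct values of \(a\) and take prefix sums of the truncated gaps.

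Finally we bound the encoding length. The bounds \(|V(H)|=O(k^3)\) and \(|E(H)|=O(k^4)\) are already recorded before \autoref{lem:vertex-weight-compression}. Writing \(H\) as an edge list with vertex names of \(O(\log(k+1))\) bits gives \(O(k^4\log(k+1))\) bits.

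For the weights, we have \(\beta_1=0\), and each of the at most \(q-1\leq k\) gaps is at most \(2\Delta_H+1=O(k^3)\). So every \(\widehat a(v)\le k(2\Delta_H+1)=O(k^4)\), which takes \(O(\log(k+1))\) bits. Over \(O(k^3)\) vertices this totals \(O(k^3\log(k+1))\), dominated by the edge list.

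The one point needing care is this last step. The original weights \(a(v)\) can be as large as \(n\), so they must not appear in the output. It is exactly the bound \(q\le|S_B|+1\) together with the gap truncation that makes \(\max\widehat a\) polynomial in \(k\). I would state this explicitly, noting that \(q\le k+1\) holds because \(a\) vanishes outside \(S_B\).
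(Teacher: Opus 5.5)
Your proposal is correct and follows the paper's proof essentially step for step. Equivalence comes from chaining \autoref{lem:vertex-weight-reduction} with \autoref{lem:vertex-weight-compression}; the size bound comes from bounding the edge list by \(O(k^4\log(k+1))\) bits and each compressed weight by \((q-1)(2\Delta_H+1)=O(k^4)\) using \(q\le k+1\). The paper additionally notes that a solution of \((H,\widehat a)\) lifts back to \(G\) by giving weight \(1\) to every deleted edge, but this is a bonus rather than something the compression itself requires.
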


\begin{proof}
By \autoref{lem:vertex-weight-reduction}, \(G\) is a yes-instance if and only if
\((H,a)\) is; by \autoref{lem:vertex-weight-compression}, \((H,a)\) is a
yes-instance if and only if \((H,\widehat a)\) is, since the two instances have
the same graph and the same conflicting edges for every \(x\). Moreover a proper
edge-weight function \(x\) for \((H,\widehat a)\) is one for \((H,a)\) by the
same lemma, and extends to \(G\) by giving weight \(1\) to every deleted edge,
exactly as in the proof of \autoref{lem:vertex-weight-reduction}; so a solution
of the compressed instance yields one of the original.

For the size, the graph \(H\) has \(O(k^3)\) vertices and \(O(k^4)\) edges, so
an edge-list representation requires \(O(k^4\log(k+1))\) bits. There are \(q\leq
k+1\) distinct preassigned vertex weights, as shown above, and each new
consecutive gap is at most \(2\Delta_H+1\). Hence, \( \max_{v\in V(H)}\widehat
a(v) \leq(q-1)(2\Delta_H+1) =O(k^4).\) Each compressed vertex weight therefore
requires \(O(\log(k+1))\) bits, and all of them together \(O(k^3\log(k+1))\)
bits. The total encoding length is therefore \(O(k^4\log(k+1))\). All steps of
the construction can be performed in polynomial time.
\end{proof}

\begin{corollary}
\label{cor:12-polynomial-kernel}
\one, parameterized by vertex cover number \(k\), admits a polynomial kernel
without preassigned vertex weights.
\end{corollary}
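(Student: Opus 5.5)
The plan is to use the standard route from a polynomial compression to a polynomial kernel: compose \autoref{thm:12-compression} with a polynomial-time many-one reduction from \vpone back to \one, and then bound the parameter of the output.

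First, I would check that \vpone lies in \textsf{NP}. A certificate is an edge-weighting \(x:E(H)\to\{1,2\}\). Its size is linear in \(|E(H)|\). Checking that \(c_{a,x}\) is proper takes time polynomial in the encoding of \((H,a)\). Second, \one is \textsf{NP}-complete~\cite{dudek2011complexity}. Hence there is a polynomial-time Karp reduction \(\rho\) from \vpone to \one.

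Given an instance \((G,S)\) of \one with \(|S|=k\), I would proceed in three steps.
\begin{enumerate}
  \item Apply \autoref{thm:12-compression} to obtain the equivalent instance \((H,\widehat a)\). Its encoding length is \(O(k^4\log(k+1))\).
  \item Apply \(\rho\) to \((H,\widehat a)\). This yields a graph \(G'\) that is a yes-instance of \one if and only if \((H,\widehat a)\) is. So \(G'\) is a yes-instance if and only if \(G\) is.
  \item Output \(G'\) with parameter \(k':=|V(G')|\), together with the trivial vertex cover \(V(G')\) if a cover must be supplied.
\end{enumerate}
Since \(\rho\) runs in polynomial time on an input of size \(O(k^4\log(k+1))\), its output has size \(|G'|\le (k^4\log(k+1))^{O(1)}=k^{O(1)}\). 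The vertex cover number of \(G'\) is at most \(|V(G')|\), so \(|G'|+k'\) is polynomial in \(k\). The whole procedure runs in polynomial time, so it is a polynomial kernel for \one without preassigned vertex weights.

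The main obstacle is that a kernel has to be an instance of the same parameterized problem, with the parameter itself bounded. The subtlety is that the vertex cover number of \(G'\) is not controlled by \(\rho\), but it never needs to be: it is trivially bounded by the polynomially bounded number of vertices of \(G'\). Any needed care goes into checking the \textsf{NP} membership claim, and in particular that the values \(\widehat a(v)\) are written in binary with \(O(\log(k+1))\) bits, so the input to \(\rho\) is genuinely of size polynomial in \(k\).

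Alternatively, one could try an explicit gadget that realizes an offset \(\widehat a(v)\le O(k^4)\) at \(v\), for instance pendant paths whose contribution is forced, as with suspended paths in \autoref{sec:zero}. I would not pursue this. A \(\{1,2\}\)-pendant edge contributes either \(1\) or \(2\), so the offset is not fixed, and forcing it would need a separate gadget analysis. The generic \textsf{NP}-completeness argument avoids this entirely.
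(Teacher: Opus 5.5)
Your proposal is correct and follows the paper's proof essentially step for step. Both show that the vertex-preweighted variant is in \textsf{NP} and use the \textsf{NP}-completeness of \one to obtain a Karp reduction. Both apply that reduction to the compressed instance \((H,\widehat a)\), and both bound the new parameter by supplying the trivial vertex cover consisting of all vertices of the output graph.
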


\begin{proof}
\vpone belongs to \textsf{NP}: an edge-weight function is a polynomial-size
certificate, and whether it is proper can be checked in polynomial time.

Since the unparameterized problem \one is \textsf{NP}-complete
\cite{dudek2011complexity}, there exists a polynomial-time many-one reduction
from the problem with preassigned vertex weights to \one. Apply this reduction
to the instance \((H,\widehat a)\) obtained in \autoref{thm:12-compression}. Its
encoding length is polynomial in \(k\), so the resulting graph \(F\) (without
the preassigned weights) also has encoding length polynomial in \(k\).

Supply \(V(F)\) itself as a vertex cover of \(F\). Its size is \(k' \leq
|V(F)|=k^{O(1)}\). The resulting instance is equivalent to the original
instance, and both its encoding length and its supplied vertex cover size are
polynomially bounded in \(k\).
\end{proof}
\subsection{Hardness parameterized by treedepth}
\label{sec:one:td}

We showed in earlier work that \one is \WoneHard parameterized by treewidth, by
a reduction from \lc parameterized by
treewidth~\cite{autepanolanphilip2026}. The construction there depends only on
the \lc instance and not on the parameter by which that instance is measured.
We observe here that when the source instance is measured by its vertex cover
number instead, the same construction has bounded treedepth, which gives a
stronger hardness result at no extra cost. We recall the construction in enough
detail to bound the treedepth of its output; for the proof that it is correct we
refer to the earlier work~\cite{autepanolanphilip2026}.

\paragraph{The construction.} Let \((G, \mathcal{L})\) be an instance of \lc
with \(n := |V(G)|\), lists contained in \([t]\), and \(N := t + n - 1\). We
may assume that \(n \geq 2\), since a one-vertex instance is a yes-instance
exactly when its single list is nonempty, and hence that \(N \geq 2\); this
guarantees that \(k \geq 4N \geq 8\) for every gadget index \(k\) below, so
that the gadgets are well defined. For
every \(v \in V(G)\) set \(L'(v) := \{4N + r : r \in L(v)\} \subseteq [4N,
5N]\). The graph \(H\) is obtained from \(G\) by attaching, at every \(v \in
V(G)\), a \emph{Type-C \(k\)-disallowing gadget} for every \(k \in [4N, 5N]
\setminus L'(v)\), together with \(|L(v)| - 1\) triangles and \(N - d_G(v)\)
pendant vertices, all new vertices being distinct.

A Type-C \(k\)-disallowing gadget at \(v\) is built from two auxiliary graphs.
The graph \(K_4 + e\) is a complete graph on four vertices together with one
further edge \(e\) joining one of its vertices, \(b\), to a vertex \(a\) outside
it; \(b\) has no neighbors outside \(K_4 + e\), and \(a\) may. The gadget itself
consists of a triangle \(v a_k z_k\) with \(a_k, z_k\) new, and for odd \(k\) of
\((k-3)/2\) copies of \(K_4 + e\) attached at each of \(a_k\) and \(z_k\); for
even \(k\) it has \((k-6)/2\) copies of \(K_4 + e\) at each of \(a_k\) and
\(z_k\), together with one further triangle \(a_k x_k y_k\) at \(a_k\) and one
triangle \(z_k \hat{x}_k \hat{y}_k\) at \(z_k\), with \(x_k, y_k, \hat{x}_k,
\hat{y}_k\) new. See \autoref{fig:typec2}. The role of these attachments is to
force the colors of \(a_k\) and \(z_k\), so that in every proper weighting one
of them has color \(k\), which forbids \(k\) at \(v\); and the counts of
triangles and pendant vertices then confine the color of \(v\) to \([4N, 5N]\),
exactly as the Type-A gadgets and suspended paths confine it to \([N, 2N]\) in
\autoref{sec:zero:td}. We do not need these properties below: the
argument uses only the shape of \(H\), as just described, together with the
following.

\begin{lemma}[{\cite{autepanolanphilip2026}}]
\label{lem:12tw-recalled}
The graph \(H\) is computable from \((G, \mathcal{L})\) in polynomial time, and
\(H\) is a yes-instance of \one if and only if \((G, \mathcal{L})\) is a
yes-instance of \lc.
\end{lemma}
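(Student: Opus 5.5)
This lemma is recalled from our earlier work, so the plan is to reconstruct that argument: verify polynomial size, then prove the two directions of the equivalence using forced colors inside the gadgets.

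\emph{Size.} Each gadget index satisfies \(k\le 5N\), so a Type-C gadget has \(O(N)\) vertices. There are at most \(N+1\) gadgets per vertex of \(G\), together with \(O(t)\) triangles and \(O(n)\) pendant vertices. Since \(N=t+n-1\) is polynomial in the input size, \(H\) has \(O(nN^2)\) vertices and is built in polynomial time.

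\emph{Forced behaviour of the attachments.} I would first analyse \(K_4+e\) attached at \(a\). Inside the \(K_4\), its four vertices must receive pairwise distinct colors. I would aim to show that this forces \(w(ab)=2\), or at least that it forces a fixed contribution to \(\mathsf{color}_w(a)\). Then I would show that a pendant triangle at a vertex contributes a fixed amount, in the same way suspended paths did in the \(\{0,1\}\) case.

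\emph{The triangle \(va_kz_k\).} With the contributions above, \(\mathsf{color}(a_k)\) and \(\mathsf{color}(z_k)\) differ only through \(w(va_k)\), \(w(vz_k)\) and \(w(a_kz_k)\). Properness on \(a_kz_k\) forces \(w(va_k)\ne w(vz_k)\). The parity split (odd \(k\) versus even \(k\)) is what makes one of \(a_k,z_k\) land exactly on color \(k\), which forbids \(k\) at \(v\). The same computation shows that each gadget contributes exactly \(3\) to \(v\).

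\emph{Counting at \(v\).} Summing the fixed contributions of the gadgets and triangles with the \(N\) free edges, namely the \(G\)-edges and the pendant edges, pins \(\mathsf{color}(v)\) into \([4N,5N]\). Combined with the disallowed values, this gives \(\mathsf{color}(v)\in L'(v)\).

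\emph{Equivalence.} The backward direction reads off \(c(v)=\mathsf{color}(v)-4N\), which lies in \(L(v)\) and is proper on the edges of \(G\). For the forward direction, I would set the weights of the pendant edges to realize \(4N+c(v)\). I would then complete each gadget depending on whether \(c\) lies above or below \(k\), mirroring \autoref{lem:type-a}(2), and check that no gadget edge has equal colors at its ends.

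\emph{Main obstacle.} The hardest step is the exact bookkeeping of the \(K_4+e\) forcing under \(\{1,2\}\) weights, since every edge contributes at least \(1\). Getting the offsets to produce exactly color \(k\) at \(a_k\) or \(z_k\), for both parities, is where errors are likely. I would tabulate the possible colorings of a single \(K_4+e\) first.
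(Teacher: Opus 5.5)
Your plan is correct and is essentially the argument the paper sketches and defers to the earlier work. A short case check on the triangle inside the \(K_4\) confirms that every copy of \(K_4+e\) forces \(w(ab)=2\), with \(\mathsf{color}_w(b)\in\{6,7\}\) freely selectable, and a pendant triangle contributes exactly \(3\). Hence \(a_k\) and \(z_k\) have the same fixed part \(k-3\), each gadget contributes \(3\) at \(v\) and puts color \(k\) on \(a_k\) or \(z_k\), and the \(N\) free edges at \(v\) confine \(\mathsf{color}_w(v)\) to \([4N,5N]\), and so to \(L'(v)\).

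Two small points. There are \(N-d_G(v)=O(N)\) pendant vertices at \(v\), not \(O(n)\); this does not affect polynomiality. In the forward direction you need \(c(v)\le N-d_G(v)\), which holds because \(N-d_G(v)\ge t\).
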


\begin{figure}[h]
\resizebox{\textwidth}{!}{
\begin{tikzpicture}[
vertex/.style={circle, draw, fill=white, inner sep=1.2pt, font=\Large\bfseries},
kfour_vertex/.style={circle, draw, fill=white, inner sep=1pt},
edge_blue/.style={blue, thick}, edge_red/.style={red, thick},
edge_std/.style={thick} ]

\newcommand{\kfour}[2]{
        \begin{scope}[shift={#1}]
\node[kfour_vertex] (#2-1) at (0,0) {}; \node[kfour_vertex] (#2-2) at (0.6,0)
{}; \node[kfour_vertex] (#2-3) at (0.6,0.6) {}; \node[kfour_vertex] (#2-4) at
(0,0.6) {}; \draw[edge_std] (#2-1) -- (#2-2) -- (#2-3) -- (#2-4) -- cycle;
\draw[edge_std] (#2-1) -- (#2-3); \draw[edge_std] (#2-2) -- (#2-4);
\draw[edge_std] (#2-4) -- (#2-1);
        \end{scope}
    }

 \begin{scope}[shift={(0,0)}]
\node[vertex] (v) at (0, 2.5) {\LARGE\(v\)}; \node[vertex] (a) at (-2, 0.5)
{\(a_k\)}; \node[vertex] (b) at (2, 0.5) {\(z_k\)};

\draw[edge_red] (v) -- (a); \draw[edge_blue] (v) -- (b); \draw[] (a) -- (b);

\kfour{(-4.5, -3)}{la} \kfour{(-3.2, -3)}{lb} \node at (-1.9, -2.7) {\dots};
\kfour{(-1.3, -3)}{lc}

\draw[edge_blue] (a) -- (la-3); \draw[edge_blue] (a) -- (lb-3); \draw[edge_blue]
(a) -- (lc-3);

\kfour{(0.7, -3)}{ra} \kfour{(2, -3)}{rb} \node at (3.2, -2.7) {\dots};
\kfour{(3.9, -3)}{rc}
\draw[edge_blue] (b) -- (ra-4); \draw[edge_blue] (b) -- (rb-4); \draw[edge_blue]
(b) -- (rc-4);

\draw [decorate, decoration={brace, amplitude=10pt, mirror}] (-4.6, -3.2) --
(-0.7, -3.2) node [black, midway, yshift=-0.6cm] {\(\frac{k-3}{2}\)};

\draw [decorate, decoration={brace, amplitude=10pt, mirror}] (0.6, -3.2) --
(4.6, -3.2) node [black, midway, yshift=-0.6cm] {\(\frac{k-3}{2}\)};

\node at (0.1,-4.4) {(i)};

    \end{scope}

   \begin{scope}[shift={(10.5,0)}] 

\node[vertex] (v) at (0, 2.5) {\LARGE \(v\)}; \node[vertex] (a) at (-2, 0.5)
{\(a_k\)}; \node[vertex] (b) at (2, 0.5) {\(z_k\)};

\draw[edge_red] (v) -- (a); \draw[edge_blue] (v) -- (b); \draw[] (a) -- (b);

\node[vertex] (x) at (-3.5, 1.5) {\small \(x_k\)}; \node[vertex] (y) at (-3.5,
0.5) {\small \(y_k\)}; \draw[edge_blue] (a) -- (x); \draw[edge_red] (a) -- (y);
\draw[] (x) -- (y);

\node[vertex] (xr) at (3.5, 1.5) { \tiny \(\hat{x}_k\)}; \node[vertex] (yr) at
(3.5, 0.5) { \tiny \(\hat{y}_k\)}; \draw[edge_blue] (b) -- (xr); \draw[edge_red]
(b) -- (yr); \draw[] (xr) -- (yr);

\kfour{(-4.5, -3)}{la} \kfour{(-3.2, -3)}{lb} \node at (-1.9, -2.7) {\dots};
\kfour{(-1.3, -3)}{lc}

\draw[edge_blue] (a) -- (la-3); \draw[edge_blue] (a) -- (lb-3); \draw[edge_blue]
(a) -- (lc-3);

\kfour{(0.7, -3)}{ra} \kfour{(2, -3)}{rb} \node at (3.2, -2.7) {\dots};
\kfour{(3.9, -3)}{rc}
\draw[edge_blue] (b) -- (ra-4); \draw[edge_blue] (b) -- (rb-4); \draw[edge_blue]
(b) -- (rc-4);

\draw [decorate, decoration={brace, amplitude=10pt, mirror}] (-4.6, -3.2) --
(-0.7, -3.2) node [black, midway, yshift=-0.6cm] {\(\frac{k-6}{2}\)};

\draw [decorate, decoration={brace, amplitude=10pt, mirror}] (0.6, -3.2) --
(4.6, -3.2) node [black, midway, yshift=-0.6cm] {\(\frac{k-6}{2}\)}; \node at
(0.1,-4.4) {(ii)};

    \end{scope}

\end{tikzpicture}
}

\caption{\(k\)-disallowing gadget of Type-C: (i) for odd \(k\), and (ii) for
even \(k\).}
\label{fig:typec2}
\end{figure}

\begin{lemma}
\label{lem:12td}
If \(G\) has a vertex cover of size \(k\), then \(\mathsf{td}(H) \leq k + 7\).
\end{lemma}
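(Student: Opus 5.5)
We mirror the proof of \autoref{lem:td}. Start from the elimination forest \(F\) of \(G\) of height at most \(k+1\) given by \autoref{lem:vc-to-td}, and extend it attachment by attachment to an elimination forest of \(H\). Each attachment at \(v \in V(G)\) (a pendant vertex, a triangle, or a Type-C gadget) meets the rest of \(H\) only in \(v\). Hence we can hang each attachment in its own branch below \(v\), with all its vertices descendants of \(v\). Edges of \(G\) stay comparable by the choice of \(F\). Edges inside an attachment, or from an attachment to \(v\), are comparable as long as the branch for that attachment is an elimination forest of the attachment with \(v\) adjoined as an ancestor. Vertices from different attachments are never adjacent, so placing attachments in separate branches is harmless. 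The height of \(H\) is then at most \((k+1)+h\), where \(h\) is the largest depth used by a single attachment below its anchor \(v\), so it suffices to show \(h \le 6\).

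The per-attachment depths are as follows.
\begin{itemize}
\item A pendant vertex has depth \(1\) below \(v\).
\item A triangle \(vxy\) has depth \(2\), as the chain \(v, x, y\).
\item In a Type-C gadget, we make \(a_k\) a child of \(v\) and \(z_k\) a child of \(a_k\); this covers the triangle \(v a_k z_k\).
\end{itemize}
Every further piece of the gadget hangs at \(a_k\) or at \(z_k\) and meets the rest only there. These pieces are the copies of \(K_4+e\) and, for even \(k\), the triangles \(a_k x_k y_k\) and \(z_k \hat{x}_k \hat{y}_k\). We place each piece as a separate branch below its anchor, which is legitimate since \(z_k\) is a descendant of \(a_k\), so pieces anchored at \(a_k\) may hang either below \(a_k\) or below \(z_k\). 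A copy of \(K_4+e\) attached via \(b\) is handled by the chain \(b, c_1, c_2, c_3\) through the four vertices of the \(K_4\). This is valid because \(K_4\) is a clique, so any chain order makes all its pairs comparable, and the extra edge joins \(b\) to its ancestor \(a_k\) or \(z_k\). A triangle at \(a_k\) or \(z_k\) is a chain of \(2\).

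The deepest path is then \(v \to a_k \to z_k \to\) a full \(K_4\) chain, giving \(2+4=6\) vertices below \(v\). So \(h \le 6\) and \(\mathsf{td}(H) \le (k+1)+6 = k+7\).

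The one point needing care is the depth count for the Type-C gadget, since it fixes the additive constant. The pieces anchored at \(a_k\) should not be forced below \(z_k\) unnecessarily, but even in the worst placement the \(K_4+e\) copies reach depth \(2+4\) below \(v\), which is exactly the budget \(7-1\). Everything else is routine bookkeeping already done in \autoref{lem:td}.
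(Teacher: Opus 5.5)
Your proposal is correct and follows the paper's own proof: you extend the height-\((k+1)\) elimination forest from \autoref{lem:vc-to-td}, hang each attachment in its own branch with \(a_k\) a child of \(v\) and \(z_k\) a child of \(a_k\), and place each \(K_4\) copy as a chain of four below its anchor, so the deepest gadget path adds \(2+4=6\) and \(\mathsf{td}(H) \leq k+7\). Your remark that pieces anchored at \(a_k\) may also be hung below \(z_k\) is harmless and does not change the bound.
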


\begin{proof}
Let \(F\) be the elimination forest of \(G\) of height at most \(k+1\) given by
\autoref{lem:vc-to-td}; we extend it to an elimination forest of \(H\). Make
every pendant vertex a child of its attachment vertex, and place the two new
vertices of every triangle attached at a vertex of \(G\) as a chain below that
vertex. For a Type-C gadget at \(v\), make \(a_k\) a child of \(v\) and \(z_k\)
a child of \(a_k\); place the four vertices of each copy of \(K_4\) as a chain
below the vertex \(a_k\) or \(z_k\) to which its edge \(e\) is attached, and,
for even \(k\), place the two new vertices of each auxiliary triangle as a chain
below its attachment vertex. Every edge of \(H\) then joins a vertex to one of
its descendants: the edges of \(G\) by the choice of \(F\), the three edges of
the triangle \(v a_k z_k\) because \(v\), \(a_k\), \(z_k\) form a chain, the
edges inside a copy of \(K_4\) because its vertices form a chain, and every
remaining edge because it joins an attachment vertex to a vertex placed below
it. Distinct attachments occupy separate branches, so the height increases by at
most the depth of a single Type-C gadget, which is \(2\) for \(a_k, z_k\) plus
\(4\) for a copy of \(K_4\) below \(z_k\). Hence \(\mathsf{td}(H) \leq (k+1) + 6 =
k+7\).
\end{proof}

\begin{theorem}
\label{thm:one-treedepth}
\one is \WoneHard parameterized by the treedepth of the input graph.
\end{theorem}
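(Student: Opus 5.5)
The proof is a parameterized reduction from \lc parameterized by the vertex cover number, which is \WoneHard~\cite{fiala2011parameterized}. It mirrors the proof of \autoref{thm:zero-td}. Given an instance \((G,\mathcal{L})\) of \lc whose graph \(G\) has a vertex cover of size \(k\), we first dispose of the trivial case \(n=1\). In that case we output a fixed yes-instance or no-instance of \one of constant size, according to whether the single list is nonempty. Otherwise \(n\geq 2\), and we build the graph \(H\) described above, with the Type-C gadgets, triangles and pendant vertices.

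The key steps, in order, are as follows. First, by \autoref{lem:12tw-recalled}, \(H\) is computable in polynomial time, and \(H\) is a yes-instance of \one exactly when \((G,\mathcal{L})\) is a yes-instance of \lc. Second, by \autoref{lem:12td}, \(\mathsf{td}(H)\leq k+7\). The new parameter is therefore bounded by a function of the old one, which is all a parameterized reduction needs. Composing this reduction with an \FPT algorithm for \one parameterized by treedepth would then give an \FPT algorithm for \lc parameterized by vertex cover number. Hence \one is \WoneHard parameterized by treedepth.

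The only point needing care is that \autoref{lem:12tw-recalled} was proved in the earlier work in the treewidth setting. We must confirm that it does not depend on the parameter at all. As the text notes, the construction depends only on the instance \((G,\mathcal{L})\), so the equivalence applies verbatim. A second small check is that the trivial-case output has constant treedepth. With both checks in place, no further obstacle remains, and the proof is a short assembly of the two preceding lemmas.
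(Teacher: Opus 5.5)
Your proposal is correct and is essentially the paper's proof: it is the same parameterized reduction from \lc parameterized by the vertex cover number. It combines \autoref{lem:12tw-recalled} for the equivalence with \autoref{lem:12td} for the bound \(\mathsf{td}(H)\leq k+7\). Your explicit treatment of the case \(n=1\) by a constant-size output corresponds to the paper's assumption \(n\geq 2\) in the construction.
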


\begin{proof}
Given an instance \((G, \mathcal{L})\) of \lc with a vertex cover of size
\(k\), the construction produces in polynomial time a graph \(H\) with \(\mathsf{td}(H) \leq k+7\) by \autoref{lem:12td}, and \(H\) is a yes-instance of \one if
and only if \((G, \mathcal{L})\) is a yes-instance of \lc by
\autoref{lem:12tw-recalled}. This is a parameterized reduction from \lc
parameterized by the vertex cover number, which is
\WoneHard~\cite{fiala2011parameterized}, to \one parameterized by treedepth.
\end{proof}

\section{A General Dynamic Programming Algorithm}
\label{sec:gendp}

In each instance of our two pre-weighted problems---\prezo and \preot---the
color of a vertex splits into two parts: a part that the pre-weighting fixes,
and a part contributed by the remaining free edges, which an algorithm can
choose. So at a high level both the pre-weighted problems are similar in form:
they ask if there is a way to select a set of free edges to carry the larger of
the two weights, so that the resulting colors differ across every edge. Since
the unweighted problems \zero and \one are the special cases in which no edge is
pre-weighted, they are of this form as well.

In this section we isolate this common form as a separate problem and solve it
by dynamic programming over a tree decomposition. This algorithm is
fixed-parameter tractable in the width of the decomposition together with an
upper bound on how many free edges at any one vertex may take the larger weight.
We obtain each algorithm in the remaining part of this paper---except for
\autoref{thm:prezo-general}---as an application of this general algorithm.

We use the following notation in the remainder of this section. An instance of
the \bods problem, defined below, supplies a graph \(G\) and an \emph{offset}
function \(\mathsf{off} : V(G) \to \mathbb{Z}_{\geq 0}\). For a set \(F \subseteq
E(G)\) of edges and a vertex \(v \in V(G)\) we write
\[
  d_{F}(v) := |F \cap E_G(v)|
  \qquad\text{and}\qquad
  \mathsf{col}_{F}(v) := \mathsf{off}(v) + d_{F}(v)
\]
for, respectively, the number of edges of \(F\) that are incident with \(v\) and
the color that \(F\) induces on \(v\).

\defproblem{\bods}%
{An undirected graph \(G = (V, E)\) on \(n\) vertices, a subset
  \(E_{\mathrm{free}} \subseteq E(G)\), an offset function \(\mathsf{off} : V(G)
  \to \mathbb{Z}_{\geq 0}\), an integer \(C \geq 0\) called the \emph{cap}, and a nice
  tree decomposition \(\mathcal{T} = (T, \{B_t\}_{t \in V(T)})\) of \(G\) of
  width \(\mathsf{tw}\).}%
{Is there a subset \(F \subseteq E_{\mathrm{free}}\) such that \(d_{F}(v) \leq
  C\) for every \(v \in V(G)\), and \(\mathsf{col}_{F}(u) \neq \mathsf{col}_{F}(v)\)
  for every edge \(uv \in E(G)\)?}

A set \(F \subseteq E_{\mathrm{free}}\) that satisfies the two conditions in the
question above is a \emph{solution} of the instance.

Note that the cap \(C\) is supplied as part of the input. Each of the four
reductions in \autoref{sec:fptvc} supplies a concrete value for \(C\), obtained
from a structural lemma which guarantees that if the instance at hand has a
solution at all, then it has one that respects this cap. In contrast, the
algorithms in \autoref{sec:prezo:xp} and \autoref{sec:preot:xp} take \(C\) so
large that it constrains nothing.

\subsection{States and partial solutions}

Throughout this section we fix an instance \((G, E_{\mathrm{free}}, \mathsf{off},
C, \mathcal{T})\) of \bods, and we use the notation \(V_t\), \(E_t\), and \(G_t
= (V_t, E_t)\) of \autoref{sec:prelims} for the nodes \(t\) of \(\mathcal{T}\).
We write
\[
  P := \{(x, y) \in \mathbb{Z}_{\geq 0} \times \mathbb{Z}_{\geq 0}
  : y \leq x \leq C\}
\]
for the set of \emph{state pairs}: those pairs in which the second coordinate
does not exceed the first, and the first respects the cap. Note that \(|P| =
\binom{C+2}{2} = (C+1)(C+2)/2 \leq (C+1)^{2}\).

\begin{definition}
\label{def:state}
A \emph{state} at a node \(t \in V(T)\) is a function \(f : B_t \to P\). For a
vertex \(v \in B_t\) we write \(f(v) = (x_v, y_v)\), and we call \(x_v\) the
\emph{final degree} and \(y_v\) the \emph{current degree} that \(f\) assigns to
\(v\).
\end{definition}

The two coordinates of a state have the following intended meanings: \(x_v\) is
the number of edges of the eventual solution that are incident with \(v\), and
\(y_v\) is the number of those edges that have already been accounted for in the
subtree below \(t\). The condition \(y \leq x\) in the definition of \(P\) says
that a vertex cannot already have more edges than it will finally have, and the
condition \(x \leq C\) is exactly the cap.

\begin{observation}
\label{obs:forgotten}
Let \(t \in V(T)\) and let \(v \in V_t \setminus B_t\). Then \(E_G(v) \subseteq
E_t\). Consequently, if \(F \subseteq E(G)\) and \(F_t := F \cap E_t\), then
\(d_{F_t}(v) = d_{F}(v)\) and \(\mathsf{col}_{F_t}(v) = \mathsf{col}_{F}(v)\).
\end{observation}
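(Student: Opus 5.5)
The first claim, \(E_G(v) \subseteq E_t\), comes from the tree decomposition axioms together with the fact that every edge is introduced exactly once. The second claim then follows by unwinding definitions.

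Fix \(v \in V_t \setminus B_t\) and an edge \(uv \in E_G(v)\). Since \(v \in V_t\), some node in the subtree \(T_t\) rooted at \(t\) has \(v\) in its bag. The set of nodes whose bags contain \(v\) is connected (condition 3 of \autoref{def:treedecomp}), and it does not contain \(t\) because \(v \notin B_t\). A connected subtree of \(T\) that meets \(T_t\) but avoids \(t\) must lie entirely inside \(T_t\), since any path leaving \(T_t\) passes through \(t\). So every bag containing \(v\) lies at a node of \(T_t\).

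The edge \(uv\) is introduced at exactly one node \(s\). By \autoref{def:nicetreedecomp}, both endpoints lie in \(B_s\), so in particular \(v \in B_s\). By the previous paragraph, \(s\) is in \(T_t\), and hence \(uv \in E_t\). This proves \(E_G(v) \subseteq E_t\).

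For the consequence, compute directly:
\[
  F_t \cap E_G(v) = F \cap E_t \cap E_G(v) = F \cap E_G(v),
\]
so \(d_{F_t}(v) = d_F(v)\). Adding \(\mathsf{off}(v)\) to both sides gives \(\mathsf{col}_{F_t}(v) = \mathsf{col}_F(v)\).

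The only step needing care is the connectivity argument that confines every bag containing \(v\) to \(T_t\). Everything else is immediate from the definitions.
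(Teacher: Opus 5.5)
Your proof is correct and follows the paper's argument step for step. You use the connectivity axiom and \(v \notin B_t\) to confine every bag containing \(v\) to the subtree rooted at \(t\), which places the introducing node of each edge \(uv\) in that subtree, and you then read off \(d_{F_t}(v) = d_F(v)\) from \(F_t \cap E_G(v) = F \cap E_G(v)\). The only difference is that you state the separation step (any path leaving the subtree passes through \(t\)) a little more explicitly than the paper does.
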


\begin{proof}
Let \(uv \in E_G(v)\), and let \(t_{uv}\) be the unique node that introduces
\(uv\); then \(u, v \in B_{t_{uv}}\). Since \(v \in V_t\), some node of the
subtree rooted at \(t\) has \(v\) in its bag, and since \(v \notin B_t\), the
third condition of \autoref{def:treedecomp} forces every node whose bag contains
\(v\) to lie in that subtree. In particular \(t_{uv}\) lies in the subtree
rooted at \(t\), and so \(uv \in E_t\). The remaining claims follow because then
every edge of \(F\) that is incident with \(v\) lies in \(E_t\), and hence in
\(F_t\).
\end{proof}

\begin{definition}
\label{def:partialsolution}
Let \(t \in V(T)\) and let \(f\) be a state at \(t\), with \(f(v) = (x_v, y_v)\)
for \(v \in B_t\). A set \(F_t \subseteq E_t \cap E_{\mathrm{free}}\) is a
\emph{partial solution of state \(f\) at \(t\)} if the following three
conditions hold.
\begin{enumerate}
    \item \(d_{F_t}(v) = y_v\) for every \(v \in B_t\).
    \item \(d_{F_t}(v) \leq C\) for every \(v \in V_t \setminus B_t\).
    \item For every edge \(uv \in E_t\):
  \begin{itemize}
      \item if \(u \notin B_t\) and \(v \notin B_t\), then \(\mathsf{col}_{F_t}(u)
            \neq \mathsf{col}_{F_t}(v)\);
      \item if \(u \in B_t\) and \(v \notin B_t\), then \(\mathsf{off}(u) + x_u
            \neq \mathsf{col}_{F_t}(v)\);
      \item if \(u \in B_t\) and \(v \in B_t\), then \(\mathsf{off}(u) + x_u \neq
            \mathsf{off}(v) + x_v\).
  \end{itemize}
\end{enumerate}
\end{definition}

Conditions 1 and 2 together say that no vertex of \(V_t\) exceeds the cap: for a
vertex in the bag this is part of the definition of a state, and for a forgotten
vertex it is imposed directly. Condition 3 says that every edge introduced so
far is already satisfied, where the color of a vertex of the bag is taken to be
the one it will \emph{finally} have---this is legitimate precisely because such
a vertex may still gain edges---while by \autoref{obs:forgotten} the color of a
forgotten vertex is final.

\subsection{The algorithm}

The algorithm computes a table \(D\), indexed by pairs \((t, f)\) where \(t\) is
a node of \(T\) and \(f\) is a state at \(t\). The entry \(D[t, f]\) holds a
partial solution of state \(f\) at \(t\) if one exists, and the symbol \(\bot\)
otherwise. The table is computed bottom-up; we describe the computation at a
node \(t\), assuming that the entries at all descendants of \(t\) are already
available. Recall that a node with an empty bag admits exactly one state, namely
the empty function \(f_{\emptyset}\).

\paragraph{Leaf node.} Here \(B_t = \emptyset\) and \(E_t = \emptyset\). Set
\(D[t, f_{\emptyset}] := \emptyset\).

\paragraph{Introduce vertex node.} Let \(t\) introduce the vertex \(v\), and let
\(t'\) be its child. No edge of \(E_t\) is incident with \(v\). Let \(f\) be a
state at \(t\), with \(f(v) = (x_v, y_v)\). If \(y_v > 0\), set \(D[t, f] :=
\bot\). Otherwise set \(D[t, f] := D[t', f|_{B_{t'}}]\).

\paragraph{Introduce edge node.} Let \(t\) introduce the edge \(uw\), and let
\(t'\) be its child, so that \(B_t = B_{t'}\) and \(u, w \in B_t\). Let \(f\) be
a state at \(t\), with \(f(u) = (x_u, y_u)\) and \(f(w) = (x_w, y_w)\). If
\(\mathsf{off}(u) + x_u = \mathsf{off}(w) + x_w\), set \(D[t, f] := \bot\). Otherwise
consider the following two possibilities, in either order, and apply the first
that succeeds.
\begin{itemize}
    \item If \(D[t', f] \neq \bot\), set \(D[t, f] := D[t', f]\). This is the
          case in which the new edge is not taken into the solution.
    \item If \(uw \in E_{\mathrm{free}}\) and \(y_u \geq 1\) and \(y_w \geq 1\),
          let \(f'\) be the state at \(t'\) that agrees with \(f\) outside
          \(\{u, w\}\) and has \(f'(u) = (x_u, y_u - 1)\) and \(f'(w) = (x_w,
          y_w - 1)\). If \(D[t', f'] \neq \bot\), set \(D[t, f] := D[t', f']
          \cup \{uw\}\).
\end{itemize}
If neither possibility succeeds, set \(D[t, f] := \bot\). Note that a
pre-weighted edge---one outside \(E_{\mathrm{free}}\)---still imposes its
constraint through the test on the final degrees, even though it can never be
taken into the solution.

\paragraph{Forget node.} Let \(t\) forget the vertex \(v\), and let \(t'\) be
its child, so that \(B_{t'} = B_t \cup \{v\}\). Let \(f\) be a state at \(t\).
For \(x \in \{0, 1, \ldots, C\}\), let \(f_x\) be the state at \(t'\) that
agrees with \(f\) on \(B_t\) and has \(f_x(v) = (x, x)\). If there is an \(x\)
with \(D[t', f_x] \neq \bot\), choose one such \(x\) arbitrarily and set \(D[t,
f] := D[t', f_x]\); otherwise set \(D[t, f] := \bot\). Only the states \(f_x\)
with equal coordinates at \(v\) are consulted. The reason is the one given in
the forget case of the proof of \autoref{lem:gendp-correct}: a set \(F_t\) is a
partial solution of state \(f\) at \(t\) if and only if it is a partial solution
of state \(f_a\) at \(t'\), where \(a = d_{F_t}(v)\), so the states at \(t'\)
whose two coordinates at \(v\) differ correspond to no partial solution at \(t\)
at all. Admitting them would be unsound: such a state constrains the neighbors
of \(v\) against the final degree that it assigns to \(v\), whereas at \(t\) the
vertex \(v\) has been forgotten and its color is determined by its current
degree.

\paragraph{Join node.} Let \(t\) have children \(t_1\) and \(t_2\), so that
\(B_t = B_{t_1} = B_{t_2}\). A pair \((f_1, f_2)\) of states at \(t_1\) and
\(t_2\) respectively is \emph{compatible} with a state \(f\) at \(t\) if,
writing \(f(v) = (x_v, y_v)\), \(f_1(v) = (x^{(1)}_v, y^{(1)}_v)\) and \(f_2(v)
= (x^{(2)}_v, y^{(2)}_v)\), we have
\[
  x_v = x^{(1)}_v = x^{(2)}_v
  \qquad\text{and}\qquad
  y_v = y^{(1)}_v + y^{(2)}_v
  \qquad\text{for every } v \in B_t.
\]
If there is a pair \((f_1, f_2)\) compatible with \(f\) such that \(D[t_1, f_1]
\neq \bot\) and \(D[t_2, f_2] \neq \bot\), choose one such pair arbitrarily and
set \(D[t, f] := D[t_1, f_1] \cup D[t_2, f_2]\); otherwise set \(D[t, f] :=
\bot\).

\medskip

The algorithm returns \yes if \(D[r, f_{\emptyset}] \neq \bot\) at the root
\(r\), and \no otherwise.

\begin{lemma}
\label{lem:gendp-correct}
For every node \(t \in V(T)\) and every state \(f\) at \(t\): if \(D[t, f] \neq
\bot\) then \(D[t, f]\) is a partial solution of state \(f\) at \(t\); and if
some partial solution of state \(f\) at \(t\) exists then \(D[t, f] \neq \bot\).
\end{lemma}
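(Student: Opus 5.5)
The proof is by induction on the height of \(t\) in \(T\), that is, bottom-up in the order in which the table is filled. Both directions (soundness: a non-\(\bot\) entry is a partial solution; completeness: existence of a partial solution forces a non-\(\bot\) entry) are proved together for each node kind, assuming them at the children. The leaf case is immediate: \(\emptyset\) is the unique partial solution of \(f_\emptyset\), since \(E_t=\emptyset\) and \(V_t=\emptyset\).

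For the introduce vertex node, no edge of \(E_t\) is incident with \(v\), so \(d_{F_t}(v)=0\) for every \(F_t\subseteq E_t\). Hence no partial solution exists when \(y_v>0\). When \(y_v=0\), the set \(F_t\) is a partial solution of \(f\) at \(t\) if and only if it is one of \(f|_{B_{t'}}\) at \(t'\). To see this, note that \(V_t\setminus B_t=V_{t'}\setminus B_{t'}\) and \(E_t=E_{t'}\), and that condition 3 involves no edge at \(v\).

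For the introduce edge node \(uw\), soundness is checked directly. The final-degree test handles condition 3 for the new edge \(uw\), and adding \(uw\) raises \(d(u),d(w)\) by one, matching \(y_u,y_w\). Completeness splits on whether \(uw\in F_t\). If it is not, \(F_t\) is a partial solution of \(f\) at \(t'\). If it is, \(F_t\setminus\{uw\}\) is a partial solution of \(f'\) at \(t'\), and \(y_u,y_w\ge1\) holds automatically.

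The forget node is where the care lies. Let \(F_t\) be a partial solution of \(f\) at \(t\), and set \(a:=d_{F_t}(v)\). By \autoref{obs:forgotten} applied at \(t\), \(v\) has all its edges in \(E_t=E_{t'}\), so \(\mathsf{col}_{F_t}(v)=\mathsf{off}(v)+a\). Since condition 2 gives \(a\le C\), the state \(f_a\) is well defined. Translating each bullet of condition 3 for edges at \(v\), the "bag" version with final degree \(a\) at \(t'\) is equivalent to the "forgotten" version at \(t\), because final and current colors coincide when \(x_v=y_v=a\). Conditions 1 and 2 transfer since \(d_{F_t}(v)=a=y_v\). This equivalence gives both directions: completeness uses \(x=a\), and soundness uses the chosen \(x\), where \(y_v=x\) forces \(d(v)=x\).

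For the join node, soundness: take partial solutions \(F_1,F_2\) at \(t_1,t_2\). They are disjoint because \(E_{t_1}\cap E_{t_2}=\emptyset\). Degrees at bag vertices add, matching \(y=y^{(1)}+y^{(2)}\). A forgotten vertex lies in exactly one of \(V_{t_1}\setminus B_t\) and \(V_{t_2}\setminus B_t\), since the subtrees share only the bag. All its edges lie on that side by \autoref{obs:forgotten}, so its degree and color are unchanged in the union. Every edge of \(E_t\) lies in one side, with the same bag/forgotten status, so condition 3 carries over. Completeness: restrict \(F_t\) to \(E_{t_i}\) and take \(f_i\) with the same \(x\) and \(y^{(i)}_v=d_{F_t\cap E_{t_i}}(v)\). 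Then \(y^{(i)}_v\le y_v\le x_v\le C\), so \(f_i\) is a valid state, and the restrictions are partial solutions by the same reasoning.

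I expect the main obstacle to be the bookkeeping at join and forget nodes: showing carefully that a forgotten vertex's color is fully determined within one child subtree, and that the final-versus-current distinction is exactly resolved when \(x_v=y_v\).
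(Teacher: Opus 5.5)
Your proposal is correct and follows essentially the same route as the paper. Both argue by bottom-up induction with the same case analysis per node type. At forget nodes, both use the same equivalence: \(F_t\) is a partial solution of \(f\) at \(t\) exactly when it is one of \(f_a\) at \(t'\), where \(a = d_{F_t}(v)\). At join nodes, both use the same union/restriction argument, relying on \autoref{obs:forgotten} to place each forgotten vertex and its edges on one side.
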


\begin{proof}
We induct on the nodes of \(T\), bottom-up.

If \(t\) is a leaf, then \(E_t = \emptyset\) and the only candidate is \(F_t =
\emptyset\), which satisfies all three conditions of
\autoref{def:partialsolution} vacuously.

Let \(t\) be an introduce vertex node introducing \(v\), with child \(t'\), and
let \(f\) be a state at \(t\). Since no edge of \(E_t\) is incident with \(v\),
any partial solution \(F_t\) at \(t\) has \(d_{F_t}(v) = 0\), so condition 1
forces \(y_v = 0\) and the entry is correctly set to \(\bot\) when \(y_v > 0\).
Assume \(y_v = 0\). We have \(E_t = E_{t'}\) and \(V_t \setminus B_t = V_{t'}
\setminus B_{t'}\), so conditions 1 and 2 for \(F_t\) at \(t\) and for \(F_t\)
at \(t'\) coincide; and in condition 3 the vertex \(v\) occurs in no edge of
\(E_t\), so the two conditions coincide as well. Hence the partial solutions of
state \(f\) at \(t\) are exactly those of state \(f|_{B_{t'}}\) at \(t'\), and
the rule is correct by induction.

Let \(t\) be an introduce edge node introducing \(uw\), with child \(t'\), and
let \(f\) be a state at \(t\). Both \(u\) and \(w\) lie in \(B_t\), so the third
item of condition 3 applies to \(uw\), and the entry is correctly set to
\(\bot\) when \(\mathsf{off}(u) + x_u = \mathsf{off}(w) + x_w\). Assume otherwise. We
have \(E_t = E_{t'} \cup \{uw\}\) and \(V_t = V_{t'}\) and \(B_t = B_{t'}\), so
conditions 2 and 3 for a set \(F_t \subseteq E_t \cap E_{\mathrm{free}}\) at
\(t\) reduce to the same conditions for \(F_t \setminus \{uw\}\) at \(t'\)
together with the already verified constraint from \(uw\). If \(uw \notin F_t\)
then \(F_t\) is a partial solution of state \(f\) at \(t'\), and conversely. If
\(uw \in F_t\)---which requires \(uw \in E_{\mathrm{free}}\)---then \(d_{F_t
\setminus \{uw\}}\) is one less than \(d_{F_t}\) at each of \(u\) and \(w\) and
equal elsewhere, so \(F_t\) is a partial solution of state \(f\) at \(t\) if and
only if \(F_t \setminus \{uw\}\) is a partial solution of state \(f'\) at
\(t'\), where \(f'\) is as in the description of the rule; in particular this
requires \(y_u \geq 1\) and \(y_w \geq 1\). The rule examines both
possibilities, so it is correct by induction.

Let \(t\) be a forget node forgetting \(v\), with child \(t'\), and let \(f\) be
a state at \(t\). Here \(E_t = E_{t'}\), \(V_t = V_{t'}\), and \(V_t \setminus
B_t = (V_{t'} \setminus B_{t'}) \cup \{v\}\). Let \(F_t \subseteq E_t \cap
E_{\mathrm{free}}\) and put \(a := d_{F_t}(v)\). Then \(F_t\) is a partial
solution of state \(f\) at \(t\) if and only if it is a partial solution of
state \(f_a\) at \(t'\): conditions 1 and 2 correspond because \(a \leq C\) is
required on both sides, and condition 3 corresponds because the color that it
assigns to \(v\) is \(\mathsf{off}(v) + a\) on the side of \(t\), by
\autoref{obs:forgotten} and the first two items, and \(\mathsf{off}(v) + a\) on the
side of \(t'\) as well, by the last two items and the fact that \(f_a\) gives
\(v\) final degree \(a\). Conversely, if \(F_t\) is a partial solution of state
\(f_x\) at \(t'\) for some \(x\), then \(d_{F_t}(v) = x\) by condition 1, so \(x
= a\). The rule ranges over all \(x \in \{0, \ldots, C\}\) and so is correct by
induction.

Finally, let \(t\) be a join node with children \(t_1\) and \(t_2\), and let
\(f\) be a state at \(t\). Recall that \(E_t = E_{t_1} \cup E_{t_2}\) with
\(E_{t_1} \cap E_{t_2} = \emptyset\) and \(V_t = V_{t_1} \cup V_{t_2}\). Suppose
first that \(F_1\) and \(F_2\) are partial solutions of states \(f_1\) and
\(f_2\) at \(t_1\) and \(t_2\), with \((f_1, f_2)\) compatible with \(f\), and
put \(F_t := F_1 \cup F_2\). For \(v \in B_t\) we get \(d_{F_t}(v) = y^{(1)}_v +
y^{(2)}_v = y_v\), since the edge sets are disjoint, which is condition 1. For
\(v \in V_t \setminus B_t\) we may assume \(v \in V_{t_1} \setminus B_t\); then
\(E_G(v) \subseteq E_{t_1}\) by \autoref{obs:forgotten}, so \(d_{F_t}(v) =
d_{F_1}(v) \leq C\), which is condition 2, and the same identity shows that the
first two items of condition 3 transfer from \(F_1\) to \(F_t\) for every edge
of \(E_{t_1}\); the third item transfers because \(x^{(1)}_v = x_v\) for every
\(v \in B_t\). Symmetrically for \(E_{t_2}\), and every edge of \(E_t\) lies in
one of the two. Hence \(F_t\) is a partial solution of state \(f\) at \(t\).
Conversely, let \(F_t\) be a partial solution of state \(f\) at \(t\), and set
\(F_i := F_t \cap E_{t_i}\) for \(i \in \{1, 2\}\) and \(f_i(v) := (x_v,
d_{F_i}(v))\) for \(v \in B_t\). Then \(d_{F_1}(v) + d_{F_2}(v) = d_{F_t}(v) =
y_v \leq x_v \leq C\) for \(v \in B_t\), so \(f_1\) and \(f_2\) are states and
\((f_1, f_2)\) is compatible with \(f\); and reversing the transfers above shows
that \(F_i\) is a partial solution of state \(f_i\) at \(t_i\). The rule ranges
over all compatible pairs and so is correct by induction.
\end{proof}

\begin{lemma}
\label{lem:gendp-root}
The algorithm returns \yes if and only if the instance has a solution.
\end{lemma}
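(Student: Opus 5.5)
We apply \autoref{lem:gendp-correct} at the root \(r\) of \(\mathcal{T}\) and check that a partial solution of the empty state at \(r\) is the same thing as a solution of the instance. By \autoref{def:nicetreedecomp} we have \(B_r = \emptyset\), so \(f_{\emptyset}\) is the only state at \(r\) and the algorithm answers according to \(D[r, f_{\emptyset}]\). By \autoref{lem:gendp-correct}, \(D[r, f_{\emptyset}] \neq \bot\) holds if and only if some partial solution of state \(f_{\emptyset}\) at \(r\) exists. It therefore suffices to prove the following claim: a set \(F \subseteq E(G)\) is a partial solution of state \(f_{\emptyset}\) at \(r\) if and only if \(F\) is a solution of the instance.

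First we identify \(G_r\) with \(G\). The bags in the subtree rooted at \(r\) are all the bags of \(\mathcal{T}\), so by the first condition of \autoref{def:treedecomp} we get \(V_r = V(G)\). Every edge of \(G\) is introduced at exactly one node of \(T\), so \(E_r = E(G)\). Since \(B_r = \emptyset\), we also have \(V_r \setminus B_r = V(G)\).

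Now we specialize the three conditions of \autoref{def:partialsolution} to \(t = r\) and \(f = f_{\emptyset}\).
\begin{itemize}
\item The candidate sets range over \(E_r \cap E_{\mathrm{free}} = E_{\mathrm{free}}\).
\item Condition 1 is vacuous, because the bag \(B_r\) is empty.
\item Condition 2 reads \(d_F(v) \leq C\) for every \(v \in V(G)\), which is exactly the cap requirement.
\item In condition 3, every edge \(uv \in E(G)\) has both endpoints outside \(B_r\). Only the first item applies, and it demands \(\mathsf{col}_F(u) \neq \mathsf{col}_F(v)\).
\end{itemize}
These are precisely the two requirements on a solution in the definition of \bods, which proves the claim.

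No step here is genuinely difficult, since all the work is done in \autoref{lem:gendp-correct}. The only point requiring care is the identification \(E_r = E(G)\) and \(V_r = V(G)\), which rests on the niceness convention that each edge is introduced exactly once. Without that convention, condition 3 at the root would not cover every edge of \(G\).
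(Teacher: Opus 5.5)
Your proposal is correct and matches the paper's proof: at the root, \(B_r = \emptyset\), \(V_r = V(G)\) and \(E_r = E(G)\), so condition 2 and the first item of condition 3 of a partial solution of state \(f_{\emptyset}\) are exactly the cap and distinctness requirements of a solution. The equivalence then follows from \autoref{lem:gendp-correct}.
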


\begin{proof}
At the root \(r\) we have \(B_r = \emptyset\), \(V_r = V(G)\), and \(E_r =
E(G)\). A partial solution of state \(f_{\emptyset}\) at \(r\) is therefore a
set \(F \subseteq E_{\mathrm{free}}\) such that \(d_{F}(v) \leq C\) for every
\(v \in V(G)\), by condition 2, and \(\mathsf{col}_{F}(u) \neq \mathsf{col}_{F}(v)\)
for every \(uv \in E(G)\), by the first item of condition 3; that is, exactly a
solution of the instance. The claim now follows from
\autoref{lem:gendp-correct}.
\end{proof}

\subsection{Running time}

We store in each entry \(D[t, f]\) a single bit, recording whether a partial
solution of state \(f\) at \(t\) exists, together with a back-pointer to the
child entry or entries from which it was obtained and the local choice that
produced it. A solution itself is then recovered, when the answer is \yes, by a
single top-down traversal from \((r, f_{\emptyset})\) that follows the
back-pointers and collects the edges taken at introduce edge nodes; this
traversal visits each node of \(T\) once and does \(O(1)\) work at each. The
traversal reconstructs exactly the set that the algorithm would have stored had
it stored sets rather than bits, since it follows the same choices; by
\autoref{lem:gendp-correct} that set is a partial solution of its state, and at
the root it is therefore a solution of the instance.

A state at \(t\) is a function from \(B_t\) into \(P\), so the number of states
at \(t\) is
\[
  |P|^{|B_t|} \leq \left(\frac{(C+1)(C+2)}{2}\right)^{\mathsf{tw}+1}
  \leq (C+1)^{2(\mathsf{tw}+1)} .
\]
We index the entries at \(t\) by reading a state as a \(|B_t|\)-digit number in
base \(|P|\), which costs \(O(\mathsf{tw})\) arithmetic operations per state and
makes each lookup and each update \(O(1)\) thereafter. The following claim
isolates the cost of a single node; it is the only place where the two kinds of
node behave differently, and the difference is what drives both the theorem and
the corollary below.

\begin{lemma}
\label{lem:node-cost}
Let \(t \in V(T)\). If \(t\) is a leaf, an introduce vertex node, an introduce
edge node, or a forget node, then the entries at \(t\) are computed in time
\(O\bigl((C+1)^{2(\mathsf{tw}+1)} \cdot \mathsf{tw}\bigr)\). If \(t\) is a join node,
then the entries at \(t\) are computed in time \(O\bigl((C+1)^{3(\mathsf{tw}+1)}
\cdot \mathsf{tw}\bigr)\).
\end{lemma}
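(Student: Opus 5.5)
The plan is to go through the node types one at a time. For each, I bound the number of states at \(t\) and the work done per state, using the indexing convention fixed just before the lemma. That convention makes each lookup \(O(1)\) once a state has been encoded, and encoding a state (or turning a state at \(t\) into the corresponding state at a child) costs \(O(\mathsf{tw})\).

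For non-join nodes the key observation is that each state \(f\) at \(t\) consults only a few child entries. There are at most \((C+1)^{2(\mathsf{tw}+1)}\) states in total, by the count of \(|P|^{|B_t|}\) given above.
\begin{itemize}
\item A leaf has a single state.
\item An introduce vertex node checks \(y_v\) and makes one lookup of \(f|_{B_{t'}}\).
\item An introduce edge node performs one equality test on final degrees and makes at most two lookups, of \(f\) and \(f'\).
\end{itemize}
Each of these steps costs \(O(\mathsf{tw})\) to form and encode the relevant child state. That gives \(O((C+1)^{2(\mathsf{tw}+1)}\cdot\mathsf{tw})\).

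The forget node is the one non-join case that needs care, because a state consults \(C+1\) child states \(f_x\). Done naively this would cost an extra factor \(C+1\). I would avoid it by iterating over the child's states instead. For each state \(g\) at \(t'\) with \(g(v)=(x,x)\) and \(D[t',g]\neq\bot\), I mark the entry \(D[t,g|_{B_t}]\). This touches each child state once, costs \(O(\mathsf{tw})\) per state, and there are at most \((C+1)^{2(\mathsf{tw}+1)}\) states at \(t'\), which has \(|B_{t'}|\le\mathsf{tw}+1\). It reproduces exactly the rule's outcome, up to the arbitrary choice of \(x\) recorded in the back-pointer.

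For a join node, the main obstacle is to avoid enumerating all pairs \((f_1,f_2)\), since that would give exponent \(4(\mathsf{tw}+1)\). My approach is to enumerate triples that are consistent coordinatewise. For each \(v\in B_t\), fix the common final degree \(x_v\) and the split \(y^{(1)}_v,y^{(2)}_v\) subject to \(y^{(1)}_v+y^{(2)}_v\le x_v\le C\). The number of such local triples \((x,y^{(1)},y^{(2)})\) per vertex is \(\binom{C+3}{3}\le (C+1)^3\). Hence the number of compatible triples \((f,f_1,f_2)\) is at most \((C+1)^{3(\mathsf{tw}+1)}\), and each is processed in \(O(\mathsf{tw})\) time: encode the three states, look up the two child bits, and set the bit for \(f\) if both are set. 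This gives the second bound. A short check that \(\binom{C+3}{3}\le (C+1)^3\) for all \(C\ge 0\) completes the argument.
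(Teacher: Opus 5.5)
Your proposal is correct and follows the paper's proof essentially step for step. In particular, your join-node count over local triples \((x_v, y^{(1)}_v, y^{(2)}_v)\) with \(y^{(1)}_v + y^{(2)}_v \leq x_v \leq C\) is the paper's count over \((x_v, y_v, y^{(1)}_v)\) under the substitution \(y_v = y^{(1)}_v + y^{(2)}_v\), and it gives the same \(\binom{C+3}{3}^{|B_t|}\). One small remark: at a forget node the direct rule already fits the bound, because \(|B_t| \leq \mathsf{tw}\) there, so the \(C+1\) lookups per state total at most \((C+1)\cdot|P|^{\mathsf{tw}} \leq (C+1)^{2(\mathsf{tw}+1)}\) --- this is exactly the paper's observation that the pairs \((f,x)\) inject into the child states --- and your loop reorganization is therefore harmless but not needed.
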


\begin{proof}
A leaf has a single state and takes constant time. At an introduce vertex node
and at an introduce edge node, the rule consults \(O(1)\) entries of the child
for each state at \(t\), at a cost of \(O(\mathsf{tw})\) each, and the number of
states at \(t\) is at most \((C+1)^{2(\mathsf{tw}+1)}\). At a forget node, the rule
consults the entries \(D[t', f_x]\); as \(f\) ranges over the states at \(t\)
and \(x\) over \(\{0, \ldots, C\}\), the pairs \((f, x)\) are in bijection with
a subset of the states at \(t'\), namely those whose two coordinates at \(v\)
agree. Hence at most \((C+1)^{2(\mathsf{tw}+1)}\) entries are consulted, again at a
cost of \(O(\mathsf{tw})\) each.

Now let \(t\) be a join node. Compatibility forces \(f\), \(f_1\) and \(f_2\) to
carry the same final degrees, so we group the states at \(t\), \(t_1\) and
\(t_2\) according to their final-degree part \(x = (x_v)_{v \in B_t}\), and
never compare states from different groups. Fix such an \(x\). A state at \(t\)
in this group is determined by a vector \(y = (y_v)_{v \in B_t}\) with \(y_v
\leq x_v\); and given \(f\), a compatible pair \((f_1, f_2)\) is determined by
the vector \(y^{(1)} = (y^{(1)}_v)_{v \in B_t}\) with \(y^{(1)}_v \leq y_v\),
since then \(y^{(2)}_v = y_v - y^{(1)}_v\). Note that \(y^{(1)}_v \leq y_v \leq
x_v\) and \(y^{(2)}_v \leq y_v \leq x_v\) hold automatically, so every such
vector does yield a pair of states. The number of pairs \((y, y^{(1)})\) to be
examined for the group \(x\) is therefore
\[
  \sum_{y \leq x} \prod_{v \in B_t} (y_v + 1)
  = \prod_{v \in B_t} \sum_{y_v = 0}^{x_v} (y_v + 1)
  = \prod_{v \in B_t} \frac{(x_v+1)(x_v+2)}{2},
\]
and summing over all groups \(x\) gives
\[
  \prod_{v \in B_t} \left(
    \sum_{x_v = 0}^{C} \frac{(x_v+1)(x_v+2)}{2} \right)
  = \prod_{v \in B_t} \binom{C+3}{3}
  = \binom{C+3}{3}^{|B_t|}
  \leq (C+1)^{3(\mathsf{tw}+1)},
\]
where the last step uses \(\binom{C+3}{3} = (C+1)(C+2)(C+3)/6 \leq (C+1)^{3}\),
which holds for every \(C \geq 0\). Each pair examined costs \(O(\mathsf{tw})\),
which gives the claim.
\end{proof}

\begin{theorem}
\label{thm:gendp}
\bods can be solved in time \(O\bigl((C+1)^{3(\mathsf{tw}+1)} \cdot (\mathsf{tw}+1)^{2} \cdot n\bigr)\), and a solution can be produced within the same bound
whenever one exists. In particular, \bods is \FPT parameterized by \(\mathsf{tw} +
C\).
\end{theorem}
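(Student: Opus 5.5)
The theorem follows by combining the correctness lemmas with the per-node cost bound and a count of nodes. I would first invoke \autoref{lem:gendp-root}, which says the algorithm answers \yes exactly when a solution exists; correctness is therefore already settled. For producing a solution, I would use the back-pointer traversal described before \autoref{lem:node-cost}. It visits each node once with \(O(1)\) work, so it costs \(O(|V(T)|)\), which is dominated by the table computation. By \autoref{lem:gendp-correct}, the set it collects is a partial solution of state \(f_{\emptyset}\) at the root, hence a solution.

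For the running time, I would sum the bound of \autoref{lem:node-cost} over all nodes of \(T\). Every node costs at most \(O\bigl((C+1)^{3(\mathsf{tw}+1)}\cdot \mathsf{tw}\bigr)\), since the join-node bound dominates the others. I also need to account for indexing: reading a state as a base-\(|P|\) number costs \(O(\mathsf{tw})\) operations, which is already absorbed into the per-node bound. The remaining ingredient is the number of nodes. The nice decomposition is supplied as input, and I would assume, as \autoref{lem:makenice} guarantees, that it has \(O(\mathsf{tw}\cdot n)\) nodes, or more carefully \(O((\mathsf{tw}+1) n)\).

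One point needs care. In a nice decomposition there can also be up to \(m\) introduce-edge nodes, and \(m \le \mathsf{tw}\cdot n\) for a graph of treewidth \(\mathsf{tw}\). So the total node count is \(O((\mathsf{tw}+1) n)\) if the input decomposition has that size. If it were larger, I would first normalize it, but the clean route is to state the bound in terms of the node count and note that \(|V(T)| = O((\mathsf{tw}+1)n)\) for the decompositions we use. The product is then
\[
O\bigl((C+1)^{3(\mathsf{tw}+1)}\cdot \mathsf{tw}\bigr)\cdot O\bigl((\mathsf{tw}+1)\, n\bigr),
\]
which is within \(O\bigl((C+1)^{3(\mathsf{tw}+1)}(\mathsf{tw}+1)^2 n\bigr)\).

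The main obstacle is this bookkeeping on the number of nodes, in particular making sure the join-node count and the introduce-edge count stay linear in \((\mathsf{tw}+1)n\). The \FPT claim then follows immediately, since the bound has the form \(f(\mathsf{tw}+C)\cdot n\).
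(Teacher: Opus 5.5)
Your proposal is correct and follows essentially the same route as the paper: correctness from \autoref{lem:gendp-root}, first normalizing the decomposition to \(O(\mathsf{tw}\cdot n)\) nodes via \autoref{lem:makenice} if it is larger, then summing the per-node bounds of \autoref{lem:node-cost}, with the back-pointer traversal dominated by the table computation. Writing the node count as \(O((\mathsf{tw}+1)\,n)\) rather than \(O(\mathsf{tw}\cdot n)\) is, if anything, slightly more careful than the paper, since it also covers \(\mathsf{tw}=0\).
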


\begin{proof}
We may assume that \(\mathcal{T}\) has \(O(\mathsf{tw} \cdot n)\) nodes: if it has
more, we first replace it by the nice tree decomposition of the same width given
by \autoref{lem:makenice}, at a cost polynomial in the input. Summing the bounds
of \autoref{lem:node-cost} over these nodes gives \(O((C+1)^{3(\mathsf{tw}+1)}
\cdot (\mathsf{tw}+1)^{2} \cdot n)\), which dominates the \(O(\mathsf{tw} \cdot n)\)
cost of the top-down traversal that recovers a solution. Correctness is
\autoref{lem:gendp-root}.
\end{proof}

A nice path decomposition has no join nodes, by \autoref{def:nicetreedecomp}.
Only the first half of \autoref{lem:node-cost} therefore applies to a run of the
algorithm on such a decomposition, and we obtain the following sharper bound.
This is the form in which we use the algorithm for the vertex cover number,
because a graph supplied with a vertex cover of size \(k\) comes with a nice
path decomposition of width \(k\) for free, by \autoref{lem:vc-to-pathdecomp}.

\begin{corollary}
\label{cor:gendp-pw}
Let an instance of \bods be given with a nice path decomposition of \(G\) of
width \(\mathsf{pw}\) in place of the tree decomposition \(\mathcal{T}\). Then the
instance can be solved, and a solution produced if one exists, in time
\(O\bigl((C+1)^{2(\mathsf{pw}+1)} \cdot (\mathsf{pw}+1)^{2} \cdot n\bigr)\). In
particular, \bods is \FPT parameterized by \(\mathsf{pw} + C\).
\end{corollary}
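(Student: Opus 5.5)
The plan is to rerun the algorithm of \autoref{thm:gendp} on the nice path decomposition and redo only the running-time accounting. Correctness needs nothing new. A nice path decomposition is a nice tree decomposition with no join nodes, by \autoref{def:nicetreedecomp}. Therefore \autoref{lem:gendp-correct} and \autoref{lem:gendp-root} apply verbatim: neither proof uses anything about the decomposition beyond niceness, and the join case of the induction is simply vacuous. The same back-pointer traversal as before recovers a solution whenever the answer is \yes.

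For the running time, I first ensure that the decomposition has \(O(\mathsf{pw} \cdot n)\) nodes. If it does not, I replace it by the nice path decomposition of the same width supplied by the path version of \autoref{lem:makenice}. This preprocessing costs time polynomial in the input and keeps the structure a path, so join nodes still never occur. Every node is then a leaf, introduce vertex, introduce edge, or forget node. The first half of \autoref{lem:node-cost} bounds the work at each such node by \(O\bigl((C+1)^{2(\mathsf{pw}+1)}\cdot\mathsf{pw}\bigr)\). Summing over the \(O(\mathsf{pw}\cdot n)\) nodes gives \(O\bigl((C+1)^{2(\mathsf{pw}+1)}\cdot(\mathsf{pw}+1)^{2}\cdot n\bigr)\). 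The \(O(\mathsf{pw}\cdot n)\) reconstruction traversal is dominated by this bound.

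The only point requiring a little care is the preprocessing cost when the given decomposition is very long. In that case the conversion cost is absorbed exactly as in the proof of \autoref{thm:gendp}. This is where the phrase ``at a cost polynomial in the input'' is needed, and where the node count of \autoref{lem:makenice} is invoked. In the intended application the decomposition comes from \autoref{lem:vc-to-pathdecomp} and already has \(O(k\cdot n)\) nodes, so no conversion is needed there. The \FPT claim follows immediately, since the bound has the form \(g(\mathsf{pw}+C)\cdot n\).
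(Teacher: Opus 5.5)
Your proposal is correct and follows the paper's own proof essentially step for step. Correctness is inherited from \autoref{lem:gendp-correct} and \autoref{lem:gendp-root} with the join case vacuous, and the time bound comes from summing the first half of \autoref{lem:node-cost} over \(O(\mathsf{pw}\cdot n)\) nodes, invoking \autoref{lem:makenice} only if the given decomposition is too long. (As an aside, that conversion is never actually needed: by the connectivity condition of \autoref{def:treedecomp}, together with the empty bags at the leaf and root, a nice path decomposition introduces and forgets each vertex exactly once and introduces each edge exactly once, so it has exactly \(2n+m+1\) nodes, and \(m \leq \mathsf{pw}\cdot n\) because graphs of pathwidth \(\mathsf{pw}\) are \(\mathsf{pw}\)-degenerate.)
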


\begin{proof}
A nice path decomposition consists of leaf, introduce vertex, introduce edge,
and forget nodes only. As in \autoref{thm:gendp} we may assume that it has
\(O(\mathsf{pw} \cdot n)\) nodes, applying \autoref{lem:makenice} if necessary; the
decompositions that we supply to this algorithm all come from
\autoref{lem:vc-to-pathdecomp}, which meets this bound by construction. Every
step of the algorithm and of the proofs of \autoref{lem:gendp-correct} and
\autoref{lem:gendp-root} goes through verbatim, the join case being vacuous.
Summing the first bound of \autoref{lem:node-cost} over the nodes gives the
claim.
\end{proof}

In \autoref{sec:fptvc} we apply \autoref{cor:gendp-pw} to each of our four
problems in turn. In every case the reduction to \bods amounts to naming the set
\(E_{\mathrm{free}}\) of edges whose weights are not already fixed, and the
offset \(\mathsf{off}(v)\), which collects the contribution to the color of \(v\)
that no choice of the algorithm can alter; the cap \(C\) then comes from a
structural lemma bounding, for yes-instances with a vertex cover of size \(k\),
the number of free edges at a vertex that need to receive the larger weight.
\section{\prezo}
\label{sec:prezo}

We now turn to the first of the two pre-weighted problems. In this section we
prove the structural bound that we need for our first algorithm
(\autoref{lem:precolor}), give a second algorithm that needs no restriction on
the pre-weights (\autoref{thm:prezo-general}), and place the problem in \XP
parameterized by treewidth (\autoref{cor:prezo-xp-tw}). We also state the two
hardness results that follow directly from the corresponding results for \zero.
The problem is defined as follows.

\defproblem{\prezo}%
{An undirected graph \(G = (V, E)\) on \(n\) vertices, a subset
\(E' \subseteq E\), and a pre-edge-weighting \(\hat{w} : E' \to \{0,1\}\).}%
{Does \(G\) admit a proper weighting \(w : E(G) \to \{0,1\}\) that extends
\(\hat{w}\)?}

Taking \(E' = \emptyset\) shows that \zero is the special case of \prezo in
which no edge is pre-weighted. Every hardness result for \zero therefore carries
over, and we obtain the following from the \WoneHard ness of \zero parameterized
by the feedback vertex set number~\cite{autepanolanphilip2026} and by treedepth
(\autoref{thm:zero-td}).

\begin{corollary}
\label{cor:prezo-hard}
\prezo is \WoneHard parameterized by the feedback vertex set number of the input
graph, and also parameterized by its treedepth.
\end{corollary}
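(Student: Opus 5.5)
The corollary follows from the observation that \zero is the special case of \prezo with no pre-weighted edges. I would make this precise as a trivial parameterized reduction. The map sends an instance \(G\) of \zero to the instance \((G, E' := \emptyset, \hat{w} := \emptyset)\) of \prezo. It runs in linear time and leaves the graph unchanged. Hence the feedback vertex set number and the treedepth of the output equal those of the input, so the parameter is preserved exactly.

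For correctness, every weighting \(w : E(G) \to \{0,1\}\) extends the empty pre-edge-weighting vacuously, since the condition \(w|_{E'} = \hat{w}\) is empty when \(E' = \emptyset\). So \(G\) admits a proper \(\{0,1\}\)-weighting if and only if the constructed instance admits a proper weighting extending \(\hat{w}\). Thus the map is a parameterized reduction from \zero to \prezo for each of the two parameters.

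It remains to invoke the two source hardness results. For the feedback vertex set number I would use the \WoneHard ness of \zero from our earlier work~\cite{autepanolanphilip2026}. For treedepth I would use \autoref{thm:zero-td}. Since \textsf{W[1]}-hardness is preserved under parameterized reductions, \prezo is \WoneHard under each parameter.

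There is no real obstacle here. The only point to state explicitly is that the reduction keeps the graph itself unchanged, so no structural parameter can grow.
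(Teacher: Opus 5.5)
Your proposal is correct and matches the paper's argument. The paper likewise notes that \zero is the special case of \prezo with \(E' = \emptyset\), so the \WoneHard ness of \zero parameterized by the feedback vertex set number (from the earlier work) and by treedepth (\autoref{thm:zero-td}) carries over directly; your explicit check that the graph, and hence both parameters, is unchanged is the justification the paper leaves implicit.
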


Both our \FPT algorithms have the vertex cover number as their parameter. The
first (\autoref{thm:fptvc:prezo}) runs in time \(2^{O(k \log k)} \cdot n\), and
applies only when the pre-edge-weighting is a function \(\hat{w} : E' \to
\{1\}\). The second (\autoref{thm:prezo-general}) applies to arbitrary
pre-edge-weightings \(\hat{w} : E' \to \{0,1\}\), but has a far larger
dependence on \(k\). We also give an \XP algorithm parameterized by treewidth
that needs no restriction.

\subsection{Base colors and excess}
\label{sec:prezo:bound}

Let \(\hat{w} : E' \to \{0,1\}\) be a pre-edge-weighting of \(G\). We write
\[
  \mathsf{base}_{\hat{w}}(v) := \sum_{e \in E' \cap E_G(v)} \hat{w}(e)
\]
for the color that the pre-weighting alone induces on \(v\), which is the number
of pre-weighted edges of weight \(1\) at \(v\); when every pre-weight is \(1\),
this is simply the number of edges of \(E'\) incident with \(v\). For an
extension \(w\) of \(\hat{w}\) we write
\[
  \mathsf{exc}_{w}(v)
  := \mathsf{color}_{w}(v) - \mathsf{base}_{\hat{w}}(v)
\]
for the \emph{excess} of \(v\) under \(w\). By the first part of
\autoref{obs:color-counts} the edges of weight \(1\) at \(v\) are the
pre-weighted ones of weight \(1\), of which there are \(\mathsf{base}_{\hat{w}}(v)\), together with the free ones that \(w\) weights \(1\); the
two groups are disjoint, and so
\begin{equation}
\label{eq:prezocolor}
  \mathsf{exc}_{w}(v)
  = \bigl|\{\, e \in E_{\mathrm{free}}(v) : w(e) = 1 \,\}\bigr|
  \geq 0
  \qquad\text{for every } v \in V(G).
\end{equation}
The excess is thus the number of free edges at \(v\) that receive weight \(1\),
and it is the only part of the color of \(v\) that an algorithm can influence.

For the remainder of this subsection we assume that every pre-weight is \(1\).
The bound of \(8k^{2}+8k\) that \autoref{lem:boundedcolor} places on the colors
of a yes-instance of \zero with vertex cover number \(k\) does not hold here,
since arbitrarily many edges of \(G\) may be pre-weighted \(1\). We \emph{can}
bound the excess:

\begin{lemma}
\label{lem:precolor}
Let \(G\) be a graph with vertex cover number \(k\), and let \((G, \hat{w})\) be
a yes-instance of \prezo in which every pre-weight is \(1\). Then there is a
proper extension \(w : E(G) \to \{0,1\}\) of \(\hat{w}\) such that \(\mathsf{exc}_{w}(v) \leq 8k^{2}+8k\), and hence \(\mathsf{color}_{w}(v) \leq
\mathsf{base}_{\hat{w}}(v) + 8k^{2}+8k\), for every \(v \in V(G)\).
\end{lemma}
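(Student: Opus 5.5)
The plan is an extremal argument. Among all proper extensions of \(\hat{w}\), fix one, \(w\), that minimizes the total excess \(\sum_{v} \mathsf{exc}_{w}(v)\), which by \eqref{eq:prezocolor} is the number of free edges of weight \(1\) counted from both ends. I will show that this \(w\) already satisfies \(\mathsf{exc}_{w}(v) \leq 8k^{2}+8k\) for every \(v\). Let \(S\) be a minimum vertex cover and \(I := V(G)\setminus S\). Every \(u \in I\) has degree at most \(k\), so \(\mathsf{exc}_{w}(u) \leq \mathsf{color}_{w}(u) \leq k\); moreover, since every pre-weight is \(1\), every color on \(I\) lies in \(\{0,\dots,k\}\). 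Hence only a vertex \(v \in S\) can violate the bound. Suppose \(\mathsf{exc}_{w}(v) > 8k^{2}+8k\) for some \(v \in S\). At most \(k-1\) of its free weight-\(1\) edges go into \(S\), so \(v\) has a set \(U\) of more than \(8k^{2}+7k\) neighbors \(u \in I\) with \(vu\) free and \(w(vu)=1\).

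The modification is to choose a suitable \(r \geq 1\) and \(r\) vertices \(u_1,\dots,u_r \in U\), and to lower \(w(vu_i)\) from \(1\) to \(0\). This strictly decreases the total excess, so it suffices to show that some such choice keeps the weighting proper. This contradicts minimality. Only \(v\) and the \(u_i\) change color. The new color of \(v\) is \(\mathsf{color}_{w}(v)-r\), which stays above \(8k^{2} > k\) for \(r \leq 2k\); so it cannot clash with any vertex of \(I\), including the \(u_i\). It can only clash with the at most \(k-1\) colors of its neighbors in \(S\), so at most \(k-1\) values of \(r\in[1,2k]\) are forbidden on \(v\)'s side. Each \(u_i\) drops from \(\mathsf{color}_{w}(u_i)\) to \(\mathsf{color}_{w}(u_i)-1\). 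Since the \(u_i\) are pairwise nonadjacent, \(u_i\) can only clash with some \(s \in N(u_i)\cap S\setminus\{v\}\) whose color equals \(\mathsf{color}_{w}(u_i)-1 \leq k-1\). Call \(u \in U\) \emph{safe} if no such \(s\) exists. If at least \(2k\) vertices of \(U\) are safe, pick an admissible \(r\) and any \(r\) safe vertices, and we are done.

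The main obstacle is the case where almost all of \(U\) is unsafe. Each unsafe \(u\) is witnessed by a vertex \(s \in S\) with \(\mathsf{color}_{w}(s) = \mathsf{color}_{w}(u)-1 \leq k-1\). Because every pre-weight is \(1\), such an \(s\) has at most \(k-1\) edges of weight \(1\) in total, and in particular at most \(k-1\) pre-weighted edges. There are at most \(k\) possible witnesses \(s\), and at most \(k+1\) possible colors for \(u\). My first attempt is to partition \(U\) into at most \(k(k+1)\) classes by (witness, color) and to repair a whole class at once. For an unsafe \(u\) witnessed by \(s\), either \(w(us)=1\), and then \(us\) is free except for at most \(k-1\) choices of \(u\) per \(s\), or \(w(us)=0\). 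In the second case I would instead move weight: lower \(vu\) and raise \(us\), so that \(u\) keeps its color and \(s\) gains \(1\). I would do this in batches at \(s\) so that the new color of \(s\) avoids the at most \(k-1\) colors of its \(S\)-neighbours and jumps above \(k\), away from \(I\). Such a step does not decrease the total excess, so I would replace the potential by the lexicographic pair (\(\max_{S}\mathsf{exc}\), total excess), or by \(\sum_{s\in S}\mathsf{exc}_{w}(s)^{2}\), and check that it strictly improves. The constants \(8k^{2}+8k\) should come out of this counting, namely \(O(k)\) colors and witnesses times \(O(k)\) slack per class. If the bookkeeping becomes unwieldy, the fallback is to redo the argument behind \autoref{lem:boundedcolor} verbatim. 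That argument only ever lowers free weight-\(1\) edges or performs such local exchanges, and it never needs to alter an edge of weight \(1\) that is pre-weighted, which is exactly where the hypothesis that every pre-weight is \(1\) would enter.
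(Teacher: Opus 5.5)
Your strategy is the paper's. You take an extremal proper extension and lower free weight-\(1\) edges from the offending vertex \(v\in S\) into \(I\). Where that is blocked, you transfer weight from edges \(vu\) to edges \(su\) at a low-colored witness \(s\). Your static notion of a safe vertex is a valid simplification of the paper's greedy pass, since whether lowering \(vu\) alone is harmless does not depend on which other edges at \(v\) are lowered.

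However, the proposal stops at exactly the step that carries the proof, and two parts of it fail as written. The first is the potential. A transfer leaves the total excess unchanged, as you note. It also leaves your lexicographic pair \((\max_{S}\mathsf{exc}_w,\ \sum_{u}\mathsf{exc}_w(u))\) unchanged whenever another vertex of \(S\) attains the same maximum excess as \(v\). The sum \(\sum_{s\in S}\mathsf{exc}_w(s)^2\) does work: a transfer of size \(r\) changes it by \(2r(\mathsf{exc}_w(s)+r-\mathsf{exc}_w(v))<0\). But you have to commit to it and verify this. The paper instead minimizes \(P(w)=\sum_{u}\max(0,\mathsf{exc}_w(u)-T)\) with \(T=8k^2+8k\), which makes the check immediate: the receiving vertex ends with color at most \(8k\le T\), so its term stays \(0\).

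The second problem is that the counting does not close with your classes. Splitting the unsafe vertices by (witness, color) into \(k(k+1)\) classes guarantees, from roughly \(8k^2\) vertices, only a class of constant size. A batch, however, may need size about \(3k\): it must lift \(s\) above every color occurring in \(I\), and dodge the colors of the other vertices of \(S\) at both \(v\) and \(s\). The observation you need is that the witness determines the color, \(\mathsf{color}_w(u)=\mathsf{color}_w(s)+1\). So there are at most \(k-1\) classes, and averaging gives a single witness \(s\) with at least \(8k+14\) blocked neighbors.

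Since \(t:=\mathsf{color}_w(s)\le k-1\), at most \(k-1\) of the edges \(su\) carry weight \(1\). Thus \(w(su)=1\) is not a case to treat, only a few vertices to discard. The remaining at least \(7k+15\) edges carry weight \(0\), and they are free precisely because no edge is pre-weighted \(0\). Now take \(\hat d\) to be the largest value in \(\{1,\dots,7k+1\}\) such that both \(\mathsf{color}_w(v)-\hat d\) and \(t+\hat d\) lie outside the palette of \(w\). At most \(2(2k+1)\) values are excluded, so \(\hat d\ge 3k-1\), which lifts \(s\) above \(k\).

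This also shows that your fallback remark puts the hypothesis in the wrong place. Lowering needs no restriction, since \(U\) consists of free edges by definition; the hypothesis is used only to know that the raised edges \(su\) are free. Appealing to the proof of \autoref{lem:boundedcolor} is not a substitute for carrying out this step.
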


\begin{proof}
Let \(S\) be a vertex cover of \(G\) with \(|S| = k\), and let \(I := V(G)
\setminus S\) be the corresponding independent set. Set \(T := 8k^{2}+8k\), and
define the \emph{potential} of a proper extension \(w'\) of \(\hat{w}\) to be
\[
  P(w') := \sum_{v \in V(G)} \max\bigl(0,\ \mathsf{exc}_{w'}(v) - T\bigr),
\]
the sum of the surpluses, if any, over \(T\) of the excesses that \(w'\) assigns
to the vertices of \(G\). Since \((G, \hat{w})\) is a yes-instance, proper
extensions of \(\hat{w}\) exist; let \(w\) be one of minimum potential. We show
that \(P(w) = 0\), which is the assertion of the lemma.

Before starting we record a bound that we use repeatedly. Every neighbor of a
vertex of \(I\) lies in \(S\), so \(d_G(u) \leq k\) for every \(u \in I\), and
therefore
\begin{equation}
\label{eq:prezo-Ismall}
  \mathsf{exc}_{w}(u) \leq \mathsf{color}_{w}(u) \leq d_G(u) \leq k
  \qquad\text{for every } u \in I .
\end{equation}
Note that this bounds the \emph{color}, and not merely the excess, of a vertex
of \(I\): such a vertex has at most \(k\) incident edges in all, pre-weighted or
not. It is only in \(S\) that base colors can be large, and the argument below
never needs them to be small there.

Assume for contradiction that \(P(w) > 0\). Then there is a vertex \(x\) with
\(\mathsf{exc}_{w}(x) > T\); by \eqref{eq:prezo-Ismall} and \(k \leq T\) we have
\(x \in S\). Let \(\mathbf{c} := \mathsf{color}_{w}(x)\), and note that
\begin{equation}
\label{eq:prezo-cbig}
  \mathbf{c} \geq \mathsf{exc}_{w}(x) \geq T + 1 .
\end{equation}

The vertex \(x\) has at most \(k-1\) neighbors in \(S\), so at most \(k-1\) of
the free edges of weight \(1\) at \(x\) have their other endpoint in \(S\). Let
\[
  Y := \{\, y \in I : xy \text{ is free and } w(xy) = 1 \,\} ;
\]
then by \eqref{eq:prezocolor} and \eqref{eq:prezo-cbig},
\begin{equation}
\label{eq:prezo-Y}
  |Y| \geq \mathsf{exc}_{w}(x) - (k-1) \geq T + 1 - (k-1) = 8k^{2}+7k+2 .
\end{equation}
Every edge \(xy\) with \(y \in Y\) is free, so its weight may be changed without
destroying the property of extending \(\hat{w}\); this holds by the definition
of \(Y\), and does not use the restriction on the pre-weights.

Let \(C\) be the set of all colors that \(w\) assigns to the vertices of \(G\).
By \eqref{eq:prezo-Ismall} every vertex of \(I\) has a color in \(\{0, 1,
\ldots, k\}\), and \(S\) has \(k\) vertices, so
\begin{equation}
\label{eq:prezo-C}
  |C| \leq (k+1) + k = 2k+1 .
\end{equation}
Let \(D := \{\, d \in \{1, 2, \ldots, 7k+1\} : \mathbf{c} - d \notin C \,\}\).
By \eqref{eq:prezo-cbig} we have \(\mathbf{c} > 7k+1\) for every \(k \geq 1\),
so all the values \(\mathbf{c} - d\) with \(d \in \{1, \ldots, 7k+1\}\) are
positive integers, and they are distinct; hence by \eqref{eq:prezo-C} we get
\(|D| \geq (7k+1) - (2k+1) = 5k\). Let \(d_1\) be the smallest element of \(D\);
since at most \(2k+1\) values of \(d\) are excluded, \(d_1 \leq 2k+2\).

Consider the following greedy procedure, which reduces the color of \(x\) by
changing the weights of some edges \(xy\) with \(y \in Y\) from \(1\) to \(0\).
Set \(q := 0\), and examine the vertices of \(Y\) one at a time in an arbitrary
order. When \(y\) is examined the color of \(x\) is \(\mathbf{c} - q\); if
setting the weight of \(xy\) to \(0\) would create a conflict between \(y\) and
any of its neighbors, including \(x\), then skip \(y\), and otherwise set the
weight of \(xy\) to \(0\) and increment \(q\). Stop when \(q = d_1\), or when
every vertex of \(Y\) has been examined, whichever happens first. Let
\(\tilde{w}\) be the weighting at which the procedure stops; it extends
\(\hat{w}\), because only free edges are altered.

\begin{claim}
\label{cl:prezo-q}
At the end of the procedure, \(q < d_1\).
\end{claim}

\begin{claimproof}
Suppose instead that the procedure reaches \(q = d_1\). Then \(\mathsf{color}_{\tilde{w}}(x) = \mathbf{c} - d_1 \notin C\); for each altered edge
\(xy\) the colors of \(x\) and of \(y\) each dropped by one; and all other
colors are unchanged.

Since \(\mathsf{color}_{\tilde{w}}(x) \notin C\), no neighbor of \(x\) in \(S\) has
the color of \(x\) under \(\tilde{w}\). Since \(\mathsf{color}_{\tilde{w}}(x) =
\mathbf{c} - d_1 \geq (T+1) - (2k+2) = 8k^{2}+6k-1 > k\) for every \(k \geq 1\),
and since by \eqref{eq:prezo-Ismall} every vertex of \(I\) has color at most
\(k\) under \(w\) and hence at most \(k\) under \(\tilde{w}\), no neighbor of
\(x\) in \(I\) has the color of \(x\) either.

Now let \(y \in I\) be a vertex whose edge \(xy\) was altered. The procedure
altered it only because doing so created no conflict between \(y\) and any of
its neighbors, all of which lie in \(S\). Of those neighbors only \(x\) can have
its color changed later in the procedure, and by the previous paragraph \(\mathsf{color}_{\tilde{w}}(x) \neq \mathsf{color}_{\tilde{w}}(y)\) when the procedure ends.
So \(\tilde{w}\) creates no conflict at any vertex of \(I\), and \(\tilde{w}\)
is proper.

Finally, the excess of \(x\) dropped by \(d_1 \geq 1\) and no excess increased,
so \(P(\tilde{w}) < P(w)\), contradicting the choice of \(w\).
\end{claimproof}

Let \(Y_{\mathrm{dropped}} \subseteq Y\) be the set of those \(y\) for which the
procedure altered \(xy\), so that \(|Y_{\mathrm{dropped}}| = q \leq d_1 - 1 \leq
2k+1\) by the claim just proved, and put \(Y' := Y \setminus
Y_{\mathrm{dropped}}\). By \eqref{eq:prezo-Y},
\begin{equation}
\label{eq:prezo-Yprime}
  |Y'| \geq (8k^{2}+7k+2) - (2k+1) = 8k^{2}+5k+1 .
\end{equation}

Let \(y \in Y'\). When the procedure examined \(y\) it found that lowering the
weight of \(xy\) would create a conflict between \(y\) and some neighbor of
\(y\). That neighbor was not \(x\): at that moment the color of \(x\) was
\(\mathbf{c} - q \geq (T+1) - (2k+1) = 8k^{2}+6k\), which exceeds \(k+2\) for
every \(k \geq 1\), while the color of \(y\) was at most \(k\) by
\eqref{eq:prezo-Ismall}. Since the procedure changes the color of no vertex
other than \(x\) and the vertices of \(Y_{\mathrm{dropped}}\), and \(y \notin
Y_{\mathrm{dropped}}\), the colors of \(y\) and of every vertex of \(S \setminus
\{x\}\) are their \(w\)-colors throughout. Hence there is a vertex \(z_y \in S
\setminus \{x\}\) with \(z_y y \in E(G)\) and \(\mathsf{color}_{w}(z_y) = \mathsf{color}_{w}(y) - 1\); and since \(\mathsf{color}_{w}(y) \leq k\), we get \(\mathsf{color}_{w}(z_y) \leq k-1\).

In particular \(S \setminus \{x\} \neq \emptyset\), so \(k \geq 2\). Since \(|S
\setminus \{x\}| \leq k-1\), averaging over \eqref{eq:prezo-Yprime} gives a
vertex \(z \in S \setminus \{x\}\) and a set \(Y'' \subseteq Y'\) with
\[
  |Y''| \geq \frac{8k^{2}+5k+1}{k-1} = 8k+13+\frac{14}{k-1} > 8k+13 ,
\]
hence \(|Y''| \geq 8k+14\), such that \(zy \in E(G)\) and \(\mathsf{color}_{w}(y) =
\mathsf{color}_{w}(z) + 1\) for every \(y \in Y''\). Put \(t := \mathsf{color}_{w}(z)
\leq k-1\).

Since \(\mathsf{color}_{w}(z) = t\), at most \(t \leq k-1\) of the edges incident
with \(z\) have weight \(1\) under \(w\). So at least \(|Y''| - (k-1) \geq
(8k+14) - (k-1) = 7k+15\) of the edges \(zy\) with \(y \in Y''\) have weight
\(0\) under \(w\). Every such edge is free: all pre-weights are \(1\) and \(w\)
extends \(\hat{w}\), so an edge of weight \(0\) under \(w\) cannot lie in
\(E'\). This is the one place where the restriction to pre-weights equal to
\(1\) is used, and the only step of the argument that the corresponding proof
for \zero does not have to make. Let \(Y_0 \subseteq Y''\) be a set of vertices
with \(w(zy) = 0\) for every \(y \in Y_0\) and \(|Y_0| \geq 7k+15\).

Recall that \(D \subseteq \{1, 2, \ldots, 7k+1\}\), and that \(\mathbf{c} - d
\notin C\) for every \(d \in D\). Call \(d \in \{1, \ldots, 7k+1\}\) \emph{bad}
if \(d \notin D\) or \(t + d \in C\). Since \(d \mapsto \mathbf{c} - d\) is
injective, at most \(|C| \leq 2k+1\) values of \(d\) fail to lie in \(D\); and
since \(d \mapsto t + d\) is injective, at most \(|C| \leq 2k+1\) values satisfy
\(t + d \in C\). So at most \(4k+2\) values are bad, and hence at least \((7k+1)
- (4k+2) = 3k-1 \geq 5\) are not. Let \(\hat{d}\) be the largest value that is
not bad; as the bad values number at most \(4k+2\), we get
\begin{equation}
\label{eq:prezo-dhat}
  3k-1 \leq (7k+1) - (4k+2) \leq \hat{d} \leq 7k+1 \leq |Y_0| ,
  \qquad
  \mathbf{c} - \hat{d} \notin C ,
  \qquad
  t + \hat{d} \notin C .
\end{equation}
Choose an arbitrary \(Y_1 \subseteq Y_0\) with \(|Y_1| = \hat{d}\), and define a
weighting \(w'\) by
\begin{itemize}
    \item \(w'(xy) := 0\) and \(w'(zy) := 1\) for every \(y \in Y_1\), these
          edges having weights \(1\) and \(0\) respectively under \(w\); and
    \item \(w'(e) := w(e)\) for every other edge \(e\) of \(G\).
\end{itemize}

\begin{claim}
\label{cl:prezo-extends}
\(w'\) extends \(\hat{w}\).
\end{claim}

\begin{claimproof}
The edges whose weights differ between \(w\) and \(w'\) are those of the form
\(xy\) and \(zy\) with \(y \in Y_1\). For the first kind, \(Y_1 \subseteq Y\)
and every edge \(xy\) with \(y \in Y\) is free by the definition of \(Y\). For
the second kind, \(Y_1 \subseteq Y_0\) and every edge \(zy\) with \(y \in Y_0\)
is free, as shown above. So \(w'\) agrees with \(w\) on \(E'\), and \(w\)
extends \(\hat{w}\).
\end{claimproof}

\begin{claim}
\label{cl:prezo-proper}
\(w'\) is a proper weighting of \(G\).
\end{claim}

\begin{claimproof}
First we record the colors under \(w'\). For \(y \in Y_1\) the color is
unchanged: it loses one unit from the edge \(xy\) and gains one from the edge
\(zy\), and no other edge at \(y\) is altered. The color of \(x\) drops from
\(\mathbf{c}\) to \(\mathbf{c} - \hat{d}\), the color of \(z\) rises from \(t\)
to \(t + \hat{d}\), and every other color is unchanged. Using
\eqref{eq:prezo-cbig}, \eqref{eq:prezo-dhat}, and \(0 \leq t \leq k-1\),
\begin{equation}
\label{eq:prezo-newcolors}
  \mathsf{color}_{w'}(x) = \mathbf{c} - \hat{d} \geq (T+1) - (7k+1) = 8k^{2}+k ,
  \qquad
  3k-1 \leq \mathsf{color}_{w'}(z) = t + \hat{d} \leq 8k .
\end{equation}
Note that \(8k < 8k^{2}+k\) for every \(k \geq 1\). We check each vertex whose
color changed, and each vertex adjacent to one.
\begin{itemize}
    \item \emph{The vertex \(x\).} Its neighbors in \(S \setminus \{x, z\}\)
          keep their \(w\)-colors, which lie in \(C\), and \(\mathbf{c} -
          \hat{d} \notin C\). For \(z\) we have \(\mathsf{color}_{w'}(z) \leq 8k <
          8k^{2}+k \leq \mathsf{color}_{w'}(x)\) by \eqref{eq:prezo-newcolors}. Its
          neighbors in \(I\) have color at most \(k\) under \(w'\), by
          \eqref{eq:prezo-Ismall} and the fact that no color in \(I\) increased,
          and \(k < 8k^{2}+k\).
    \item \emph{The vertex \(z\).} Its neighbors in \(S \setminus \{x, z\}\)
          keep their \(w\)-colors, which lie in \(C\), and \(t + \hat{d} \notin
          C\). The vertex \(x\) is handled above. Its neighbors in \(I\) have
          color at most \(k\) under \(w'\), and \(k < 3k-1 \leq \mathsf{color}_{w'}(z)\) for every \(k \geq 2\).
    \item \emph{A vertex \(y \in Y_1\).} Its color is unchanged, and all of its
          neighbors lie in \(S\). Those in \(S \setminus \{x, z\}\) keep their
          colors, and \(w\) was proper, so no conflict arises there. The colors
          of \(x\) and of \(z\) under \(w'\) are at least \(8k^{2}+k\) and
          \(3k-1\) respectively, both of which exceed the bound \(k\) of
          \eqref{eq:prezo-Ismall} on the color of \(y\).
    \item \emph{Every other vertex.} Its color, and the colors of all its
          neighbors other than \(x\) and \(z\), are unchanged; the two preceding
          items cover its edges to \(x\) and to \(z\).
\end{itemize}
Hence \(w'\) is proper.
\end{claimproof}

It remains to compare potentials. The excess of \(x\) drops by \(\hat{d} \geq
3k-1 \geq 1\), so the term of \(x\) in \(P\) strictly decreases. The excess of
\(z\) rises to \(\mathsf{exc}_{w'}(z) \leq \mathsf{color}_{w'}(z) \leq 8k \leq T\) by
\eqref{eq:prezo-newcolors}, so the term of \(z\) is \(0\) both before and after.
Every vertex of \(Y_1\) keeps its color, and its base color is unchanged, so its
excess is unchanged; and all remaining excesses are unchanged. Hence \(P(w') <
P(w)\), while by the two claims above the weighting \(w'\) is a proper
extension of \(\hat{w}\). This contradicts the choice of \(w\), so
\(P(w) = 0\), which proves the lemma.
\end{proof}

\autoref{lem:precolor} is the ingredient that \autoref{lem:red:prezo} requires,
and together with \autoref{cor:gendp-pw} it yields the \(2^{O(k \log k)} \cdot
n\) algorithm of \autoref{thm:fptvc:prezo} for \prezo parameterized by the
vertex cover number.

\subsection{An \XP algorithm parameterized by treewidth}
\label{sec:prezo:xp}

\autoref{thm:gendp} also gives an algorithm for the parameter treewidth alone,
at no extra cost: it suffices to choose the cap so large that it constrains
nothing.

\begin{corollary}
\label{cor:prezo-xp-tw}
Given an instance \((G, \hat{w})\) of \prezo, with \(\hat{w} : E' \to \{0,1\}\)
arbitrary, and a nice tree decomposition of \(G\) of width \(\mathsf{tw}\), the
instance can be solved, and a proper extension of \(\hat{w}\) produced if one
exists, in time \(O\bigl((\Delta+1)^{3(\mathsf{tw}+1)} \cdot (\mathsf{tw}+1)^{2} \cdot
n\bigr)\), where \(\Delta\) is the maximum degree of \(G\). In particular,
\prezo is in \XP parameterized by the treewidth of the input graph.
\end{corollary}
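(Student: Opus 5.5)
The plan is to reduce the instance \((G,\hat{w})\) to \bods and invoke \autoref{thm:gendp} with a cap that never binds. Take the same graph \(G\) and the free edge set \(E_{\mathrm{free}} := E(G)\setminus E'\). Set \(\mathsf{off}(v) := \mathsf{base}_{\hat{w}}(v)\), which is nonnegative because all pre-weights lie in \(\{0,1\}\). Keep the given nice tree decomposition, and choose \(C := \Delta\).

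\emph{Correspondence of solutions.} Map a set \(F\subseteq E_{\mathrm{free}}\) to the weighting \(w_F\) that agrees with \(\hat{w}\) on \(E'\), gives weight \(1\) to the edges of \(F\), and gives weight \(0\) to the remaining free edges. This is a bijection between subsets of \(E_{\mathrm{free}}\) and extensions of \(\hat{w}\). By \eqref{eq:prezocolor}, \(\mathsf{color}_{w_F}(v) = \mathsf{base}_{\hat{w}}(v) + d_F(v) = \mathsf{col}_F(v)\) for every vertex \(v\). Hence \(w_F\) is proper exactly when \(\mathsf{col}_F(u)\neq\mathsf{col}_F(v)\) on every edge \(uv\). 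The cap condition \(d_F(v)\le C\) holds automatically, since \(d_F(v)\le d_G(v)\le\Delta\). So solutions of the \bods instance correspond exactly to proper extensions of \(\hat{w}\), and the algorithm's output \(F\) converts to \(w_F\) in linear time.

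\emph{Running time.} \autoref{thm:gendp} with \(C=\Delta\) gives \(O((\Delta+1)^{3(\mathsf{tw}+1)}(\mathsf{tw}+1)^2 n)\). Computing \(\mathsf{off}\) takes \(O(n+m)\) time. Since \(\Delta< n\), this bound is \(n^{O(\mathsf{tw})}\). If no decomposition is given, one of width \(\mathsf{tw}(G)\) can be computed in FPT time \cite{korhonen2023improved} and made nice by \autoref{lem:makenice}, so the problem is in \XP.

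There is no genuine obstacle here. The only step needing care is checking that pre-weighted edges of weight \(0\) or \(1\) are fully captured by \(\mathsf{off}\), and that pre-weighted edges still impose their properness constraint in \bods. They do, because the \bods question ranges over all edges of \(E(G)\), not only the free ones.
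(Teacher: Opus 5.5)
Your proposal is correct and follows the paper's proof: you build the \bods instance with $E_{\mathrm{free}} = E(G)\setminus E'$, offsets $\mathsf{base}_{\hat{w}}$ and the non-binding cap $C=\Delta$, and then apply \autoref{thm:gendp}. The only difference is that you verify the correspondence between solutions and proper extensions directly, whereas the paper cites \autoref{lem:red:prezo}, which is stated later in the paper.
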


\begin{proof}
Every extension \(w\) of \(\hat{w}\) has \(\mathsf{exc}_{w}(v) \leq d_G(v) \leq
\Delta\) for every \(v\), by \eqref{eq:prezocolor}. So \((G, \hat{w})\) is a
yes-instance if and only if \(\hat{w}\) has a proper extension with excess at
most \(\Delta\) everywhere, which by \autoref{lem:red:prezo} holds if and only
if the instance \(I_{\Delta}\) of \bods has a solution. We use that lemma ahead
of the section in which it is stated; this involves no circularity, since its
proof uses only the definitions of \autoref{sec:prezo:bound}. Applying
\autoref{thm:gendp} to \(I_{\Delta}\) gives the bound. The problem is in \XP
because \(\Delta < n\).
\end{proof}

The same substitution works for each of the other three problems, and gives an
\XP algorithm parameterized by treewidth for each of them by way of
\autoref{lem:red:zero}, \autoref{lem:red:one} and \autoref{lem:red:preot}; we
state the case of \preot as \autoref{cor:preot-xp-tw}. For all four this is the
best that can be expected in the treewidth alone: every one of them is \WoneHard
parameterized by treedepth (\autoref{thm:zero-td}, \autoref{thm:one-treedepth},
\autoref{cor:prezo-hard}, \autoref{thm:preot-td}), hence parameterized by
treewidth by \eqref{eq:param-relations}. %
\subsection{An \FPT algorithm for arbitrary pre-weights}
\label{sec:prezo:general}

The algorithm of \autoref{thm:fptvc:prezo} needs every pre-weight to be \(1\),
because the cap that it feeds to \autoref{cor:gendp-pw} comes from
\autoref{lem:precolor}, which is proved only under that restriction. We now show
that \prezo is \FPT parameterized by the vertex cover number without any
restriction on the pre-edge-weighting \(\hat{w} : E' \to \{0,1\}\).

Let \((G, \hat{w})\) be an instance of \prezo, with \(\hat{w} : E' \to \{0,1\}\)
arbitrary. Let \(S\) be a vertex cover of \(G\) of size \(k\), and let \(I = V
\setminus S\); since \(S\) is a vertex cover, \(I\) is an independent set. For
each \(u \in I\), define its \emph{signature} \(\sigma_u : S \to \{\bot, *, 0,
1\}\) by
\[
\sigma_u(x)=
\begin{cases}
\bot, & ux\notin E,\\
*, & ux\in E\setminus E',\\
\hat{w}(ux), & ux\in E'.
\end{cases}
\]
For \(u, v \in I\), define \(u \sim v\) if and only if \(\sigma_u = \sigma_v\).
Thus two vertices are equivalent precisely when they have the same neighborhood
and, for each common neighbor, the corresponding edges are either both free or
both pre-weighted with the same value. Equality of signatures is an equivalence
relation, and therefore partitions \(I\) into equivalence classes; write
\(\mathcal{C}\) for the set of these classes. Each coordinate of a signature has
four possible values, so \(|\mathcal{C}| \leq 4^{k}\).

For each \(K\in\mathcal{C}\), let \(N(K)\) denote its common neighborhood, and
let \(\mathcal{P}_K\) be the set of patterns \(p:N(K)\to\{0,1\}\) compatible
with its signature. If each vertex of \(K\) has \(r_K\) incident free edges,
then \(|\mathcal{P}_K|=2^{r_K}\le 2^k\). Under an extension \(w\), a vertex
\(u\in K\) has pattern \(p\) if \(w(ux)=p(x)\) for every \(x\in N(K)\). Vertices
with the same pattern form a weight-pattern subclass. Write \(s(p)=\sum_{x\in
N(K)}p(x)\); every vertex with pattern \(p\) has induced color \(s(p)\). Put
\(m_s=|E(G[S])|\) and \(B=k+m_s+1\).

\begin{lemma}\label{lem:bounded-subclasses}
If \((G,\hat{w})\) is a yes-instance of \prezo, then there exists a proper
extension of \(\hat{w}\) such that every equivalence class contains at most one
weight-pattern subclass of size greater than \(B\).
\end{lemma}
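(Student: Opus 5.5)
The plan is an extremal exchange argument. Among all proper extensions of \(\hat{w}\), choose one, \(w\), that maximizes the concentration potential
\[
  \Phi(w) := \sum_{K \in \mathcal{C}} \sum_{p \in \mathcal{P}_K} n_{K,p}(w)^{2},
\]
where \(n_{K,p}(w)\) is the number of vertices of \(K\) with pattern \(p\) under \(w\). Suppose, for contradiction, that some class \(K\) has two distinct patterns \(p, q \in \mathcal{P}_K\) with \(n_{K,p}, n_{K,q} > B\), and assume \(n_{K,q} \geq n_{K,p}\). For \(j \in \{1, \ldots, B\}\), let \(w_j\) be obtained by giving some \(j\) vertices of the \(p\)-subclass the pattern \(q\). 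Only free edges from these \(j\) vertices to \(N(K)\) change, so \(w_j\) still extends \(\hat{w}\). Moreover \(\Phi(w_j) - \Phi(w) = 2j\,(n_{K,q} - n_{K,p} + j) > 0\). It therefore suffices to find a \(j \in [B]\) for which \(w_j\) is proper.

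Next I record the effect of the move on colors. A moved vertex \(u\) now has color \(s(q)\), which is the color of the (nonempty) \(q\)-subclass, all of whose vertices have exactly the neighborhood \(N(K)\). Every other vertex of \(I\) keeps its color. A vertex \(x \in N(K)\) changes color by \(j\,\delta_x\), where \(\delta_x := q(x) - p(x) \in \{-1, 0, 1\}\). All other vertices of \(S\) are unchanged. Let \(X := \{x \in N(K) : \delta_x \neq 0\}\), which is nonempty since \(p \neq q\). The new color of each \(x \in X\) is an affine function of \(j\) with slope \(\pm 1\).

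Now I count the values of \(j\) that fail. An edge \(xx'\) inside \(S\) can become a conflict only if the two color functions differ, that is, if they are not both constant with equal slopes. Two distinct affine functions of \(j\) agree for at most one \(j\), so each edge of \(G[S]\) rules out at most one value. Together these give at most \(m_s\) bad values. Edges \(xu\) with \(u \in I\) and \(x \in X\) are the delicate part. The color of \(u\) is at most \(k\), and \(u\)'s own color is unchanged unless \(u\) was moved. If \(u\) is a moved vertex, or if \(u\) lies in the \(q\)-subclass, then I need \(\mathsf{color}(x) \neq s(q)\) after the move. I plan to handle both cases by a single device. Since \(|n_{K,p}|, |n_{K,q}| > B \geq k+1\), I first argue that for each \(x \in N(K)\) its color either already exceeds \(k\) and moves away monotonically, or lies in \(\{0, \ldots, k\}\). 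I then bound the bad \(j\) coming from all \(I\)-neighbors together by the number of \(j\) for which some \(x \in X\) lands in \(\{0, \ldots, k\}\). Monotonicity in \(j\) should make this at most about \(k\) per direction of motion; if necessary I will reverse the direction of the move (moving from \(q\) to \(p\)) to make it at most \(k\) in total.

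Combining the counts gives at most \(m_s + k < B\) bad values of \(j\), so some \(j \in [B]\) yields a proper \(w_j\) with \(\Phi(w_j) > \Phi(w)\). This contradicts the choice of \(w\) and proves the lemma. The main obstacle is the \(S\)--\(I\) count in the previous paragraph. One must ensure that vertices of \(X\) whose colors move toward \(\{0, \ldots, k\}\) produce only \(O(k)\) bad values of \(j\) in total, not \(O(k)\) per vertex. If the monotonicity argument does not give this bound, my fallback is to choose \(j\) so that every \(x \in X\) has a new color exceeding \(k\). This is possible because the class sizes exceed \(B\), which lets me move vertices in whichever direction increases the colors of the relevant \(x\). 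If even this is unavailable, I would enlarge the candidate range of \(j\) only modestly, accepting a constant-factor change in \(B\) if the stated value turns out to be too tight.
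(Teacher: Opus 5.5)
Your exchange framework is sound, and maximizing \(\Phi=\sum n_{K,p}^2\) is a legitimate substitute for the paper's potential. The paper instead minimizes the number of large subclasses and moves \(t\in[n_2-B,\,n_2-k-1]\) vertices, so that \(A_2\) stops being large. Several of your steps are correct: the identity \(\Phi(w_j)-\Phi(w)=2j(n_{K,q}-n_{K,p}+j)>0\), the extension property, the observation that moved vertices inherit the colour \(s(q)\) of the nonempty \(q\)-subclass (which settles neighbours \(x\) with \(\delta_x=0\)), and the bound of \(m_s\) bad values from \(G[S]\).

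The gap is the step you flag yourself: \(S\)--\(I\) edges at vertices \(x\) with \(\delta_x=-1\). What you wrote does not establish it, for three reasons.
\begin{itemize}
\item Your dichotomy is false for these \(x\). Their colour exceeds \(k\) under \(w\), but it decreases with \(j\), moving toward \(\{0,\dots,k\}\) rather than away.
\item ``At most about \(k\) per direction'' is unjustified when there are many such \(x\). This is exactly the per-vertex versus total issue you raise.
\item Neither fallback works. \(X\) can contain vertices of both signs, so no direction of transfer raises all their colours. Reversing the transfer, from the larger to the smaller subclass, changes \(\Phi\) by \(2j(n_{K,p}-n_{K,q}+j)\), which can be negative. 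Enlarging \(B\) would prove a different statement.
\end{itemize}

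The missing idea is a lower bound common to all such \(x\). If \(\delta_x=-1\) then \(p(x)=1\), so \(x\) keeps weight-\(1\) edges to the \(n_{K,p}-j\) unmoved vertices of the \(p\)-subclass. This gives \(\mathsf{color}_{w_j}(x)\ge n_{K,p}-j\ge B+1-j\), uniformly in \(x\). For \(\delta_x=+1\) you have \(\mathsf{color}_{w_j}(x)\ge n_{K,q}+j>k\). Hence, for every \(j\le B-k=m_s+1\), every vertex of \(X\) has colour greater than \(k\), while every vertex of \(I\) has colour at most \(k\). So no \(S\)--\(I\) conflict can occur. Restricting to \(j\in\{1,\dots,m_s+1\}\), only the at most \(m_s\) values excluded by edges of \(G[S]\) remain, and a good \(j\) exists. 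Equivalently, within your range \([B]\) only the \(k\) values \(j>B-k\) can be spoiled by \(S\)--\(I\) edges, which yields your total of \(m_s+k<B\). This is precisely the role of the paper's window for \(t\): it keeps at least \(k+1\) vertices in the depleted subclass. With this observation your argument is complete.
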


\begin{proof}
Call a subclass large if it contains more than \(B\) vertices. Among all proper
extensions of \(\hat{w}\), choose \(w\) minimizing the total number of large
subclasses. Suppose, for a contradiction, that some class \(K\) contains two
large subclasses \(A_1\) and \(A_2\), with patterns \(P_1\) and \(P_2\),
respectively.

Let \(n_2=|A_2|\). For each integer \(t\) satisfying \(n_2-B\le t\le n_2-k-1\),
form \(w_t\) by changing the patterns of \(t\) vertices of \(A_2\) from \(P_2\)
to \(P_1\), leaving all other edge weights unchanged. There are exactly
\(m_s+1\) candidate values of \(t\), and each leaves between \(k+1\) and \(B\)
vertices in \(A_2\). Since \(P_1\) and \(P_2\) are compatible with the same
signature, they agree on all pre-weighted edges. Thus \(w_t\) extends
\(\hat{w}\).

For \(x\in N(K)\), define \(\delta_x=P_1(x)-P_2(x)\), and set \(\delta_x=0\) for
\(x\in S\setminus N(K)\). Then \(\mathsf{color}_{{w_t}}(x) =\mathsf{color}_{{w}}(x)+t\delta_x\). If \(\delta_x=1\), the original vertices of \(A_1\)
contribute more than \(B\) to \(\mathsf{color}_{{w_t}}(x)\). If \(\delta_x=-1\),
the vertices remaining in \(A_2\) contribute at least \(k+1\). Therefore, every
vertex of \(S\) whose color changes has induced color greater than \(k\). Since
every vertex of \(I\) has degree at most \(k\), and hence color at most \(k\),
there are no coloring conflicts between such a vertex of \(S\) and \(I\).

If \(\delta_x=0\), the color of \(x\) is unchanged. Its edges to vertices of
\(I\) whose patterns are unchanged remain proper. Each transferred vertex
acquires the original color of the vertices in \(A_1\) and has the same
neighborhood as those vertices. Since \(w\) is proper, its edge to \(x\), if
present, also remains proper. Hence there is no conflict between \(S\) and \(I\)
under \(w_t\).

For each edge \(xy\in E(G[S])\),
\[
\mathsf{color}_{{w_t}}(x)-\mathsf{color}_{{w_t}}(y)
=
\mathsf{color}_{{w}}(x)-\mathsf{color}_{{w}}(y)
+t(\delta_x-\delta_y).
\]
If \(\delta_x=\delta_y\), this difference remains nonzero. Otherwise, it
vanishes for at most one value of \(t\). Thus the \(m_s\) edges of \(G[S]\)
exclude at most \(m_s\) candidate values. Since there are \(m_s+1\) candidates,
at least one choice yields a proper extension \(w_t\).

Under this extension, \(A_2\) is no longer large, \(A_1\) remains large, and all
other subclasses are unchanged. This contradicts the choice of \(w\).
\end{proof}

\paragraph{Algorithm.} First compute the signature classes and their compatible
pattern sets. Then perform the following enumeration.
\begin{enumerate}
    \item Enumerate every weighting \(a:E(G[S])\to\{0,1\}\) compatible with
          \(\hat{w}\).

    \item For each class \(K\in\mathcal{C}\), choose a distinguished pattern
          \(p_K^\star\in\mathcal{P}_K\). Only its subclass is allowed to contain
          more than \(B\) vertices.

    \item For each \(p\in\mathcal{P}_K\setminus \{p_K^\star\}\), enumerate a
          multiplicity \(n_{K,p}\in\{0,\ldots,B\}\). Set
    \[
    n_{K,p_K^\star}
    =|K|-\sum_{p\in\mathcal{P}_K\setminus\{p_K^\star\}}
    n_{K,p},
    \]
and discard the choice if this value is negative.
\end{enumerate}

Each combined choice defines a candidate weighting \(w\): use \(a\) on
\(E(G[S])\), and partition every class \(K\) arbitrarily into groups of sizes
\(n_{K,p}\), assigning pattern \(p\) to the corresponding group. The algorithm
need not construct these groups explicitly until a candidate is accepted.

For each \(x\in S\), compute its induced color from the multiplicities:
\[
\mathsf{color}_{{w}}(x)
=
\sum_{\substack{y\in S\\xy\in E}}a(xy)
+
\sum_{\substack{K\in\mathcal{C}\\x\in N(K)}}
\sum_{p\in\mathcal{P}_K}n_{K,p}\,p(x).
\]
Accept the candidate if both of the following conditions hold:
\begin{enumerate}
    \item \(\mathsf{color}_{{w}}(x)\ne
    \mathsf{color}_{{w}}(y)\) for every \(xy\in E(G[S])\);
    \item \(\mathsf{color}_{{w}}(x)\ne s(p)\) for every \(K\in\mathcal{C}\), every
          \(p\in\mathcal{P}_K\) with \(n_{K,p}>0\), and every \(x\in N(K)\).
\end{enumerate}
If a candidate is accepted, construct and return its weighting. If no candidate
is accepted, report that no proper extension exists.

\begin{theorem}
\label{thm:prezo-general}
The algorithm correctly solves \prezo in time \(2^{O(5^k\log(k+2))}\cdot
n^{O(1)}\). Consequently, \prezo is fixed-parameter tractable parameterized by
the vertex cover number, for arbitrary pre-edge-weightings.
\end{theorem}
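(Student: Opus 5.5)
The proof has two parts: correctness, meaning soundness and completeness of the enumeration, and a count of the candidates together with the per-candidate cost.

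\emph{Soundness.} Suppose a candidate is accepted, and let \(w\) be the weighting built from it. Every edge of \(G[S]\) gets \(a(e)\), which agrees with \(\hat w\). Every edge \(ux\) with \(u\in K\) gets \(p(x)\) for a pattern \(p\in\mathcal P_K\), and such patterns agree with the signature on pre-weighted edges, so \(w\) extends \(\hat w\). The color of \(x\in S\) under \(w\) is exactly the displayed formula, because each vertex of \(K\) with pattern \(p\) contributes \(p(x)\) to \(x\) for \(x\in N(K)\). Each \(u\in K\) with pattern \(p\) has color \(s(p)\), and \(I\) is independent. Hence every edge of \(G\) either lies in \(G[S]\) or joins some \(u\in K\) to some \(x\in N(K)\). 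Acceptance conditions 1 and 2 therefore cover all edges, and \(w\) is proper.

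\emph{Completeness.} Suppose \((G,\hat w)\) is a yes-instance. By \autoref{lem:bounded-subclasses} there is a proper extension \(w\) in which each class \(K\) has at most one subclass of size greater than \(B\).
\begin{enumerate}
\item Let \(a:=w|_{E(G[S])}\).
\item For each \(K\), let \(p_K^\star\) be the pattern of the large subclass if there is one, and an arbitrary pattern otherwise.
\item Let \(n_{K,p}\) be the true subclass sizes. Every non-distinguished size is at most \(B\), and \(n_{K,p_K^\star}\) is determined by \(|K|\) and the others.
\end{enumerate}
This combination is enumerated. The colors computed from the multiplicities equal the colors under \(w\), since the color of \(x\) depends only on \(a\) and on how many vertices of each class carry each pattern, not on which vertices they are. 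Properness of \(w\) then gives both acceptance conditions. So the algorithm accepts some candidate, possibly a different one, and by soundness returns a proper extension.

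\emph{Running time.} This is the step I expect to be the main bookkeeping obstacle: the count must come out as \(2^{O(5^k\log(k+2))}\).
\begin{itemize}
\item The choices of \(a\) number at most \(2^{m_s}\le 2^{k^2}\).
\item The distinguished patterns number at most \(\prod_K|\mathcal P_K|\le (2^k)^{4^k}\).
\item The multiplicities number at most \((B+1)^{\sum_K|\mathcal P_K|}\).
\end{itemize}
For the exponent of the multiplicity count, I would refine \(\sum_K |\mathcal P_K|\) instead of using the crude bound \(4^k\cdot 2^k=8^k\). A class whose signature has \(r\) free coordinates contributes \(2^r\) patterns. Summing over all signatures gives \(\sum_{\sigma}2^{\#*(\sigma)}\), and each coordinate contributes a factor \(1+1+1+2=5\), with \(\bot,0,1\) each counting \(1\) and \(*\) counting \(2\). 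So \(\sum_K|\mathcal P_K|\le 5^k\). Since \(B\le k+\binom k2+1=O(k^2)\), the multiplicity count is at most \((k+2)^{O(5^k)}=2^{O(5^k\log(k+2))}\). This term dominates \(2^{k^2}\) and \(2^{k4^k}\), because \(k4^k=O(5^k)\).

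Per candidate, computing the colors and checking the two conditions costs \(\mathrm{poly}(k,5^k)\) with arithmetic on numbers at most \(n\). Computing the signatures, the classes and the pattern sets costs \(n^{O(1)}\), and building the final weighting costs \(O(m)\). Multiplying out gives the claimed bound, and fixed-parameter tractability follows.
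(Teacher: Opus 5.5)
Your proposal is correct and follows essentially the same route as the paper: soundness because the two acceptance conditions cover every edge (as \(I\) is independent), completeness via \autoref{lem:bounded-subclasses} by reading off \(a\), the distinguished patterns and the true multiplicities from a solution with at most one large subclass per class, and the same candidate count with the per-coordinate \(1+1+1+2=5\) argument giving \(\sum_K|\mathcal{P}_K|\le 5^k\). The only difference is cosmetic: you bound the multiplicity choices by \((B+1)^{\sum_K|\mathcal{P}_K|}\) globally instead of \(q_K(B+1)^{q_K-1}\) per class, which yields the same \(2^{O(5^k\log(k+2))}\) bound.
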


\begin{proof}
We first prove soundness. Suppose the algorithm accepts a candidate. For every
class \(K\), its multiplicities are nonnegative and sum to \(|K|\), so the
required partition exists. All selected patterns respect the pre-weighted edges
incident with \(I\), and \(a\) respects the pre-weighted edges within \(S\).
Hence the resulting weighting \(w\) extends \(\hat{w}\).

The computed colors of vertices in \(S\) are precisely their colors under \(w\),
and every vertex assigned pattern \(p\) has color \(s(p)\). The first acceptance
condition ensures properness on edges within \(S\), and the second ensures
properness on edges between \(S\) and \(I\). Since \(I\) is independent, these
are all the edges of \(G\). Therefore, \(w\) is proper.

For completeness, suppose a proper extension exists. By
\autoref{lem:bounded-subclasses}, there is a proper extension \(w\) with at most
one large subclass in each equivalence class. The algorithm enumerates the
restriction of \(w\) to \(E(G[S])\). In each class containing a large subclass,
choose its pattern as the distinguished pattern; if no subclass is large, choose
any compatible pattern.

Every other pattern has multiplicity at most \(B\), so the algorithm enumerates
its multiplicity under \(w\). The distinguished pattern's multiplicity is then
determined correctly by the size of its class. Thus the algorithm considers a
candidate with exactly the pattern multiplicities of \(w\). Both acceptance
conditions hold because \(w\) is proper, and the algorithm accepts.

It remains to bound the running time. Write \(q_K=|\mathcal{P}_K|\). There are
at most \(2^{m_s}\) choices for \(a\). For each class \(K\), there are at most
\(q_K(B+1)^{q_K-1}\) choices of its distinguished pattern and the remaining
multiplicities. Therefore, the total number of candidates is at most
\[
2^{m_s}\prod_{K\in\mathcal{C}}
q_K(B+1)^{q_K-1}.
\]

We have \(|\mathcal{C}|\le 4^k\), \(q_K\le 2^k\), and
\(\sum_{K\in\mathcal{C}}q_K\le 5^k\). To see the last bound, consider a single
signature coordinate. Each of \(\bot\), \(0\), and \(1\) admits one compatible
pattern entry, whereas \(\ast\) admits two. Summing the numbers of compatible
patterns over all possible signatures therefore gives \((1+1+1+2)^k=5^k\).
Consequently, the number of candidates is at most \(2^{m_s+k4^k}(B+1)^{5^k}\).

Each candidate can be checked using \(O(k5^k+k^2)\) arithmetic operations on
integers of \(O(\log n)\) bits. Computing the classes, generating their pattern
sets, and constructing an accepted weighting require only an additional
parameter-dependent factor times a polynomial in \(n\). Since
\(m_s\le\binom{k}{2}\) and \(B=k+m_s+1\), the total running time is
\(2^{O(5^k\log(k+2))}\cdot n^{O(1)}\).
\end{proof}
\section{\preot}
\label{sec:preot}

We now turn to the pre-weighted version of \one. In an instance of this problem
the weights of some of the edges are fixed in advance, and the task is to weight
the remaining edges so that the weighting of the whole graph is proper.

\defproblem{\preot}%
{An undirected graph \(G = (V, E)\) on \(n\) vertices, a subset
\(E' \subseteq E\), and a pre-edge-weighting \(\hat{w} : E' \to \{1,2\}\).}%
{Does \(G\) admit a proper weighting \(w : E(G) \to \{1,2\}\) that extends
\(\hat{w}\)?}

We use the terminology of \autoref{sec:prelims:preweighting} throughout: a
weighting \(w\) of \(G\) extends \(\hat{w}\) if it agrees with \(\hat{w}\) on
\(E'\), and an edge of \(G\) is free if it does not lie in \(E'\). Taking \(E' =
\emptyset\) shows that \one is the special case of \preot in which no edge is
pre-weighted, so every hardness result for \one carries over to \preot.

In this section we prove the structural bound that our algorithm for the vertex
cover number needs (\autoref{lem:pre12main}), and place the problem in \XP
parameterized by treewidth (\autoref{cor:preot-xp-tw}). In
\autoref{sec:preot:fvs} we then show that \preot is \WoneHard parameterized by
the feedback vertex set number and, by the same reduction, parameterized by
treedepth; the treedepth result is also inherited from \one, by
\autoref{thm:one-treedepth}. %

\subsection{Bounding the number of free edges of weight two}
\label{sec:preot:bound}

Throughout this subsection we fix an instance \((G, \hat{w})\) of \preot with
\(\hat{w} : E' \to \{1,2\}\), and a vertex cover \(S\) of \(G\) with \(|S| =
k\); we write \(I := V(G) \setminus S\) for the corresponding independent set.
Since every neighbor of a vertex of \(I\) lies in \(S\), we have \(d_G(u) \leq
k\) for every \(u \in I\).

For a vertex \(v \in V(G)\) we write \(e_v\) for the number of edges of \(E'\)
incident with \(v\) that \(\hat{w}\) weights \(2\), and for an extension \(w\)
of \(\hat{w}\) we write \(t_{w}(v)\) for the number of \emph{free} edges
incident with \(v\) that \(w\) weights \(2\). The edges of weight \(2\) at \(v\)
are counted once each by these two quantities, so the second part of
\autoref{obs:color-counts} gives
\begin{equation}
\label{eq:pre12color}
  \mathsf{color}_{w}(v) = d_G(v) + e_v + t_{w}(v)
  \qquad\text{for every } v \in V(G).
\end{equation}
The quantities \(d_G(v)\) and \(e_v\) are determined by the instance; the only
part of the color of \(v\) that an algorithm can influence is \(t_{w}(v)\). The
following lemma bounds that part, for at least one proper extension, by a
function of \(k\) alone.

\begin{lemma}
\label{lem:pre12main}
Let \(G\) be a graph with vertex cover number \(k\), and let \((G, \hat{w})\) be
a yes-instance of \preot. Then there is a proper extension \(w : E(G) \to
\{1,2\}\) of \(\hat{w}\) such that \(t_{w}(v) \leq 2k^{2}\), and hence \(\mathsf{color}_{w}(v) \leq d_G(v) + e_v + 2k^{2}\), for every \(v \in V(G)\).
\end{lemma}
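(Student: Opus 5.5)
We follow the potential-function strategy of \autoref{lem:precolor}, adapted to the weight set \(\{1,2\}\). Set \(T := 2k^{2}\). For a proper extension \(w'\) of \(\hat{w}\), let
\[
  P(w') := \sum_{v \in V(G)} \max\bigl(0,\ t_{w'}(v) - T\bigr),
\]
and choose a proper extension \(w\) of minimum potential. We will show \(P(w) = 0\). The bound on colors then follows from \eqref{eq:pre12color}.

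For every \(u \in I\) we have \(t_w(u) \leq d_G(u) \leq k \leq T\). Hence any vertex \(x\) with \(t_w(x) > T\) lies in \(S\). Let \(Y\) be the set of \(y \in I\) such that \(xy\) is free and \(w(xy) = 2\). Then \(|Y| \geq T + 1 - (k-1)\).

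Unlike the \(\{0,1\}\) case, the colors on \(I\) are not confined to a small absolute range. A vertex \(u \in I\) has color in \([d_G(u), 2d_G(u)] \subseteq [1, 2k]\). A vertex \(x \in S\) has color at least \(d_G(x) + e_x + t_w(x) > T\). So the palette on \(I\) lies in \([1,2k]\), at most \(k\) further colors occur on \(S\), and every vertex of \(S\) with large \(t_w\) has color far above \(2k\). This is the same separation used in \autoref{lem:precolor}.

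The plan has two steps.

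\textbf{Step 1 (greedy decrease).} Let \(\mathbf{c} := \mathsf{color}_w(x)\), and let \(d_1\) be the smallest \(d \geq 1\) with \(\mathbf{c} - d \notin C\), where \(C\) is the palette; thus \(d_1 \leq k+1\). Scan \(Y\), lowering \(w(xy)\) from \(2\) to \(1\) whenever this creates no conflict at \(y\). Stop once the color of \(x\) has dropped by \(d_1\). If we reach \(d_1\), the new weighting is proper: \(x\) now has a fresh color above \(2k\), and each changed \(y\) was checked against its other neighbors. Its potential is smaller, a contradiction.

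\textbf{Step 2 (exchange).} Otherwise fewer than \(k+1\) edges were lowered. Each remaining \(y \in Y'\) was blocked by some \(z_y \in S \setminus \{x\}\) with \(\mathsf{color}_w(z_y) = \mathsf{color}_w(y) - 1 \leq 2k - 1\). Since \(|Y'| \gtrsim 2k^2 - 2k\), averaging gives one \(z\) and a set \(Y''\) of more than roughly \(2k\) vertices blocked by \(z\).

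Because \(\mathsf{color}_w(z) \geq d_G(z)\), we have \(d_G(z) \leq 2k-1\). This bounds the number of edges at \(z\), so \(|Y''| \leq 2k - 1\). Therefore \(|Y'| \leq (k-1)(2k-1)\), and
\[
  |Y| \leq (k+1) + (k-1)(2k-1) < T + 2 - k \leq |Y|,
\]
which is a contradiction.

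This point is where the \(\{1,2\}\) case is simpler than \(\{0,1\}\): there a low-color blocker could still have high degree, which forced the exchange step of \autoref{lem:precolor}. Here a blocker's small color caps its degree. So no exchange along \(zy\) is needed, and in particular we need no freeness of edges at \(z\). That is why arbitrary pre-weights are allowed.

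The main obstacle is the constant bookkeeping. We must check that \(T = 2k^{2}\) is large enough that:
\begin{itemize}
\item the bound \(|Y| \geq 2k^2 - k + 2\) beats \((k+1) + (k-1)(2k-1) = 2k^2 - 2k + 2\), which holds;
\item the color of \(x\) stays above \(2k\) throughout, so that \(x\) never conflicts with \(I\) and \(x\) is never the blocker;
\item lowering \(w(xy)\) changes only the colors of \(x\) and \(y\), so blockers keep their \(w\)-colors during the scan.
\end{itemize}
If the counting is tight for small \(k\), we would handle \(k = 1\) directly: then \(S = \{x\}\), there are no blockers, and Step 1 always succeeds.
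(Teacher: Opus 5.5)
Your proposal is correct and is essentially the paper's argument. It uses the same potential, the same set \(Y\), and the same key observation that any blocker \(z \in S \setminus \{x\}\) satisfies \(d_G(z) \le \mathsf{color}_w(z) \le 2k-1\). Hence only about \((k-1)\cdot 2k\) vertices of \(Y\) can be blocked, which leaves enough lowerable edges at \(x\) to move it to a fresh color without any exchange step.

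The paper packages this directly rather than as a greedy scan followed by a contradiction. It first bounds the blocked set \(Y_1\), deduces \(|Y_{\mathrm{drop}}| > k+1\), and then lowers exactly \(p \le k\) of those edges. It also avoids your separate \(k=1\) case by using \(d_G(x) \ge t_w(x)\) to get \(\mathsf{color}_w(x) > 4k^{2}\) instead of your weaker \(\mathsf{color}_w(x) > 2k^{2}\).
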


\begin{proof}
For a proper extension \(w\) of \(\hat{w}\) define the \emph{potential}
\[
  P(w) := \sum_{v \in V(G)} \max\bigl(0,\ t_{w}(v) - 2k^{2}\bigr),
\]
which measures the total surplus, over \(2k^{2}\), of the numbers of free edges
of weight \(2\) at the vertices of \(G\). Since \((G, \hat{w})\) is a
yes-instance, proper extensions exist; let \(w\) be one of minimum potential. We
claim that \(P(w) = 0\), which by \eqref{eq:pre12color} is exactly the assertion
of the lemma.

Suppose for a contradiction that \(P(w) > 0\), and let \(x\) be a vertex with
\(t_{w}(x) > 2k^{2}\). For a vertex \(u \in I\) we have \(t_{w}(u) \leq d_G(u)
\leq k \leq 2k^{2}\), so \(x \in S\). We record two consequences of \(t_{w}(x) >
2k^{2}\) for later use: first \(d_G(x) \geq t_{w}(x) > 2k^{2}\), and hence, by
\eqref{eq:pre12color},
\begin{equation}
\label{eq:xbig}
  \mathsf{color}_{w}(x) \geq d_G(x) + t_{w}(x) > 4k^{2};
\end{equation}
and second, by the second part of \autoref{obs:color-counts},
\begin{equation}
\label{eq:ismall}
  \mathsf{color}_{w}(u) \leq 2 d_G(u) \leq 2k
  \qquad\text{for every } u \in I .
\end{equation}

The vertex \(x\) has at most \(k-1\) neighbors in \(S\), so at most \(k-1\) of
the free edges of weight \(2\) at \(x\) have their other endpoint in \(S\). Put
\[
  Y := \{\, y \in I : xy \text{ is free and } w(xy) = 2 \,\};
\]
then \(|Y| \geq t_{w}(x) - (k-1) > 2k^{2} - k + 1\). Our aim is to reduce the
weights of some of the edges from \(x\) to \(Y\) from \(2\) to \(1\). Every such
edge is free, so any weighting obtained in this way still extends \(\hat{w}\);
this is the only point at which the present proof differs materially from the
corresponding argument for \one.

Let \(Y_{\mathrm{drop}}\) be the set of those \(y \in Y\) for which lowering the
weight of the single edge \(xy\) from \(2\) to \(1\), and leaving every other
edge as it is, creates no conflict between \(y\) and any of its neighbors, and
put \(Y_1 := Y \setminus Y_{\mathrm{drop}}\). Note that lowering \(xy\) reduces
the colors of exactly \(x\) and \(y\), each by one; in particular it cannot
create a conflict along the edge \(xy\) itself, since \(\mathsf{color}_{w}(x) \neq
\mathsf{color}_{w}(y)\) and both colors drop by one. As \(y \in I\), all the
remaining neighbors of \(y\) lie in \(S \setminus \{x\}\), and their colors are
untouched; so whether \(y\) lies in \(Y_{\mathrm{drop}}\) depends only on \(y\),
and not on which other edges at \(x\) are lowered.

\begin{claim}
\label{cl:pre12-y1}
\(|Y_1| \leq 2k^{2} - 2k\).
\end{claim}

\begin{claimproof}
Let \(y \in Y_1\). By the previous paragraph there is a neighbor \(z \in S
\setminus \{x\}\) of \(y\) with \(\mathsf{color}_{w}(y) - 1 = \mathsf{color}_{w}(z)\).
By \eqref{eq:ismall} we get \(\mathsf{color}_{w}(z) \leq 2k - 1\), and since
\(d_G(z) \leq \mathsf{color}_{w}(z)\) for every \(\{1,2\}\)-weighting, we get
\(d_G(z) \leq 2k\). So every vertex of \(Y_1\) has a neighbor in \(S \setminus
\{x\}\) of degree at most \(2k\). There are at most \(k-1\) such vertices \(z\),
and each has at most \(2k\) neighbors in all, so \(|Y_1| \leq (k-1) \cdot 2k =
2k^{2} - 2k\).
\end{claimproof}

Combining the two bounds,
\[
  |Y_{\mathrm{drop}}| = |Y| - |Y_1| > (2k^{2} - k + 1) - (2k^{2} - 2k) = k + 1 .
\]

Let \(C_{w}(S \setminus \{x\}) := \{\mathsf{color}_{w}(z) : z \in S \setminus
\{x\}\}\), a set of at most \(k-1\) colors, and let \(p \in [k]\) be least with
\(\mathsf{color}_{w}(x) - p \notin C_{w}(S \setminus \{x\})\). Such a \(p\) exists
because \([k]\) has \(k\) elements and at most \(k-1\) of the values \(\mathsf{color}_{w}(x) - p\) can be excluded. Since \(|Y_{\mathrm{drop}}| > k + 1 > p\),
we may choose a subset \(Y' \subseteq Y_{\mathrm{drop}}\) with \(|Y'| = p\).
Define \(w'\) by setting \(w'(xy) := 1\) for every \(y \in Y'\), and \(w'(e) :=
w(e)\) for every other edge \(e\) of \(G\).

\begin{claim}
\label{cl:pre12-proper}
\(w'\) is a proper extension of \(\hat{w}\).
\end{claim}

\begin{claimproof}
Every edge whose weight differs between \(w\) and \(w'\) is of the form \(xy\)
with \(y \in Y' \subseteq Y\), and every such edge is free by the definition of
\(Y\). Hence \(w'\) agrees with \(w\), and therefore with \(\hat{w}\), on
\(E'\), so \(w'\) extends \(\hat{w}\).

For properness, note first that \(\mathsf{color}_{w'}(z) = \mathsf{color}_{w}(z)\) for
every \(z \in S \setminus \{x\}\), that \(\mathsf{color}_{w'}(x) = \mathsf{color}_{w}(x) - p\), that \(\mathsf{color}_{w'}(y) = \mathsf{color}_{w}(y) - 1\) for
\(y \in Y'\), and that all other colors are unchanged. Now consider an edge
\(uv\) of \(G\); as \(I\) is independent, at least one endpoint lies in \(S\).
\begin{itemize}
    \item Both endpoints in \(S\), neither equal to \(x\): their colors are
          unchanged, and \(w\) is proper.
    \item One endpoint is \(x\), the other in \(S\): by the choice of \(p\) we
          have \(\mathsf{color}_{w'}(x) = \mathsf{color}_{w}(x) - p \notin C_{w}(S
          \setminus \{x\})\), and the colors of \(S \setminus \{x\}\) are
          unchanged.
    \item One endpoint is \(x\), the other in \(I\): by \eqref{eq:xbig} and \(p
          \leq k\) we have \(\mathsf{color}_{w'}(x) > 4k^{2} - k \geq 2k\), while
          by \eqref{eq:ismall} every vertex of \(I\) has color at most \(2k\)
          under \(w\), and hence at most \(2k\) under \(w'\).
    \item One endpoint is some \(z \in S \setminus \{x\}\), the other some \(u
          \in I\): if \(u \notin Y'\) then both colors are unchanged; and if \(u
          \in Y'\) then \(u \in Y_{\mathrm{drop}}\), so lowering \(xu\) alone
          creates no conflict between \(u\) and \(z\), and since \(\mathsf{color}_{w'}(u) = \mathsf{color}_{w}(u) - 1\) and \(\mathsf{color}_{w'}(z) =
          \mathsf{color}_{w}(z)\), the same computation applies here.
\end{itemize}
Hence \(w'\) is proper.
\end{claimproof}

It remains to check that \(P(w') < P(w)\). We have \(t_{w'}(x) = t_{w}(x) - p\)
with \(p \geq 1\), and \(t_{w}(x) > 2k^{2}\), so the term of \(x\) in the
potential strictly decreases. For \(y \in Y'\) we have \(t_{w'}(y) = t_{w}(y) -
1 \leq t_{w}(y)\), so the term of \(y\) does not increase, and the terms of all
other vertices are unchanged. Hence \(P(w') < P(w)\), contradicting the choice
of \(w\). Therefore \(P(w) = 0\), which proves the lemma.
\end{proof}

\subsection{An \XP algorithm parameterized by treewidth}
\label{sec:preot:xp}

As for \prezo in \autoref{sec:prezo:xp}, \autoref{thm:gendp} also gives an
algorithm for the parameter treewidth alone, obtained by choosing a large enough
cap.

\begin{corollary}
\label{cor:preot-xp-tw}
Given an instance \((G, \hat{w})\) of \preot and a nice tree decomposition of
\(G\) of width \(\mathsf{tw}\), the instance can be solved, and a proper extension
of \(\hat{w}\) produced if one exists, in time \(O\bigl((\Delta+1)^{3(\mathsf{tw}+1)} \cdot (\mathsf{tw}+1)^{2} \cdot n\bigr)\), where \(\Delta\) is the maximum
degree of \(G\). In particular, \preot is in \XP parameterized by the treewidth
of the input graph.
\end{corollary}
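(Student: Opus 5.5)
This follows the proof of \autoref{cor:prezo-xp-tw} almost verbatim. The idea is to reduce to \bods with a cap large enough that it constrains nothing, and then invoke \autoref{thm:gendp}. Concretely, I would use the reduction \autoref{lem:red:preot} stated in \autoref{sec:fptvc}, using it ahead of that section as the paper does for \prezo. Its proof should use only the definitions of \autoref{sec:preot:bound}, so there is no circularity. The instance of \bods it produces has:
\begin{itemize}
\item the same graph \(G\) and the same nice tree decomposition;
\item the free edges \(E_{\mathrm{free}} = E(G)\setminus E'\);
\item the offset \(\mathsf{off}(v) := d_G(v) + e_v\);
\item the cap \(C := \Delta\).
\end{itemize}
By \eqref{eq:pre12color}, the set \(F\) of free edges that receive weight \(2\) satisfies \(\mathsf{col}_F(v) = \mathsf{color}_w(v)\) for every vertex \(v\). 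Hence proper extensions of \(\hat{w}\) whose values \(t_w(v)\) are at most \(C\) correspond bijectively to solutions of the \bods instance.

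Second, I would check that the cap \(\Delta\) excludes no extension. For every extension \(w\) and every vertex \(v\), we have \(t_w(v) \leq |E_{\mathrm{free}}(v)| \leq d_G(v) \leq \Delta\). So \((G,\hat{w})\) is a yes-instance exactly when some proper extension has \(t_w(v)\le\Delta\) everywhere, which in turn holds exactly when \(I_\Delta\) has a solution.

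Third, I would apply \autoref{thm:gendp} with \(C=\Delta\). This gives the running time \(O((\Delta+1)^{3(\mathsf{tw}+1)}(\mathsf{tw}+1)^2 n)\), plus linear preprocessing to compute the \(e_v\) and the offsets. Whenever a solution \(F\) is returned, I would output \(w\) defined by \(w=\hat{w}\) on \(E'\), \(w=2\) on \(F\), and \(w=1\) on all other free edges. This is a proper extension by the correspondence above. Membership in \XP follows because \(\Delta<n\).

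The only step needing care is confirming that the reduction's correspondence holds for an arbitrary cap, not just the cap from \autoref{lem:pre12main}. Since the correspondence is purely definitional through \eqref{eq:pre12color}, I expect no real obstacle.
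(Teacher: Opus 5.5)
Your proposal is correct and is essentially the paper's own proof: build the \bods instance of \autoref{lem:red:preot} with cap \(C := \Delta\), observe that the cap is vacuous because \(t_w(v) \leq d_G(v) \leq \Delta\) for every extension, and apply \autoref{thm:gendp}. One small imprecision: unlike \autoref{lem:red:prezo}, \autoref{lem:red:preot} is stated only for the cap \(2k^{2}\), and its proof does invoke \autoref{lem:pre12main}, so it is not true that it uses only definitions. What actually carries your argument is your direct check via \eqref{eq:pre12color} that the correspondence holds for an arbitrary cap, which is exactly the remark the paper makes.
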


\begin{proof}
  Every extension \(w\) of \(\hat{w}\) has \(t_{w}(v) \leq d_G(v) \leq \Delta\)
  for every \(v \in V(G)\), since \(t_{w}(v)\) counts free edges incident with
  \(v\). So \((G, \hat{w})\) is a yes-instance if and only if \(\hat{w}\) has a
  proper extension \(w\) with \(t_{w}(v) \leq \Delta\) for every \(v\). Build
  the instance of \bods as in \autoref{lem:red:preot}, but with \(C := \Delta\);
  the proof of that lemma invokes \autoref{lem:pre12main} only in order to
  produce a proper extension respecting the cap, and every other step there is
  independent of \(C\) and of the vertex cover. Hence the two instances are
  equivalent, and applying \autoref{thm:gendp} gives the bound. The problem is
  in \XP because \(\Delta < n\).
\end{proof}
\subsection{Hardness parameterized by the feedback vertex set number}
\label{sec:preot:fvs}

In this subsection we prove that \preot is hard when parameterized by the
feedback vertex set number. We give a reduction from \lc, defined in
\autoref{sec:prelims}.

The main idea is to disallow colors that do not belong to the list of a vertex.
We first describe the gadget used to disallow a color.

Let \(v\) be a vertex of the input graph \(G\), and suppose that we wish to
disallow color \(k\) at \(v\) under every proper weight function extending the
given pre-edge-weighting. We attach a vertex \(x\) to \(v\) and construct
the gadget so that \(x\) has color \(k\) under every such extension. We
preassign weight \(1\) to the edge \(vx\).

Suppose first that \(k=2p\) is even, where \(p\geq 1\). We add \(p\) pendant
vertices adjacent to \(x\), assign weight \(2\) to \(p-1\) of the resulting
pendant edges, and assign weight \(1\) to the remaining pendant edge. Thus, the
color of \(x\) is \( 1+2(p-1)+1=2p=k. \) If \(k=2p+1\) is odd, where \(p\geq
1\), we add \(p\) pendant vertices adjacent to \(x\) and assign weight \(2\) to
all the resulting pendant edges. In this case, the color of \(x\) is \( 1+2p=k.
\) No vertex of the gadget has any additional incident edges, except that \(x\)
is adjacent to \(v\). All edges incident with \(x\) are pre-weighted.
Consequently, in every weighting extending the pre-edge-weighting,
the color of \(x\) is \(k\). Since \(v\) is adjacent to \(x\), the vertex \(v\)
cannot receive color \(k\) in any proper extension.

We call this gadget a Type-D \(k\)-disallowing gadget. Note that every such
gadget contributes one edge of weight \(1\) incident with \(v\). Therefore, an
offset must be included in the colors assigned to the vertices of \(G\).

\begin{figure}[t]
    \centering
    \begin{tikzpicture}[
        >=stealth,
        main node/.style={circle,draw,minimum size=6mm,inner sep=0pt},
        leaf/.style={circle,draw,minimum size=2mm,inner sep=0pt},
        edge label/.style={font=\small,inner sep=2pt}
    ]
        \begin{scope}[shift={(0,0)}]
            \node[main node] (ve) at (0,3) {\(v\)};
            \node[main node] (xe) at (0,1.5) {\(x\)};
            \node[leaf] (e1) at (-1.2,0) {};
            \node[leaf] (e2) at (-0.45,0) {};
            \node at (0.15,0) {\(\cdots\)};
            \node[leaf] (ep) at (1.2,0) {};

            \draw (ve) -- node[edge label,right] {\(1\)} (xe);
            \draw (xe) -- node[edge label,left] {\(2\)} (e1);
            \draw (xe) -- node[edge label,left] {\(2\)} (e2);
            \draw (xe) -- node[edge label,right] {\(1\)} (ep);

            \node at (0,-0.6) {\(p\) pendant vertices};
            \node at (0,3.65) {\(k=2p\)};
        \end{scope}

        \begin{scope}[shift={(5.5,0)}]
            \node[main node] (vo) at (0,3) {\(v\)};
            \node[main node] (xo) at (0,1.5) {\(x\)};
            \node[leaf] (o1) at (-1.2,0) {};
            \node[leaf] (o2) at (-0.45,0) {};
            \node at (0.15,0) {\(\cdots\)};
            \node[leaf] (op) at (1.2,0) {};

            \draw (vo) -- node[edge label,right] {\(1\)} (xo);
            \draw (xo) -- node[edge label,left] {\(2\)} (o1);
            \draw (xo) -- node[edge label,left] {\(2\)} (o2);
            \draw (xo) -- node[edge label,right] {\(2\)} (op);

            \node at (0,-0.6) {\(p\) pendant vertices};
            \node at (0,3.65) {\(k=2p+1\)};
        \end{scope}
    \end{tikzpicture}
    \caption{Type-D \(k\)-disallowing gadgets for even and odd values of \(k\).}
    \label{fig:typeD}
\end{figure}

\noindent
\paragraph{Construction.} The gadget is shown in \autoref{fig:typeD}.
Let \((G,\mathcal{L})\) be an instance of \lc, and let \(n=|V(G)|\). Since only
equality between colors is relevant, we may relabel the colors so that the set
of colors appearing in the lists is an initial segment of the positive integers.
Define
\[
t:=\max\left(\bigcup_{v\in V(G)}L(v)\right)
\qquad\text{and}\qquad
N:=n+t-1.
\]
We construct a graph \(H\) as follows. Initially, set \(H:=G\). For every vertex
\(v\in V(G)\), perform the following operations:
\begin{enumerate}
    \item For every
    \[
    k\in[2N,3N]\setminus\{2N+q\mid q\in L(v)\},
    \]
    add a Type-D \(k\)-disallowing gadget at \(v\). In addition, add
    \(n-1-d_G(v)\) pendant vertices adjacent to \(v\).

    \item Add \(|L(v)|-1\) pendant vertices adjacent to \(v\), and preassign
    weight \(1\) to each of the resulting pendant edges.

    \item Add \(t\) additional pendant vertices adjacent to \(v\).
\end{enumerate}
This completes the construction of \(H\). Some edges of \(H\) have already been
assigned weights in \(\{1,2\}\). Let \(\hat{w}\) denote this partial weight
function. The resulting instance of \preot is \((H,\hat{w})\).

\begin{lemma}
\label{lem:preotfvs}
The instance \((G,\mathcal{L})\) is a yes-instance of \lc if and only if
\((H,\hat{w})\) is a yes-instance of \preot.
\end{lemma}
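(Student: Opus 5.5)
The plan is to first pin down the color of every vertex \(v \in V(G)\) in \(H\) by counting the edges at \(v\) and their weights, in the same way as \autoref{lem:admissible-colors}. I would then use the Type-D gadgets to cut the admissible colors of \(v\) down to exactly \(\{2N+q : q \in L(v)\}\), and finish with the two directions of the equivalence.

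\emph{Degree and color of \(v\).} Fix \(v \in V(G)\). Step~1 attaches \((N+1)-|L(v)|\) gadgets, since \([2N,3N]\) has \(N+1\) integers. Each gadget contributes one edge \(vx\) pre-weighted \(1\). Step~2 adds \(|L(v)|-1\) pendant edges pre-weighted \(1\). Hence \(v\) has exactly \(N\) pre-weighted edges, all of weight \(1\). The free edges at \(v\) are of three kinds:
\begin{itemize}
\item the \(d_G(v)\) edges of \(G\);
\item the \(n-1-d_G(v)\) pendant edges of step~1;
\item the \(t\) pendant edges of step~3.
\end{itemize}
These number \(n-1+t = N\) in total. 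So \(d_H(v)=2N\), and by \eqref{eq:pre12color} we get \(\mathsf{color}_w(v) = 2N + t_w(v)\) with \(0 \le t_w(v) \le N\). Thus \(\mathsf{color}_w(v) \in [2N,3N]\) for every extension \(w\) of \(\hat w\). As argued in the gadget description, every edge at a gadget vertex \(x\) is pre-weighted, so \(\mathsf{color}_w(x)=k\) under every extension. Properness along \(vx\) therefore excludes every \(k \in [2N,3N]\setminus\{2N+q : q\in L(v)\}\).

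\emph{Backward direction.} Given a proper extension \(w\), set \(c(v) := \mathsf{color}_w(v)-2N\). By the previous paragraph \(c(v) \in L(v)\), and since \(E(G)\subseteq E(H)\), properness of \(w\) gives \(c(u)\ne c(v)\) on every edge \(uv\in E(G)\). So \(c\) is a proper list coloring.

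\emph{Forward direction.} Given a proper list coloring \(c\), weight exactly \(c(v)\) of the \(t\) step-3 pendant edges at \(v\) with \(2\); this is possible since \(c(v)\le t\). Weight every other free edge \(1\), including all edges of \(G\). Then \(\mathsf{color}_w(v) = 2N+c(v)\), which is distinct across edges of \(G\). A gadget vertex \(x\) at \(v\) has color \(k\), and \(k\) was chosen outside \(2N+L(v)\), so \(k \ne 2N+c(v)\). Pendant vertices at \(v\) have color \(1\) or \(2\), while \(v\) has color at least \(2N+1\ge 3\). It remains to check inside each gadget: the pendants of \(x\) have color at most \(2\), while \(x\) has color \(k\ge 2N\ge 2\). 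The only delicate case is \(k=2\), which forces \(N=1\) and \(p=1\). Then the single pendant edge at \(x\) is pre-weighted \(1\), so that pendant has color \(1\ne 2\).

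I expect the main obstacle to be this bookkeeping rather than any idea: checking that the pre-weighted and free edge counts each come to exactly \(N\), and handling the small boundary values of \(N\) and \(k\) in the gadget properness check. Everything else is a direct transcription of the argument of \autoref{lem:equivalence}.
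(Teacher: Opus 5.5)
Your proposal is correct and follows essentially the same route as the paper's proof. You use the same count ($d_H(v)=2N$, with exactly $N$ edges pre-weighted $1$), the same confinement to $[2N,3N]$ with exclusion by the Type-D gadgets for the converse, and the same explicit weighting with the same $k=2$ boundary check for the forward direction; the only cosmetic difference is that you do the count once via \eqref{eq:pre12color} and reuse it in both directions, while the paper sums the contributions directly in the forward claim and recounts $d_H(v)$ in the converse.
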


\begin{proof}
Suppose first that \((G,\mathcal{L})\) is a yes-instance of \lc, and let
\(c\colon V(G)\to[t]\) be a corresponding proper list coloring. We define a
weight function \(w\) on \(E(H)\) as follows. Assign \(w(e)=1\) for every \(e\in
E(G)\). For every pre-weighted edge \(e\), set \(w(e)=\hat{w}(e)\), ensuring
that \(w\) extends \(\hat{w}\). Assign weight \(1\) to each of the
\(n-1-d_G(v)\) pendant edges added in Item~1 at every vertex \(v\in V(G)\).
Finally, among the \(t\) pendant edges added in Item~3 at \(v\), assign weight
\(2\) to \(c(v)\) edges and weight \(1\) to the remaining \(t-c(v)\) edges. This
defines \(w\) on every edge of \(H\).

\begin{claim}
The weight function \(w\) is a proper extension of \(\hat{w}\), and
\[
\mathsf{color}_w(v)=2N+c(v)
\]
for every \(v\in V(G)\).
\end{claim}

\begin{claimproof}
Fix a vertex \(v\in V(G)\). The \(d_G(v)\) edges incident with \(v\) in \(G\)
have weight \(1\), as do the \(n-1-d_G(v)\) pendant edges added in Item~1.
Together, these edges contribute \(n-1\) to the color of \(v\).

The interval \([2N,3N]\) contains \(N+1\) integers. Therefore, we add
\(N+1-|L(v)|\) Type-D disallowing gadgets at \(v\), each of which contributes
\(1\) to the color of \(v\). The \(|L(v)|-1\) pre-weighted pendant edges added
in Item~2 also contribute \(1\) each. Thus, these two groups of edges together
contribute
\[
(N+1-|L(v)|)+(|L(v)|-1)=N.
\]
Among the \(t\) pendant edges added in Item~3, exactly \(c(v)\) have weight
\(2\), and the remaining \(t-c(v)\) have weight \(1\). These edges therefore
contribute \(t+c(v)\). Consequently,
\begin{align*}
\mathsf{color}_w(v)
    &=(n-1)+N+t+c(v)\\
    &=2N+c(v),
\end{align*}
where the last equality follows from \(N=n+t-1\).

It remains to prove that \(w\) is proper. Because \(c\) is a proper list
coloring of \(G\), for every edge \(uv\in E(G)\) we have \(c(u)\neq c(v)\).
Hence,
\[
\mathsf{color}_w(u)=2N+c(u)\neq 2N+c(v)=\mathsf{color}_w(v).
\]
Moreover, \(c(v)\in L(v)\), so the color \(2N+c(v)\) is not disallowed at \(v\)
by a Type-D disallowing gadget.

Now consider a vertex \(x\) adjacent to \(v\) in a Type-D \(k\)-disallowing
gadget. All edges incident with \(x\) are pre-weighted, so \(\mathsf{color}_w(x)=k\). Since the gadget for \(k\) is added at \(v\) only when
\(k\notin\{2N+q\mid q\in L(v)\}\), we have \(\mathsf{color}_w(x)\neq \mathsf{color}_w(v)\). Every pendant neighbor of \(x\) has color \(1\) or \(2\), and
\(k\geq 2N\geq 2\), so the only value of \(k\) that could conflict is \(k=2\).
In that case the gadget is the even one with \(p=1\), so its single pendant edge
has weight \(1\) and the pendant vertex has color \(1\neq 2\). Hence no conflict
occurs on an edge incident with such a pendant vertex.

All other vertices in \(V(H)\setminus V(G)\) are pendant vertices of color at
most \(2\), while their neighbors in \(V(G)\) have colors greater than \(2N\).
Thus, no coloring conflict occurs on any remaining edge. Therefore, \(w\) is a
proper extension of \(\hat{w}\).
\end{claimproof}

Hence, if \((G,\mathcal{L})\) is a yes-instance of \lc, then \((H,\hat{w})\) is
a yes-instance of \preot.

Conversely, suppose that \((H,\hat{w})\) is a yes-instance of \preot, and let
\(w\) be a proper weight function on \(E(H)\) extending \(\hat{w}\). We show
that \((G,\mathcal{L})\) is a yes-instance of \lc.

Fix a vertex \(v\in V(G)\). By construction,
\begin{align*}
d_H(v)
    &=d_G(v)+(n-1-d_G(v))+(|L(v)|-1)+t\\
    &\qquad +(N+1-|L(v)|)\\
    &=n-1+t+N\\
    &=2N.
\end{align*}
Among the \(2N\) edges incident with \(v\), exactly
\[
(N+1-|L(v)|)+(|L(v)|-1)=N
\]
are pre-weighted with weight \(1\). Since every remaining edge has weight either
\(1\) or \(2\), it follows that
\[
\mathsf{color}_w(v)\in[2N,3N].
\]
For every color in this interval other than those of the form \(2N+q\) with
\(q\in L(v)\), a Type-D disallowing gadget is attached to \(v\). Since \(w\) is
proper, none of these disallowed colors can be assigned to \(v\). Therefore,
\[
\mathsf{color}_w(v)\in\{2N+q\mid q\in L(v)\}.
\]

Define a coloring \(c_w\) of \(G\) by
\[
c_w(v):=\mathsf{color}_w(v)-2N
\]
for every \(v\in V(G)\). The disallowing gadgets ensure that \(c_w(v)\in L(v)\).
Moreover, if \(uv\in E(G)\), then \(uv\in E(H)\). Since \(w\) is proper,
\[
\mathsf{color}_w(u)\neq \mathsf{color}_w(v),
\]
and hence \(c_w(u)\neq c_w(v)\). Thus, \(c_w\) is a proper list coloring of
\(G\), and \((G,\mathcal{L})\) is a yes-instance of \lc.
\end{proof}

\begin{theorem}
\label{thm:preot-fvs}
\preot is \WoneHard when parameterized by the feedback vertex set number.
\end{theorem}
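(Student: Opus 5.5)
The plan is to show that the construction of \((H,\hat{w})\) above is a parameterized reduction from \lc parameterized by the vertex cover number, which is \WoneHard~\cite{fiala2011parameterized}, to \preot parameterized by the feedback vertex set number. Correctness is exactly \autoref{lem:preotfvs}, so two things remain. First, \(H\) must be computable in polynomial time. Second, \(\mathsf{fvs}(H)\) must be bounded by a function of the vertex cover number \(k\) of \(G\).

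For the running time, I would relabel the colors as in \autoref{sec:prelims} so that they form an initial segment of the positive integers. Then \(t\) is at most the total size of the lists, and \(N = n+t-1\) is polynomial in the input. At each vertex \(v\) we attach at most \(N+1\) Type-D gadgets. A \(k\)-disallowing gadget with \(k \le 3N\) has \(1+\lfloor k/2\rfloor = O(N)\) vertices. We also add \(O(n+t)\) pendant vertices at \(v\). So \(H\) has \(O(nN^{2})\) vertices, and \(H\) together with \(\hat{w}\) is produced in polynomial time.

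For the parameter, let \(S\) be a vertex cover of \(G\) with \(|S|=k\). I would show that \(S\) is a feedback vertex set of \(H\). The vertices of \(V(G)\setminus S\) form an independent set in \(G\), so \(G-S\) is edgeless. Every vertex of \(V(H)\setminus V(G)\) lies in a tree attached at a single vertex of \(V(G)\) by a single edge. Concretely, a pendant vertex has degree \(1\). A gadget is a star centered at \(x\), and it is joined to \(V(G)\) only through the edge \(vx\). Hence every connected component of \(H-S\) is a tree: either a single vertex \(v\in V(G)\setminus S\) with its hanging stars and pendants, or a hanging structure left over from a deleted vertex of \(S\). Thus \(H-S\) is acyclic, and \(\mathsf{fvs}(H)\le k\).

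Combining these steps gives a polynomial-time reduction with parameter \(k'\le k\), which proves the theorem. I expect no step to be a real obstacle. The only care needed is in the polynomial bound on gadget sizes, which depends on \(t\) being polynomially bounded after relabeling, and in checking that the gadgets create no cycles through \(V(G)\). The latter holds because each gadget is attached by exactly one edge.
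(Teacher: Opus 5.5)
Your proposal is correct and follows the same route as the paper: a parameterized reduction from \lc parameterized by the vertex cover number, with correctness given by \autoref{lem:preotfvs}, and the observation that a vertex cover \(S\) of \(G\) is a feedback vertex set of \(H\) because every added vertex lies in a tree hanging from a single vertex of \(G\). Your explicit \(O(nN^{2})\) size bound fills in the polynomial-time claim that the paper only asserts.
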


\begin{proof}
The problem \lc is \WoneHard when parameterized by the vertex cover
number~\cite{fiala2011parameterized}. By \autoref{lem:preotfvs},
\((G,\mathcal{L})\) is a yes-instance of \lc if and only if \((H,\hat{w})\) is a
yes-instance of \preot.

Let \(S\) be a minimum vertex cover of \(G\), and let \(vc=|S|\). We claim that
\(S\) is a feedback vertex set of \(H\). Every vertex and edge in \(H\) that
does not belong to \(G\) lies in a tree attached to a single vertex of \(G\).
Therefore, none of the added vertices or edges creates a new cycle, and every
cycle in \(H\) is contained entirely in \(G\). Since \(S\) is a vertex cover of
\(G\), the graph \(G-S\) is an independent set and is therefore acyclic. It
follows that \(H-S\) is also acyclic. Hence, the feedback vertex set number of
\(H\) is at most \(vc\).

The construction can be carried out in polynomial time, and the parameter in the
resulting instance is bounded by the parameter of the original instance.
Therefore, the reduction is a parameterized reduction, proving that \preot is
\WoneHard when parameterized by the feedback vertex set number.
\end{proof}

We proved that \preot is \WoneHard parameterized by the feedback vertex set
number. We now show that the same reduction establishes \WoneHard ness
parameterized by treedepth.

\begin{lemma}
\label{lem:preot-td-bound}
Let \((G,\mathcal{L})\) be an instance of \lc, and let \(H\) be the graph
that the construction above produces from it. Then
\(\mathsf{td}(H) \leq \mathsf{vc}(G)+3\).
\end{lemma}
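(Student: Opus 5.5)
We extend an elimination forest of \(G\), exactly as in the proofs of \autoref{lem:td} and \autoref{lem:12td}. Let \(S\) be a minimum vertex cover of \(G\), so \(|S| = \mathsf{vc}(G)\), and let \(F\) be the elimination forest of \(G\) of height at most \(\mathsf{vc}(G)+1\) given by \autoref{lem:vc-to-td}. We build an elimination forest \(F_H\) of \(H\) by hanging every attachment made at a vertex \(v \in V(G)\) below \(v\) in \(F\).

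There are two kinds of attachment. The first kind is the pendant vertices added in Items~1, 2 and~3 of the construction. Each of these becomes a child of its attachment vertex \(v\), so the only new edge joins \(v\) to a child of \(v\).

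The second kind is the Type-D gadgets. For a gadget at \(v\) with centre \(x\), we make \(x\) a child of \(v\), and we make each of its \(p\) pendant vertices a child of \(x\). The edge \(vx\) then joins \(v\) to its child, and each pendant edge at \(x\) joins \(x\) to its child.

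Next we check that \(F_H\) is an elimination forest of \(H\). For edges of \(G\), comparability is inherited from \(F\). For every other edge, one endpoint is the parent of the other. Every vertex of \(V(H) \setminus V(G)\) lies in exactly one attachment, so \(F_H\) is a rooted forest on \(V(H)\). Attachments at different vertices, and distinct attachments at the same vertex, lie in disjoint branches, so they never need to be comparable with one another.

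For the height, a root-to-leaf path in \(F_H\) runs along a root-to-leaf path of \(F\) to some \(v \in V(G)\) and then continues into a single attachment. That continuation has at most two vertices, namely \(x\) and one pendant neighbour of \(x\). Hence the height is at most \((\mathsf{vc}(G)+1)+2 = \mathsf{vc}(G)+3\), which gives \(\mathsf{td}(H) \leq \mathsf{vc}(G)+3\).

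There is no real obstacle here. The only point needing care is that a vertex \(v \in V(G)\) need not be a leaf of \(F\), since vertices of \(S\) are internal. Hanging attachments below an internal vertex is still fine, because the subtree below \(v\) only becomes deeper. Its depth is bounded by the depth of the deepest \(F\)-leaf plus \(2\), and every vertex of \(V(G)\) has depth at most \(\mathsf{vc}(G)+1\) in \(F\).
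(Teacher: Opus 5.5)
Your proof is correct and is essentially the paper's argument: extend the height-\(\mathsf{vc}(G)+1\) elimination forest from \autoref{lem:vc-to-td} by making pendant vertices and Type-D centres \(x\) children of \(v\), and the gadget pendants children of \(x\), so that every new edge joins a vertex to its child and the height grows by at most two. Your remark about hanging attachments below internal vertices of \(F\) makes explicit a point the paper leaves implicit.
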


\begin{proof}
Every vertex added to \(G\) lies in a Type-D gadget attached at a single vertex
\(v\) of \(G\), or is a pendant vertex attached directly at such a \(v\); a
Type-D gadget is a vertex \(x\) adjacent to \(v\) together with pendant
vertices at \(x\). Take the
elimination forest of \(G\) of height at most \(\mathsf{vc}(G)+1\) given by
\autoref{lem:vc-to-td}. Make every pendant vertex at \(v\) a child of \(v\),
make the center \(x\) of every Type-D gadget at \(v\) a child of \(v\), and make
the pendant vertices at \(x\) children of \(x\). Every added edge then joins a
vertex to a child of it, and distinct attachments at \(v\) occupy parallel
branches, so the height increases by at most two regardless of their number.
Thus
\[
    \mathsf{td}(H)
    \leq \mathsf{td}(G)+2
    \leq \mathsf{vc}(G)+3.
\]
\end{proof}

\begin{theorem}
\label{thm:preot-td}
\preot is \WoneHard when parameterized by the treedepth of the input graph.
\end{theorem}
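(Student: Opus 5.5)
The plan is a parameterized reduction from \lc parameterized by the vertex cover number, which is \WoneHard~\cite{fiala2011parameterized}. We reuse the construction of \autoref{sec:preot:fvs} unchanged. Given an instance \((G,\mathcal{L})\), we build the pre-weighted instance \((H,\hat{w})\) with Type-D disallowing gadgets. This takes polynomial time, since the number of gadgets per vertex is at most \(N+1\), each gadget has \(O(N)\) vertices, and \(N = n+t-1\) is polynomial in the input size.

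Correctness is already established. By \autoref{lem:preotfvs}, \((G,\mathcal{L})\) is a yes-instance of \lc if and only if \((H,\hat{w})\) is a yes-instance of \preot. That lemma does not depend on which parameter is used to measure the source instance, so nothing further is needed there.

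For the parameter bound we invoke \autoref{lem:preot-td-bound}, which gives \(\mathsf{td}(H) \leq \mathsf{vc}(G)+3\). The new parameter is therefore bounded by a computable function of the old one. This completes the three requirements of a parameterized reduction, and \WoneHard ness of \preot parameterized by treedepth follows.

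As a sanity check, and as an alternative route, we can note that \one is the special case of \preot with \(E'=\emptyset\). \autoref{thm:one-treedepth} therefore already implies the statement by taking the identity reduction. We would mention this in a remark.

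There is no real obstacle, since all the work lies in \autoref{lem:preotfvs} and \autoref{lem:preot-td-bound}. The only point needing care is the polynomial bound on \(t\). This holds because the colors are relabelled to an initial segment, so \(t\) is at most the total size of the lists.
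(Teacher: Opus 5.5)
Your proposal is correct and follows the paper's proof. Like the paper, you reuse the Type-D construction of \autoref{sec:preot:fvs}, take correctness from \autoref{lem:preotfvs}, and take the parameter bound \(\mathsf{td}(H)\leq\mathsf{vc}(G)+3\) from \autoref{lem:preot-td-bound}; the alternative route through \autoref{thm:one-treedepth} with \(E'=\emptyset\) is also noted by the paper at the start of \autoref{sec:preot}.
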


\begin{proof}
The construction produces \(H\) from \((G,\mathcal{L})\) in polynomial time,
and \(H\) is a yes-instance of \preot if and only if \((G,\mathcal{L})\) is a
yes-instance of \lc, by \autoref{lem:preotfvs}. By
\autoref{lem:preot-td-bound}, \(\mathsf{td}(H) \leq \mathsf{vc}(G)+3\). This is a
parameterized reduction from \lc parameterized by the vertex cover number, which
is \WoneHard~\cite{fiala2011parameterized}, to \preot parameterized by
treedepth.
\end{proof}
\section{\FPT Algorithms Parameterized by the Vertex Cover Number}
\label{sec:fptvc}

In this section we derive fixed-parameter algorithms for all four of our
problems from \autoref{cor:gendp-pw}. In each case the work consists of naming
the three ingredients that an instance of \bods requires---the set
\(E_{\mathrm{free}}\) of edges whose weights the algorithm may choose, the
offset \(\mathsf{off}(v)\), which collects the contribution to the color of \(v\)
that no such choice can alter, and the cap \(C\)---and of checking that
solutions of the resulting instance correspond to the proper weightings we are
after. The cap always comes from a structural lemma which says that a
yes-instance with a small vertex cover has a proper weighting in which no vertex
needs many edges of the larger weight. All four algorithms run in time \(2^{O(k
\log k)} \cdot n\), where \(k\) is the size of the given vertex cover.

All four algorithms have the following step in common:

\begin{lemma}
\label{lem:vc-pattern}
Let \(I = (G, E_{\mathrm{free}}, \mathsf{off}, C, \cdot)\) be an instance of \bods,
and let \(S\) be a vertex cover of \(G\) with \(|S| = k\). Then, given \(I\)
without its decomposition and given \(S\), the instance \(I\) can be solved, and
a solution produced if one exists, in time \(O\bigl((C+1)^{2(k+1)} \cdot
(k+1)^{2} \cdot n\bigr)\).
\end{lemma}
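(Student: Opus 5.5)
The plan is to combine \autoref{lem:vc-to-pathdecomp} with \autoref{cor:gendp-pw}. First, from \(G\) and \(S\), I construct the nice path decomposition \(\mathcal{P}\) of \(G\) given by \autoref{lem:vc-to-pathdecomp}. It has width at most \(k\) and \(O(k \cdot n)\) nodes, and it takes time \(O(k \cdot n)\) to build. I then replace the (ignored) decomposition component of \(I\) by \(\mathcal{P}\). This yields an instance \(I' = (G, E_{\mathrm{free}}, \mathsf{off}, C, \mathcal{P})\) of \bods.

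The next point is that \(I\) and \(I'\) have exactly the same solutions. The question in \bods, namely the existence of \(F \subseteq E_{\mathrm{free}}\) with \(d_F(v) \leq C\) and \(\mathsf{col}_F(u) \neq \mathsf{col}_F(v)\) on every edge, mentions only \(G\), \(E_{\mathrm{free}}\), \(\mathsf{off}\) and \(C\). The supplied decomposition plays no role in it. So solving \(I'\) solves \(I\), and any solution produced for \(I'\) is also a solution of \(I\).

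Next I apply \autoref{cor:gendp-pw} to \(I'\) with \(\mathsf{pw} \leq k\). This gives running time \(O\bigl((C+1)^{2(\mathsf{pw}+1)} \cdot (\mathsf{pw}+1)^{2} \cdot n\bigr)\). That expression is monotone in \(\mathsf{pw}\), so it is at most \(O\bigl((C+1)^{2(k+1)} \cdot (k+1)^{2} \cdot n\bigr)\). The corollary also produces a solution when one exists. The \(O(k\cdot n)\) cost of building \(\mathcal{P}\) is absorbed into this bound.

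The only delicate point is that \autoref{cor:gendp-pw} assumes \(O(\mathsf{pw} \cdot n)\) nodes, so that it can avoid the extra cost of \autoref{lem:makenice}. \autoref{lem:vc-to-pathdecomp} provides exactly this node bound, so no renormalization is needed. There is one degenerate case to check: when \(S=\emptyset\) or \(I=\emptyset\), the width is smaller than \(k\), and monotonicity still covers it. Apart from this, I expect no real obstacle.
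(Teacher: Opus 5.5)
Your proposal is correct and matches the paper's proof: build the nice path decomposition of width at most \(k\) with \(O(k \cdot n)\) nodes from \autoref{lem:vc-to-pathdecomp}, pass it to \autoref{cor:gendp-pw}, and bound the running time using \(\mathsf{pw} \leq k\). Your additional remarks (the decomposition does not affect which sets are solutions, the \(O(k \cdot n)\) construction cost is absorbed, and the degenerate cases are covered by monotonicity) are valid details that the paper leaves implicit.
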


\begin{proof}
By \autoref{lem:vc-to-pathdecomp} we compute, in time \(O(k \cdot n)\), a nice
path decomposition of \(G\) of width \(\mathsf{pw} \leq k\) with \(O(k \cdot n)\)
nodes. Supplying this decomposition to the algorithm of \autoref{cor:gendp-pw}
solves \(I\) in time \(O\bigl((C+1)^{2(\mathsf{pw}+1)} \cdot (\mathsf{pw}+1)^{2} \cdot
n\bigr)\), which is bounded by \(O\bigl((C+1)^{2(k+1)} \cdot (k+1)^{2} \cdot
n\bigr)\) because \(\mathsf{pw} \leq k\).
\end{proof}

Two remarks on the parameter are in order. First, the vertex cover \(S\) is part
of the input of each of the four problems below, and we do not require it to be
minimum. This is harmless: each of the four structural lemmas is stated for a
graph with a vertex cover of size \(k\), and the caps \(8k^{2}+8k\) and
\(2k^{2}\) that they supply are non-decreasing in \(k\), so a cover of size
\(k\) that is not minimum only weakens the cap, and the algorithm remains
correct. Second, if a vertex cover is not supplied, then one of size at most
\(2k\) can be computed in linear time, or a minimum one in time
\(O^{*}(1.25284^{k})\)~\cite{harris2024faster}, as noted in
\autoref{sec:prelims}; neither changes the form \(2^{O(k \log k)}\) of the
bounds below.

\subsection{\zero}
\label{sec:fptvc:zero}

Here no edge is pre-weighted, and by \autoref{obs:color-counts} the color of a
vertex is exactly the number of edges of weight \(1\) incident with it. We
therefore take the set of edges of weight \(1\) as the set \(F\) to be chosen,
and the offset to be zero.

\begin{lemma}
\label{lem:red:zero}
Let \(G\) be a graph, let \(k \geq 1\), and set \(T := 8k^{2}+8k\). Let \(I\) be
the instance of \bods with the graph \(G\), with \(E_{\mathrm{free}} := E(G)\),
with \(\mathsf{off}(v) := 0\) for every \(v \in V(G)\), and with \(C := T\). If
\(G\) has a vertex cover of size at most \(k\), then \(G\) is a yes-instance of
\zero if and only if \(I\) has a solution. Moreover, a solution of \(I\) can be
converted in linear time into a proper weighting \(w : E(G) \to \{0,1\}\), and
conversely.
\end{lemma}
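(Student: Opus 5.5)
The plan is to use the bijection between \(\{0,1\}\)-weightings of \(G\) and subsets of \(E(G)\). A weighting \(w\) corresponds to \(F_w := \{e \in E(G) : w(e) = 1\}\), and a subset \(F\) corresponds to its indicator weighting \(w_F\). By the first part of \autoref{obs:color-counts}, \(\mathsf{color}_{w}(v) = d_{F_w}(v)\) for every \(v\). Since \(\mathsf{off} \equiv 0\), this says \(\mathsf{color}_{w}(v) = \mathsf{col}_{F_w}(v)\). Consequently \(w\) is proper exactly when \(F_w\) meets the second condition in the question of \bods, namely distinct colors across every edge. The requirement \(F \subseteq E_{\mathrm{free}} = E(G)\) holds trivially. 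The two conversions are computable in linear time by scanning the edge list once.

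For the direction from \(I\) to \(G\), take a solution \(F\) of \(I\). Then \(w_F\) is proper by the identity above, and the cap is irrelevant here, so \(G\) is a yes-instance of \zero.

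For the converse, assume \(G\) is a yes-instance and has a vertex cover of size at most \(k\). The vertex cover number \(k_0\) then satisfies \(k_0 \leq k\). Apply \autoref{lem:boundedcolor} to get a proper \(w\) with \(\mathsf{color}_{w}(v) \leq 8k_0^{2}+8k_0 \leq 8k^{2}+8k = T\) for every \(v\); this step uses the monotonicity of \(8k^{2}+8k\) in \(k\). One corner case needs care: if \(k_0 = 0\) the graph is edgeless, so \(F = \emptyset\) works directly. Setting \(F := F_w\) gives \(d_F(v) = \mathsf{color}_{w}(v) \leq T = C\), which satisfies the cap, and properness of \(w\) gives the color condition. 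Hence \(F\) is a solution of \(I\).

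I do not expect a real obstacle here. The only point needing attention is that \autoref{lem:boundedcolor} is stated in terms of the vertex cover number rather than an arbitrary cover of size \(k\), and the monotonicity remark above handles that.
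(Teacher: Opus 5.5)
Your proof is correct and follows the paper's argument: the linear-time indicator bijection \(F \mapsto w_F\) with \(\mathsf{color}_{w_F}(v) = \mathsf{col}_F(v)\), the cap-free direction from a solution of \(I\) to a proper weighting, and \autoref{lem:boundedcolor} to meet the cap \(C = T\). Your explicit monotonicity step (vertex cover number \(k_0 \leq k\)) and the edgeless case \(k_0 = 0\) spell out details the paper leaves implicit.
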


\begin{proof}
For a set \(F \subseteq E(G)\) let \(w_F : E(G) \to \{0,1\}\) be the weighting
that assigns \(1\) to the edges of \(F\) and \(0\) to all other edges; the map
\(F \mapsto w_F\) is a bijection between subsets of \(E(G)\) and
\(\{0,1\}\)-weightings of \(G\), and both it and its inverse are computable in
linear time. By the first part of \autoref{obs:color-counts} we have \(\mathsf{color}_{w_F}(v) = d_{F}(v) = \mathsf{col}_{F}(v)\) for every \(v \in V(G)\), since
the offsets are zero.

Suppose that \(G\) is a yes-instance of \zero. Since \(G\) has a vertex cover of
size at most \(k\), \autoref{lem:boundedcolor} gives a proper weighting \(w\) of
\(G\) with \(\mathsf{color}_{w}(v) \leq T\) for every \(v \in V(G)\). Let \(F\) be
the set of edges to which \(w\) assigns weight \(1\), so that \(w = w_F\). Then
\(d_{F}(v) = \mathsf{color}_{w}(v) \leq T = C\) for every \(v\), and \(\mathsf{col}_{F}(u) = \mathsf{color}_{w}(u) \neq \mathsf{color}_{w}(v) = \mathsf{col}_{F}(v)\)
for every edge \(uv \in E(G)\), because \(w\) is proper. Hence \(F\) is a
solution of \(I\).

Conversely, let \(F\) be a solution of \(I\). Then for every edge \(uv \in
E(G)\) we have \(\mathsf{color}_{w_F}(u) = \mathsf{col}_{F}(u) \neq \mathsf{col}_{F}(v) =
\mathsf{color}_{w_F}(v)\), so \(w_F\) is a proper weighting of \(G\) and \(G\) is a
yes-instance of \zero.
\end{proof}

\begin{theorem}
\label{thm:fptvc:zero}
Given a graph \(G\) and a vertex cover \(S\) of \(G\) with \(|S| = k\), \zero
can be solved in time
\[
  O\bigl((8k^{2}+8k+1)^{2(k+1)} \cdot (k+1)^{2} \cdot n\bigr)
  = 2^{O(k \log k)} \cdot n,
\]
and a proper weighting produced whenever one exists.
\end{theorem}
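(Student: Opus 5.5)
The plan is to chain \autoref{lem:red:zero} with \autoref{lem:vc-pattern}. Given \(G\) and the vertex cover \(S\) with \(|S| = k\), we build the instance \(I\) of \bods from \autoref{lem:red:zero}: \(E_{\mathrm{free}} := E(G)\), \(\mathsf{off} \equiv 0\), and cap \(C := T = 8k^{2}+8k\). This takes linear time. The degenerate case \(k = 0\) means \(G\) is edgeless. It is then trivially a yes-instance, where the empty weighting is proper, so we may assume \(k \geq 1\) as \autoref{lem:red:zero} requires.

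Since \(S\) is a vertex cover of size \(k\), \autoref{lem:red:zero} says that \(G\) is a yes-instance of \zero exactly when \(I\) has a solution. We then run the algorithm of \autoref{lem:vc-pattern} on \(I\) together with \(S\). It decides \(I\), and outputs a solution \(F\) if one exists, in time
\[
  O\bigl((C+1)^{2(k+1)} \cdot (k+1)^{2} \cdot n\bigr)
  = O\bigl((8k^{2}+8k+1)^{2(k+1)} \cdot (k+1)^{2} \cdot n\bigr).
\]
If it outputs \(F\), the final part of \autoref{lem:red:zero} converts \(F\) in linear time into the proper weighting \(w_F\), which gives weight \(1\) to exactly the edges of \(F\). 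This conversion is dominated by the bound above.

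It remains to simplify the bound. We have \(8k^{2}+8k+1 \leq 17k^{2}\) for \(k \geq 1\). So the first factor is at most \(2^{2(k+1)\log_2(17k^{2})} = 2^{O(k\log k)}\), and \((k+1)^{2}\) is absorbed into this. This gives \(2^{O(k\log k)} \cdot n\).

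There is no real obstacle here. The only points that need care are that \autoref{lem:red:zero} needs the cover size to be \emph{at most} \(k\), which \(S\) satisfies, and that the arithmetic of the final simplification is done correctly.
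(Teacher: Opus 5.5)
Your proposal is correct and follows essentially the same route as the paper: build the \bods instance of \autoref{lem:red:zero} with \(C = 8k^{2}+8k\) and solve it via \autoref{lem:vc-pattern}, then simplify the bound. Your separate treatment of the edgeless case \(k = 0\), which \autoref{lem:red:zero} does not cover since it assumes \(k \geq 1\), is a small point of care that the paper's proof leaves implicit.
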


\begin{proof}
Build the instance \(I\) of \autoref{lem:red:zero} in linear time and solve it
using \autoref{lem:vc-pattern} with \(C = 8k^{2}+8k\); this takes time
\(O\bigl((8k^{2}+8k+1)^{2(k+1)} \cdot (k+1)^{2} \cdot n\bigr)\). Correctness is
\autoref{lem:red:zero}, whose hypothesis holds because \(S\) is a vertex cover
of \(G\) of size \(k\). For the stated form of the bound, note that
\((8k^{2}+8k+1)^{2(k+1)} = 2^{2(k+1)\log(8k^{2}+8k+1)} = 2^{O(k \log k)}\), and
that the factor \(k^{2}\) is absorbed.
\end{proof}

\subsection{\prezo}
\label{sec:fptvc:prezo}

Now some edges carry pre-assigned weights, and those of weight \(1\) contribute
to the colors of their endpoints without the algorithm having any say in the
matter. Their contribution is the quantity \(\mathsf{base}_{\hat{w}}(v)\) of
\autoref{sec:prezo:bound}, and it is the offset we need. We state the reduction
for an arbitrary pre-edge-weighting \(\hat{w} : E' \to \{0,1\}\), and with the
cap left as a parameter, because it is used twice: with the cap of
\autoref{lem:precolor} under the restriction that every pre-weight is \(1\), and
with a vacuous cap and no restriction in \autoref{cor:prezo-xp-tw}. Its proof
uses only the definitions of \autoref{sec:prezo:bound}.

\begin{lemma}
\label{lem:red:prezo}
Let \((G, \hat{w})\) be an instance of \prezo with \(\hat{w} : E' \to \{0,1\}\),
and let \(C \geq 0\). Let \(I_C\) be the instance of \bods with the graph \(G\),
with \(E_{\mathrm{free}} := E(G) \setminus E'\), with \(\mathsf{off}(v) := \mathsf{base}_{\hat{w}}(v)\) for every \(v \in V(G)\), and with the cap \(C\). Then
\(I_C\) has a solution if and only if \(\hat{w}\) has a proper extension \(w\)
with \(\mathsf{exc}_{w}(v) \leq C\) for every \(v \in V(G)\). Moreover, a solution
of \(I_C\) can be converted in linear time into such an extension, and
conversely.
\end{lemma}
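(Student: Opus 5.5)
The plan is to mirror the proof of \autoref{lem:red:zero}, using a bijection between subsets of free edges and extensions of \(\hat{w}\). For \(F \subseteq E_{\mathrm{free}} = E(G)\setminus E'\), let \(w_F\) be the weighting that agrees with \(\hat{w}\) on \(E'\), assigns \(1\) to the edges of \(F\), and assigns \(0\) to the remaining free edges. Every \(w_F\) extends \(\hat{w}\). Conversely, every extension \(w\) equals \(w_F\) for \(F := \{e \in E_{\mathrm{free}} : w(e) = 1\}\). So \(F \mapsto w_F\) is a bijection between subsets of \(E_{\mathrm{free}}\) and extensions of \(\hat{w}\). Both directions are computable in linear time by a single pass over the edges.

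The key identity is \(\mathsf{color}_{w_F}(v) = \mathsf{col}_F(v)\) and \(\mathsf{exc}_{w_F}(v) = d_F(v)\) for every \(v\). By \eqref{eq:prezocolor}, \(\mathsf{exc}_{w_F}(v)\) is the number of free edges at \(v\) weighted \(1\) by \(w_F\), and this number is exactly \(|F \cap E_G(v)| = d_F(v)\). Adding \(\mathsf{base}_{\hat{w}}(v) = \mathsf{off}(v)\) gives \(\mathsf{color}_{w_F}(v) = \mathsf{off}(v) + d_F(v) = \mathsf{col}_F(v)\).

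With the identity in hand, both directions are immediate.
\begin{itemize}
\item If \(F\) is a solution of \(I_C\), then \(d_F(v) \le C\) gives \(\mathsf{exc}_{w_F}(v) \le C\). The condition \(\mathsf{col}_F(u) \ne \mathsf{col}_F(v)\) on every edge \(uv\) gives properness of \(w_F\).
\item Conversely, let \(w\) be a proper extension with all excesses at most \(C\), and take \(F\) as above. Then \(w = w_F\), the cap holds because \(d_F(v) = \mathsf{exc}_w(v) \le C\), and the distinctness condition holds because \(w\) is proper.
\end{itemize}

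There is no real obstacle here. The only point needing care is that pre-weighted edges of weight \(1\) contribute through \(\mathsf{off}\), not through \(F\), so they are counted exactly once. Pre-weighted edges of weight \(0\) contribute nothing, which is why the statement holds for arbitrary \(\hat{w}\). All edges of \(G\), pre-weighted ones included, are still tested for distinct colors in \bods, so no constraint is lost.
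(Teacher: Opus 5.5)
Your proposal is correct and follows essentially the same argument as the paper. Both use the bijection \(F \mapsto w_F\) between subsets of \(E_{\mathrm{free}}\) and extensions of \(\hat{w}\), together with the identity \(\mathsf{exc}_{w_F}(v) = d_F(v)\) from \eqref{eq:prezocolor}, which gives \(\mathsf{color}_{w_F}(v) = \mathsf{col}_F(v)\) and turns the cap and distinctness conditions of \bods into the excess bound and properness, respectively.
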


\begin{proof}
For \(F \subseteq E_{\mathrm{free}}\) let \(w_F : E(G) \to \{0,1\}\) agree with
\(\hat{w}\) on \(E'\), assign \(1\) to every edge of \(F\), and assign \(0\) to
every other free edge. Then \(w_F\) extends \(\hat{w}\), every extension of
\(\hat{w}\) is of this form for \(F\) the set of free edges that it weights
\(1\), and both directions of the correspondence are computable in linear time.

Fix \(F\) and \(v \in V(G)\). By \eqref{eq:prezocolor}, the excess \(\mathsf{exc}_{w_F}(v)\) is the number of free edges at \(v\) that \(w_F\) weights \(1\),
which is \(d_F(v)\); hence \(\mathsf{color}_{w_F}(v) = \mathsf{base}_{\hat{w}}(v) +
d_F(v) =
\mathsf{col}_{F}(v)\). So \(F\) satisfies the cap condition of \bods if and only
if \(\mathsf{exc}_{w_F}(v) \leq C\) for every \(v\), and \(F\) satisfies the
distinctness condition if and only if \(w_F\) is proper. Both claims follow.
\end{proof}

\begin{theorem}
\label{thm:fptvc:prezo}
Let \((G, \hat{w})\) be an instance of \prezo in which every pre-weight is
\(1\), and let \(S\) be a vertex cover of \(G\) with \(|S| = k\). Then the
instance can be solved in time
\[
  O\bigl((8k^{2}+8k+1)^{2(k+1)} \cdot (k+1)^{2} \cdot n\bigr)
  = 2^{O(k \log k)} \cdot n,
\]
and a proper extension of \(\hat{w}\) produced whenever one exists.
\end{theorem}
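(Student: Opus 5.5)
The plan is to mirror the proof of \autoref{thm:fptvc:zero}, with \autoref{lem:red:prezo} playing the role of \autoref{lem:red:zero} and \autoref{lem:precolor} supplying the cap. Set \(T := 8k^{2}+8k\).

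First we build the instance \(I_T\) of \bods from \autoref{lem:red:prezo}, with the cap \(C := T\). This takes linear time. The free edge set \(E(G)\setminus E'\) is read off directly. The offsets \(\mathsf{base}_{\hat{w}}(v)\) are obtained in one pass over \(E'\), incrementing both endpoints of each pre-weighted edge of weight \(1\); here every pre-weight is \(1\), so this simply counts the edges of \(E'\) at each vertex.

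Next we solve \(I_T\) with \autoref{lem:vc-pattern}, passing it the given vertex cover \(S\). This takes time \(O\bigl((T+1)^{2(k+1)}\cdot(k+1)^{2}\cdot n\bigr)\), which is the stated bound. For the asymptotic form, note that \((8k^{2}+8k+1)^{2(k+1)} = 2^{2(k+1)\log(8k^2+8k+1)} = 2^{O(k\log k)}\), and the factor \((k+1)^2\) is absorbed into this.

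Correctness follows from the two lemmas.
\begin{itemize}
\item Suppose \((G,\hat{w})\) is a yes-instance. Since \(G\) has a vertex cover of size \(k\), its vertex cover number is at most \(k\), so \autoref{lem:precolor} applies. Its hypothesis that all pre-weights are \(1\) is exactly the assumption of the theorem. It gives a proper extension with \(\mathsf{exc}_w(v)\le T\) everywhere. If the vertex cover number \(k_0\) is smaller than \(k\), the lemma gives the cap \(8k_0^2+8k_0 \le T\), so the bound still holds. By \autoref{lem:red:prezo}, \(I_T\) then has a solution.
\item Conversely, any solution of \(I_T\) converts in linear time, again by \autoref{lem:red:prezo}, into a proper extension of \(\hat{w}\).
\end{itemize}
So the algorithm answers correctly, and it outputs an extension whenever one exists.

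There is no real obstacle here; the substance lies in \autoref{lem:precolor}. The only points needing care are the monotonicity of the cap in \(k\), which handles a non-minimum cover, and checking that the construction of the offsets stays within the linear-time budget.
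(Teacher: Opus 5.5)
Your proposal is correct and takes the same route as the paper. You build \(I_T\) via \autoref{lem:red:prezo} with the cap \(T = 8k^{2}+8k\) supplied by \autoref{lem:precolor}, and solve it with \autoref{lem:vc-pattern}; your handling of a non-minimum cover by monotonicity of the cap matches the paper's remark at the start of \autoref{sec:fptvc}.
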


\begin{proof}
Put \(T := 8k^{2}+8k\). By \autoref{lem:precolor}, \((G,\hat{w})\) is a
yes-instance if and only if \(\hat{w}\) has a proper extension with excess at
most \(T\) at every vertex, which by \autoref{lem:red:prezo} holds if and only
if \(I_T\) has a solution. Build \(I_T\) in linear time---the offsets \(\mathsf{base}_{\hat{w}}(v)\) are computed by counting, at each vertex, the pre-weighted
edges of weight \(1\)---and solve it with \autoref{lem:vc-pattern} and \(C =
T\).
\end{proof}

\subsection{\one}
\label{sec:fptvc:one}

For the weight set \(\{1,2\}\) every edge contributes at least one unit to each
of its endpoints, so by \autoref{obs:color-counts} the color of a vertex is its
degree plus the number of incident edges of weight \(2\). The degree is fixed by
the input, and so is the offset; the edges of weight \(2\) are what we can
choose.

\begin{lemma}
\label{lem:red:one}
Let \(G\) be a graph, let \(k \geq 1\), and set \(T' := 2k^{2}\). Let \(I\) be
the instance of \bods with the graph \(G\), with \(E_{\mathrm{free}} := E(G)\),
with \(\mathsf{off}(v) := d_G(v)\) for every \(v \in V(G)\), and with \(C := T'\).
If \(G\) has a vertex cover of size at most \(k\), then \(G\) is a yes-instance
of \one if and only if \(I\) has a solution. Moreover, a solution of \(I\) can
be converted in linear time into a proper weighting \(w : E(G) \to \{1,2\}\),
and conversely.
\end{lemma}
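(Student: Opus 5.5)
I will follow the proof of \autoref{lem:red:zero} almost verbatim, replacing weight~\(1\) by weight~\(2\) as the ``larger'' weight. For a set \(F \subseteq E(G)\) let \(w_F : E(G) \to \{1,2\}\) assign \(2\) to the edges of \(F\) and \(1\) to all other edges. The map \(F \mapsto w_F\) is a bijection between subsets of \(E(G)\) and \(\{1,2\}\)-weightings of \(G\). Both it and its inverse, \(w \mapsto \{e : w(e) = 2\}\), are computable in linear time. This already gives the conversion claims, once the equivalence is established for corresponding pairs.

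The key identity comes from the second part of \autoref{obs:color-counts}: the number of weight-\(2\) edges at \(v\) under \(w_F\) is \(d_F(v)\), so
\[
  \mathsf{color}_{w_F}(v) = d_G(v) + d_F(v) = \mathsf{off}(v) + d_F(v) = \mathsf{col}_F(v)
\]
for every \(v \in V(G)\). Hence \(w_F\) is proper exactly when \(F\) satisfies the distinctness condition of \bods.

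For the forward direction, suppose \(G\) is a yes-instance. Since \(G\) has a vertex cover of size at most \(k\), its vertex cover number is some \(k_0 \leq k\). \autoref{lem:12bounded} then yields a proper \(w\) with \(\mathsf{color}_w(v) \leq d_G(v) + 2k_0^2 \leq d_G(v) + T'\) for every \(v\). Let \(F := \{e : w(e) = 2\}\), so that \(w = w_F\). By the identity, \(d_F(v) = \mathsf{color}_w(v) - d_G(v) \leq T' = C\), so the cap holds, and distinctness follows from properness of \(w\). Thus \(F\) is a solution of \(I\).

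For the converse, if \(F\) is a solution of \(I\), then \(w_F\) is proper by the identity, so \(G\) is a yes-instance; here the cap plays no role. The only point needing care is monotonicity of the bound \(2k^2\) in \(k\), which lets a non-minimum cover be used. This is the same remark as for \zero, and I expect no genuine obstacle.
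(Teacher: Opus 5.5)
Your proposal is correct and follows the paper's proof essentially line for line: the bijection \(F \mapsto w_F\), the identity \(\mathsf{color}_{w_F}(v) = d_G(v) + d_F(v) = \mathsf{col}_F(v)\) from the second part of \autoref{obs:color-counts}, and \autoref{lem:12bounded} to meet the cap in the forward direction. You pass explicitly through the vertex cover number \(k_0 \leq k\) and use \(2k_0^2 \leq 2k^2\); this is a slightly more careful version of what the paper leaves to its remark that the caps are non-decreasing in \(k\).
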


\begin{proof}
For \(F \subseteq E(G)\) let \(w_F : E(G) \to \{1,2\}\) be the weighting that
assigns \(2\) to the edges of \(F\) and \(1\) to all other edges; as before this
is a linear-time computable bijection between subsets of \(E(G)\) and
\(\{1,2\}\)-weightings of \(G\). By the second part of
\autoref{obs:color-counts}, the number of edges of weight \(2\) incident with
\(v\) under \(w_F\) is \(\mathsf{color}_{w_F}(v) - d_G(v)\); that number is
\(d_{F}(v)\), so \(\mathsf{color}_{w_F}(v) = d_G(v) + d_{F}(v) = \mathsf{col}_{F}(v)\).

Suppose that \(G\) is a yes-instance of \one. Since \(G\) has a vertex cover of
size at most \(k\), \autoref{lem:12bounded} gives a proper weighting \(w\) of
\(G\) with \(\mathsf{color}_{w}(v) \leq d_G(v) + T'\) for every \(v \in V(G)\). Let
\(F\) be the set of edges to which \(w\) assigns weight \(2\), so that \(w =
w_F\). Then \(d_{F}(v) = \mathsf{color}_{w}(v) - d_G(v) \leq T' = C\) for every
\(v\), and \(\mathsf{col}_{F}(u) \neq \mathsf{col}_{F}(v)\) for every edge \(uv \in
E(G)\) because \(w\) is proper. Hence \(F\) is a solution of \(I\). Conversely,
if \(F\) is a solution of \(I\), then the displayed identity shows that \(w_F\)
is a proper weighting of \(G\).
\end{proof}

\begin{theorem}
\label{thm:fptvc:one}
Given a graph \(G\) and a vertex cover \(S\) of \(G\) with \(|S| = k\), \one can
be solved in time
\[
  O\bigl((2k^{2}+1)^{2(k+1)} \cdot (k+1)^{2} \cdot n\bigr)
  = 2^{O(k \log k)} \cdot n,
\]
and a proper weighting produced whenever one exists.
\end{theorem}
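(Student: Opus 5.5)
The proof follows the same pattern as \autoref{thm:fptvc:zero}, with \autoref{lem:red:one} in place of \autoref{lem:red:zero}. Given \(G\) and the vertex cover \(S\) with \(|S| = k\), we build the instance \(I\) of \bods from \autoref{lem:red:one}, with \(T' := 2k^{2}\). Its ingredients are \(E_{\mathrm{free}} := E(G)\), the offset \(\mathsf{off}(v) := d_G(v)\) and the cap \(C := T'\). The offsets are the degrees, so the construction takes linear time.

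We then solve \(I\) with \autoref{lem:vc-pattern}, using the supplied cover \(S\) and \(C = 2k^{2}\). This gives the running time \(O\bigl((2k^{2}+1)^{2(k+1)} \cdot (k+1)^{2} \cdot n\bigr)\) and produces a solution \(F\) whenever one exists. The hypothesis of \autoref{lem:red:one} holds because \(S\) is a vertex cover of size \(k\). If \(k = 0\), the graph has no edges and the instance is trivially a yes-instance, so we may assume \(k \geq 1\). By \autoref{lem:red:one}, \(G\) is a yes-instance of \one if and only if \(I\) has a solution. The same lemma also states that \(F\) converts in linear time into the proper weighting \(w_F\), which assigns \(2\) to the edges of \(F\) and \(1\) to all other edges.

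It remains to put the bound into the stated form. We write \((2k^{2}+1)^{2(k+1)} = 2^{2(k+1)\log(2k^{2}+1)} = 2^{O(k\log k)}\), and absorb the factor \((k+1)^{2}\) into this exponent. The only point that needs care is the remark after \autoref{lem:vc-pattern}: the cap \(2k^{2}\) is non-decreasing in \(k\), so a cover that is not minimum does not invalidate the use of \autoref{lem:12bounded} inside \autoref{lem:red:one}. Beyond that, no step is a real obstacle; the argument consists of composing results already established.
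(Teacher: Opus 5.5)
Your proposal is correct and matches the paper's proof: build the \bods instance of \autoref{lem:red:one} with \(C = 2k^{2}\), solve it via \autoref{lem:vc-pattern}, and convert the solution back into a weighting. Your explicit treatment of the case \(k = 0\), which is needed because \autoref{lem:red:one} assumes \(k \geq 1\), is a detail the paper leaves implicit.
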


\begin{proof}
As for \autoref{thm:fptvc:zero}, using \autoref{lem:red:one} in place of
\autoref{lem:red:zero} and \(C = 2k^{2}\).
\end{proof}

\subsection{\preot}
\label{sec:fptvc:preot}

The last case combines the two preceding ones. A pre-weighted edge contributes
one unit to each endpoint just as any edge does, and a further unit if it is
pre-weighted \(2\). The offset is the total contribution.

\begin{lemma}
\label{lem:red:preot}
Let \((G, \hat{w})\) be an instance of \preot, where \(\hat{w} : E' \to
\{1,2\}\) with \(E' \subseteq E(G)\), let \(k \geq 1\), and set \(T' :=
2k^{2}\). Let \(I\) be the instance of \bods with the graph \(G\), with
\(E_{\mathrm{free}} := E(G) \setminus E'\), with \(\mathsf{off}(v) := d_G(v) +
e_v\) for every \(v \in V(G)\), where \(e_v\) is the number of edges of \(E'\)
incident with \(v\) that \(\hat{w}\) weights \(2\), and with \(C := T'\). If
\(G\) has a vertex cover of size at most \(k\), then \((G, \hat{w})\) is a
yes-instance of \preot if and only if \(I\) has a solution. Moreover, a solution
of \(I\) can be converted in linear time into a proper extension of \(\hat{w}\),
and conversely.
\end{lemma}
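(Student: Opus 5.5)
The plan follows the pattern of \autoref{lem:red:prezo} and \autoref{lem:red:one}: set up a bijection between candidate sets \(F\) and extensions of \(\hat{w}\), check that the induced colors agree with \(\mathsf{col}_F\), and then obtain the cap from \autoref{lem:pre12main}.

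For \(F \subseteq E_{\mathrm{free}} = E(G)\setminus E'\), let \(w_F : E(G) \to \{1,2\}\) agree with \(\hat{w}\) on \(E'\), assign \(2\) to the edges of \(F\), and assign \(1\) to every other free edge. Then \(w_F\) extends \(\hat{w}\). Conversely, every extension \(w\) of \(\hat{w}\) equals \(w_F\) for \(F\) the set of free edges that \(w\) weights \(2\). Both directions are computable in linear time, and computing the offsets \(d_G(v)+e_v\) takes one pass over the edges. Under \(w_F\) the number of free edges of weight \(2\) at \(v\) is \(t_{w_F}(v) = d_F(v)\). So by \eqref{eq:pre12color},
\[
  \mathsf{color}_{w_F}(v) = d_G(v) + e_v + d_F(v) = \mathsf{off}(v) + d_F(v) = \mathsf{col}_F(v)
\]
for every \(v \in V(G)\). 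Hence \(F\) meets the distinctness condition of \bods exactly when \(w_F\) is proper, and \(F\) meets the cap condition exactly when \(t_{w_F}(v) \leq C\) for every \(v\).

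For the forward direction, suppose \((G,\hat{w})\) is a yes-instance and \(G\) has a vertex cover of size at most \(k\). Then the vertex cover number is at most \(k\), and \autoref{lem:pre12main} yields a proper extension \(w\) with \(t_w(v) \leq 2k'^2 \leq 2k^2 = C\) for every \(v\), where \(k'\) is the vertex cover number. Here I use that \(2k^2\) is non-decreasing in \(k\), as remarked at the start of \autoref{sec:fptvc}. Writing \(w = w_F\), the set \(F\) is a solution of \(I\). For the converse, a solution \(F\) of \(I\) gives the proper extension \(w_F\) by the displayed identity, and this direction needs no cap argument.

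There is essentially no hard step. The only point needing care is that pre-weighted edges of weight \(2\) enter the color through \(e_v\), which sits in the offset, and not through \(F\). This is what makes \(d_F\) count exactly \(t_w\), and it is why the bound of \autoref{lem:pre12main} on \(t_w\), rather than on the color, is the right cap. I would also note that \(k \geq 1\) is harmless, since the graph with an empty vertex cover has no edges.
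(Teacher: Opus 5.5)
Your proof is correct and follows the paper's argument. Both use the same bijection \(F \mapsto w_F\), the identity \(\mathsf{color}_{w_F}(v) = d_G(v) + e_v + d_F(v) = \mathsf{col}_F(v)\) (which the paper re-derives from \autoref{obs:color-counts} while you take it from \eqref{eq:pre12color}), and \autoref{lem:pre12main} for the cap in the forward direction. Your explicit handling of the mismatch between ``vertex cover of size at most \(k\)'' and ``vertex cover number \(k\)'', via monotonicity of \(2k^{2}\), is a point the paper's proof leaves to the remark at the start of \autoref{sec:fptvc}.
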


\begin{proof}
For \(F \subseteq E_{\mathrm{free}}\) let \(w_F : E(G) \to \{1,2\}\) be the
weighting that agrees with \(\hat{w}\) on \(E'\), assigns \(2\) to every edge of
\(F\), and assigns \(1\) to every other free edge. Then \(w_F\) extends
\(\hat{w}\); every extension of \(\hat{w}\) arises in this way, for \(F\) the
set of free edges that it weights \(2\); and both directions are computable in
linear time.

Fix \(F \subseteq E_{\mathrm{free}}\) and \(v \in V(G)\). By the second part of
\autoref{obs:color-counts} the number of edges of weight \(2\) incident with
\(v\) under \(w_F\) is \(\mathsf{color}_{w_F}(v) - d_G(v)\). Those edges are the
pre-weighted ones that \(\hat{w}\) weights \(2\), of which there are \(e_v\),
together with the free ones in \(F\), of which there are \(d_{F}(v)\); the two
groups are disjoint because \(F \cap E' = \emptyset\). Hence
\[
  \mathsf{color}_{w_F}(v) = d_G(v) + e_v + d_{F}(v) = \mathsf{col}_{F}(v).
\]

Suppose that \((G, \hat{w})\) is a yes-instance. By \autoref{lem:pre12main}
there is a proper extension \(w\) of \(\hat{w}\) with \(\mathsf{color}_{w}(v) \leq
d_G(v) + e_v + T'\) for every \(v \in V(G)\). Let \(F\) be the set of free edges
to which \(w\) assigns weight \(2\), so that \(w = w_F\). The displayed identity
gives \(d_{F}(v) = \mathsf{color}_{w}(v) - d_G(v) - e_v \leq T' = C\) for every
\(v\), and \(\mathsf{col}_{F}(u) \neq \mathsf{col}_{F}(v)\) for every edge \(uv \in
E(G)\) because \(w\) is proper. Hence \(F\) is a solution of \(I\). Conversely,
if \(F\) is a solution of \(I\), then the same identity shows that \(w_F\) is a
proper weighting of \(G\), and it extends \(\hat{w}\).
\end{proof}

\begin{theorem}
\label{thm:fptvc:preot}
Given an instance \((G, \hat{w})\) of \preot and a vertex cover \(S\) of \(G\)
with \(|S| = k\), \preot can be solved in time
\[
  O\bigl((2k^{2}+1)^{2(k+1)} \cdot (k+1)^{2} \cdot n\bigr)
  = 2^{O(k \log k)} \cdot n,
\]
and a proper extension of \(\hat{w}\) produced whenever one exists.
\end{theorem}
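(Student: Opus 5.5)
The proof follows the pattern of \autoref{thm:fptvc:zero} and \autoref{thm:fptvc:prezo}. We set \(T' := 2k^{2}\). From \((G, \hat{w})\) we build in linear time the instance \(I\) of \bods described in \autoref{lem:red:preot}: take \(E_{\mathrm{free}} := E(G) \setminus E'\) and cap \(C := T'\), and compute the offset \(\mathsf{off}(v) := d_G(v) + e_v\) by scanning the edges incident with each vertex once. We count both the degree and the number of pre-weighted edges of weight \(2\), so the total work is \(O(n + m)\). Since \(S\) is a vertex cover, every edge has an endpoint in \(S\), so \(m \leq k n\), and this cost is absorbed in the final bound.

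We then solve \(I\) with \autoref{lem:vc-pattern}, supplying \(S\) and \(C = 2k^{2}\). This takes time \(O\bigl((2k^{2}+1)^{2(k+1)} \cdot (k+1)^{2} \cdot n\bigr)\), and it produces a solution \(F\) whenever one exists. Correctness comes from \autoref{lem:red:preot}, whose hypothesis holds because \(S\) is a vertex cover of size \(k\). That lemma gives that \((G, \hat{w})\) is a yes-instance if and only if \(I\) has a solution. It also gives that a solution \(F\) converts in linear time into the proper extension \(w_F\) of \(\hat{w}\): assign weight \(2\) to the edges of \(F\), weight \(1\) to the other free edges, and \(\hat{w}\) on \(E'\).

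We should also dispose of the degenerate case \(k = 0\), since \autoref{lem:red:preot} assumes \(k \geq 1\). Then \(G\) has no edges, the empty weighting is trivially proper, and we answer \yes. Finally, \((2k^{2}+1)^{2(k+1)} = 2^{2(k+1)\log(2k^{2}+1)} = 2^{O(k\log k)}\), and the factor \((k+1)^{2}\) is absorbed, which gives the stated form of the bound.

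No step here is a real obstacle, because the substantive content lives in \autoref{lem:pre12main}, which is invoked inside \autoref{lem:red:preot}. The only points needing care are the linear-time computation of the offsets and the \(k = 0\) edge case.
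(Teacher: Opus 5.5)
Your proposal is correct and follows essentially the same route as the paper, which builds the \bods instance of \autoref{lem:red:preot} with cap \(C = 2k^{2}\) and solves it via \autoref{lem:vc-pattern}, with correctness and the conversion to a proper extension \(w_F\) coming from that lemma. Your explicit treatment of the \(k = 0\) case and of the \(O(n+m) = O(kn)\) offset computation are minor details that the paper leaves implicit.
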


\begin{proof}
As for \autoref{thm:fptvc:zero}, using \autoref{lem:red:preot} in place of
\autoref{lem:red:zero} and \(C = 2k^{2}\). The offsets are computed in linear
time from the degrees of \(G\) and the pre-weighted edges.
\end{proof}

\section{Conclusion}
\label{sec:conclusion}

We gave polynomial kernels for \zero and \one parameterized by the vertex cover
number, and we showed both to be \WoneHard parameterized by treedepth; these
answer two questions left open in our earlier work. We then studied the
pre-weighted variants \prezo and \preot, showing them \FPT for the vertex cover
number and \WoneHard for the feedback vertex set number and for treedepth. All
our \FPT and \XP algorithms except \autoref{thm:prezo-general} follow from one
dynamic programming theorem for \bods, which is \FPT parameterized by the width
of a given decomposition together with a cap on the number of free edges at a
vertex that may receive the larger weight. %
leave

\paragraph*{Open problems.} The most immediate open question---in our view---is
whether \prezo or \preot admit polynomial kernels parameterized by the vertex
cover number. Our marking procedure breaks down when there are pre-weighted
edges, so other ideas are needed. A second question to address is whether we can
improve on the running time of \(2^{O(5^{k} \log (k+2))} \cdot n^{O(1)}\) for
arbitrary pre-edge-weightings parameterized by the vertex cover number
(\autoref{thm:prezo-general}), to bring it closer to the other running times in
this paper. A related question is whether the other running times of \(2^{O(k
\log k)} \cdot n\) can be shown to be optimal. Finally, we ask whether \one is
\FPT or \WoneHard parameterized by the feedback vertex set number.

\bibliographystyle{splncs04}
\bibliography{ref}

\end{document}